\renewcommand {\a}{ \alpha }
\renewcommand{\b}{\beta}
\newcommand{\g}{\gamma}
\newcommand{\G}{\Gamma}
\newcommand{\U}{\Upsilon}
\renewcommand{\k}{\kappa}
\renewcommand{\d}{\delta}
\renewcommand{\l}{\lambda}
\newcommand{\z}{\zeta}
\renewcommand{\t}{\theta}
\newcommand{\p}{\partial}
\newcommand{\om}{\omega}
\newcommand{\Om}{\Omega}
\newcommand{\oq}{\ {\raise 7pt\hbox{${\scriptstyle\circ}$}}
	\kern -7pt{
		\hbox{$Q$}}}
\newcommand{\R}{ \mathbb R}
\newcommand {\BS}{\mathbf S}
\newcommand {\BP}{\mathbf P}
\newcommand {\BQ}{\mathbf Q}
\newcommand {\bx}{\mathbf x}
\newcommand {\bk}{\mathbf k}
\newcommand {\bm}{\mathbf m}
\newcommand {\bl}{\mathbf l}
\newcommand {\bw}{\mathbf w}
\newcommand {\bz}{\mathbf z}
\newcommand {\bs}{\mathbf s}
\newcommand {\bj}{\mathbf j}
\newcommand{\SR}{{\sf{R}}}
\newcommand{\SD}{{\sf D}}
\newcommand{\SQ}{{\sf Q}}
\newcommand{\SP}{{\sf P}}
\newcommand{\SfS}{{\sf S}}
\newcommand{\CL}{\mathcal L}
\newcommand{\CH}{\mathcal H}
\newcommand{\CD}{\mathcal D}
\newcommand{\plainC}[1]{\textup{{\textsf{C}}}^{#1}}
\newcommand{\plainH}[1]{\textup{{\textsf{H}}}^{#1}}
\newcommand{\plainL}[1]{\textup{{\textsf{L}}}^{#1}}
\DeclareMathOperator{\card}{{card}}
\DeclareMathOperator {\re} {{Re}}
\DeclareMathOperator{\supp}{{supp}}
\newtheorem{thm}{Theorem}[section]
\newtheorem{cor}[thm]{Corollary}
\newtheorem{lem}[thm]{Lemma}
\newtheorem{prop}[thm]{Proposition}
\theoremstyle{definition}
\newtheorem{defn}[thm]{Definition}
\newtheorem{rem}[thm]{Remark}
\numberwithin{equation}{section}
\newcommand{\bee}{\begin{equation}}
	\newcommand{\ene}{\end{equation}}
\newcommand{\bees}{\begin{equation*}}
	\newcommand{\enes}{\end{equation*}}
\newcommand{\bes}{\begin{split}}
	\newcommand{\ens}{\end{split}}
\newcommand{\bet}{\begin{thm}}
	\newcommand{\ent}{\end{thm}}
\newcommand{\bel}{\begin{lem}}
	\newcommand{\enl}{\end{lem}}
\newcommand{\bec}{\begin{cor}}
	\newcommand{\enc}{\end{cor}}
\newcommand{\bep}{\begin{proof}}
	\newcommand{\enp}{\end{proof}}
\newcommand{\ber}{\begin{rem}}
	\newcommand{\enr}{\end{rem}}
\begin{document}
	\hoffset -4pc

\title
[One-particle density matrix]
{Analyticity of the one-particle density matrix}
\author{Peter Hearnshaw}
\author{Alexander V. Sobolev}
\address{Department of Mathematics\\ University College London\\
	Gower Street\\ London\\ WC1E 6BT UK}
\email{peter.hearnshaw.18@ucl.ac.uk}
\email{a.sobolev@ucl.ac.uk}

\begin{abstract}	 
It is proved that the one-particle density matrix $\g(x, y)$ for multi-particle systems 
is real analytic away from the nuclei and from the diagonal $x = y$. 
\end{abstract}

\keywords{Multi-particle system, Schr\"odinger equation, one-particle density matrix}
\subjclass[2010]{Primary 35B65; Secondary 35J10,  81V55}

\maketitle

\section{Introduction} 

The objective of the paper is to study analytic properties of the one-particle density 
matrix for the molecule, consisting of $N$ electrons and $N_0$ nuclei 
described by the following Schr\"odinger operator:
\begin{align}\label{eq:ham}
\sum_{k=1}^N \bigg(-\Delta_k - \sum_{l=1}^{N_0}
\frac{Z_l}{|x_k - R_l|}
\bigg) 
 + \sum_{1\le j< k\le N} \frac{1}{|x_j-x_k|} + \sum_{1\le k < l\le N_0} \frac{Z_l Z_k}{|R_l-R_k|},
\end{align}
 where $R_l\in \R^3$ and $Z_l>0$, $l = 1, 2, \dots, N_0$, are the 
 positions and the charges, respectively, of $N_0$ nuclei, and    
$x_j\in\R^3$, $j = 1, 2, \dots, N$ are positions of $N$ electrons. The notation $\Delta_k$ is used for 
the Laplacian w.r.t. the variable $x_k$. 
The positions of the nuclei are assumed to be fixed, and as a result the 
very last term in \eqref{eq:ham} is constant. Thus 
in what follows we drop this term and instead of \eqref{eq:ham} we study the operator
\begin{align}\label{eq:oper}
H = H^{(0)} + V,\quad H^{(0)} =  -\Delta = -\sum_{k=1}^N\Delta_k
%
\end{align}  
with
\begin{align}\label{eq:coulomb}
V(\bx) = V^{\rm C}(\bx) = - \sum_{k=1}^N\sum_{l=1}^{N_0}
\frac{Z_l}{|x_k - R_l|}
 + \sum_{1\le j< k\le N} \frac{1}{|x_j-x_k|}.
\end{align}
This operator acts on the Hilbert space $\plainL2(\R^{3N})$ and it is self-adjoint on the domain 
$D(H) = D(H^{(0)}) = \plainH2(\R^{3N})$, since $V$ is infinitesimally 
$H^{(0)}$-bounded, see e.g. \cite[Theorem X.16]{ReedSimon2}. 

Let $\psi = \psi(\bx)$ 
%
%
be an eigenfunction of the operator $H$ with an eigenvalue $E\in\R$, i.e.
\begin{align}\label{eq:eigen}
(H-E)\psi = 0.
\end{align}  
For each $j=1, \dots, N$, we represent 
\begin{align*}
\bx = (x_j, \hat\bx_j), \quad \textup{where}\ 
\hat\bx_j = (x_1, \dots, x_{j-1}, x_{j+1},\dots, x_N),
\end{align*}
with obvious modifications if $j=1$ or $j=N$. 
The one-particle density matrix is defined as the function 
\begin{align}\label{eq:dem}
\tilde \g(x, y) = \sum_{j=1}^N\int\limits_{\R^{3N-3}} 
\psi(x, \hat\bx_j) \overline{\psi(y, \hat\bx_j)}\  
d\hat\bx_j,\quad (x,y)\in\R^3\times\R^3,
\end{align} 
see \cite{FHOS2004}. 
This function is one of the key objects in 
the  multi-particle quantum mechanics, see e.g. 
\cite{RDM2000}, \cite{LiebSei2010}, \cite{LLS2019}  for details and futher references. 
If one assumes that all $N$ particles are fermions 
(resp. bosons), i.e. that the function $\psi$ is 
antisymmetric (resp. symmetric) under the permutations $x_j\leftrightarrow x_k$, 
then the definition \eqref{eq:dem} simplifies:
\begin{align*}
\tilde \g(x, y) = N \int_{\R^{3N-3}} \psi(x, \hat\bx) \overline{\psi(x, \hat\bx)} d\hat\bx, 
\end{align*}
where we have denoted $\hat\bx = \hat\bx_1$, so $\bx = (x_1, \hat\bx)$. 
Our objective is to study the regularity properties of 
the function $\tilde\g(x, y)$ in the general form \eqref{eq:dem} without any symmetry assumptions. 

Regularity properties of solutions of elliptic equations is a classical and widely studied subject. 
For instance, 
it immediately follows from the general theory, see e.g. \cite{Hor1976}, that 
any local solution of \eqref{eq:eigen} is 
real analytic away from the singularities of the potential 
\eqref{eq:coulomb}. In his famous paper \cite{Kato1957} T. Kato showed that a local solution 
is locally Lipschitz with ``cusps" at the points of particle coalescence. 
Further regularity results include \cite{FHOS2005}, \cite{FHOS2009}, \cite{FS2018}. 
We cite the most recent paper \cite{FS2018} for further references. 

As far as the one-particle density matrix \eqref{eq:den} is concerned, 
in the analytic literature a special attention has been paid  to \textit{the one-particle density} 
$\tilde\rho(x) = \tilde\g(x, x)$. 
It was shown in \cite{FHOS2002}, that in spite of the nonsmoothness of $\psi$, 
the density $\tilde\rho(x)$ remains smooth as long as $x\not =R_l, l = 1, 2, \dots, N_0$, 
because of the averaging in $\hat\bx$.   
Moreover, the same authors prove in \cite{FHOS2004} that $\tilde\rho$ is in fact 
real analytic away from the nuclei, see also \cite{Jecko2010} for an alternative proof.  
Paper \cite{FHOS2004} also claims that the proofs therein imply the analyticity 
of $\tilde\g(x, y)$ for all $x, y$ away from the nuclei. However, 
the methods of \cite{FHOS2004} do not suffice to substantiate this claim. 
The objective of the current paper is to 
bridge this gap and 
prove the real analyticity for the one-particle density matrix 
$\tilde\g(x, y)$ for all $x\not = y$ away from the nuclei. We emphasize that the condition 
$x\not = y$ is not just an annoying technical restriction -- 
the function $\tilde\g(x, y)$ genuinely cannot be 
infinitely smooth on the diagonal. 
We have no analytic proof 
of this fact, but we can present an indirect justification. As shown in \cite{Sob2021}, 
the eigenvalues $\l_k(\G)$, $k = 1, 2, \dots$, 
of the non-negative compact operator 
$\G:\plainL2(\R^3)\to\plainL2(\R^3)$ 
with kernel $\tilde\g(x, y)$ decay at the rate of $k^{-8/3}$ as $ k\to\infty$. In general, 
for integral operators 
the rate of decay is dictated by the smoothness of the kernel: 
the smoother the kernel is, the faster 
the eigenvalues decrease, see bibliography in \cite{Sob2021}. If the function $\tilde\g(x, y)$ were 
infinitely differentiable for all $x$ and $y$ including the diagonal $x = y$ (but excluding the  nuclei), then the eigenvalues of $\G$  
would decay faster than any negative power of their number, which is not the case. 
To justify the non-smoothness for $x = y$ we also mention a heuristic argument presented in \cite{Cioslowski2020} that 
suggests that $\tilde\g(x, y)-\tilde\g(x, x)$ 
should behave as  $|x-y|^5$ for $x$ close to $y$. In fact, order $5$ of this homogeneous singularity 
is consistent with the $k^{-8/3}$-decay of the eigenvalues, see again 
\cite{Sob2021} and bibliography therein. 

One should also say that the real analyticity of the density 
$\tilde\rho(x) = \tilde\g(x, x)$, $x\not = 0$, established in \cite{FHOS2004}, 
means that the density matrix $\tilde\g(x, y)$ is analytic \textit{along} 
the diagonal $x = y$ as a function of one variable $x$, 
which does not contradict the non-smoothness of $\tilde\g(x, y)$ \textit{on} the diagonal as a function 
of the two variables $x$ and $y$. 

Before we state the main result note that regularity 
of each of the terms in \eqref{eq:dem} 
can be studied individually. Furthermore, it suffices to establish the real 
analyticity of the function 
\begin{align}\label{eq:den}
\g(x, y) = \int\limits_{\R^{3N-3}} \psi(x, \hat\bx) \overline{\psi(y, \hat\bx)}\  
d\hat\bx,\quad (x,y)\in\R^3\times\R^3.
\end{align} 
The remaining terms in the sum \eqref{eq:dem} 
are handled by relabeling the variables. Thus from now on we  
use the terms 
\textit{one-particle density matrix} and  
\textit{one-particle density} for the functions \eqref{eq:den} and $\rho(x) = \g(x, x)$ respectively.

The next theorem constitutes our main result. 

\begin{thm}\label{thm:main}
Let the function $\g(x, y)$, $(x, y)\in \R^3\times\R^3$, be defined by \eqref{eq:den}. 
Then $\g(x, y)$ is real analytic as a function of variables $x$ and $y$ on the set 
\begin{align}\label{eq:D0}
\CD_0 = \{(x, y)\in \R^3\times\R^3: x\not = R_l, 
y \not = R_l, l = 1, 2, \dots N_0,\  {\rm and} \ x\not=y\}.
\end{align}
\end{thm}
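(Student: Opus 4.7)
The plan is to establish, for any compact $K \subset \CD_0$, bounds of factorial type $|\partial_x^\alpha \partial_y^\beta \gamma(x,y)| \le C^{|\alpha|+|\beta|+1}\alpha!\,\beta!$ for $(x,y)\in K$, which is equivalent to joint real analyticity of $\gamma$ on $\CD_0$. By the elliptic regularity cited in \cite{Hor1976}, $\psi$ is real analytic outside the coalescence set (configurations where $x_k = R_l$ or $x_j = x_k$ for some indices), and satisfies local analytic bounds there. The obstruction is that the $\hat\bx$-integration in \eqref{eq:den} necessarily crosses this set along the loci where some $\hat x_k$ equals $x$, $y$, a nucleus $R_l$, or another $\hat x_j$, and the analyticity radius of $\psi$ degenerates as $\hat\bx$ approaches them.

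The first step is to extract the singular behaviour of $\psi$ via a Jastrow--Kato type factorization $\psi = e^{\CF}\phi$, where $\CF$ is an explicit function built from one- and two-particle cusp terms (e.g.\ contributions proportional to $|x_k - R_l|$ and $|x_j - x_k|$, suitably truncated so that $\CF$ is globally defined). After substitution, $\phi$ solves an elliptic equation whose coefficients have far milder singularities than $V$, and by the regularity theory of \cite{FHOS2005,FHOS2009,FS2018} it enjoys uniform $C^{1,\alpha}$ regularity together with quantitative analytic estimates off the coalescence set. Plugging this into \eqref{eq:den} and differentiating formally yields
\begin{equation*}
\partial_x^\alpha \partial_y^\beta \gamma(x,y) = \int_{\R^{3N-3}} \partial_x^\alpha\bigl[e^{\CF(x,\hat\bx)}\phi(x,\hat\bx)\bigr]\;\overline{\partial_y^\beta\bigl[e^{\CF(y,\hat\bx)}\phi(y,\hat\bx)\bigr]}\,d\hat\bx.
\end{equation*}

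The derivatives of the explicit factor $e^{\CF(x,\hat\bx)}$ are handled via a Fa\`a di Bruno expansion, producing polynomials in the Coulomb kernels $|x-\hat x_k|^{-1}$ and $|x - R_l|^{-1}$ times $e^{\CF}$ itself, with controlled factorial dependence on $\alpha$. Combining these with the iterated elliptic analytic bounds for $\phi$ in $x$ produces pointwise estimates on the integrand of the form $C^{|\alpha|}\alpha!\,h(x,\hat\bx)$, where $h$ has at most integrable singularities at the three-dimensional loci $\hat x_k \in \{x, R_l\}$; the analogous bounds hold in the $y$-variable.

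The hypothesis $x \ne y$ enters decisively at the integration step: on the region where some $\hat x_k$ is close to $x$, one has $|y - \hat x_k| \ge \tfrac12 |x-y| > 0$ uniformly on $K$, so the $y$-side of the integrand contributes only a uniformly bounded factor with $\beta$-dependence $C^{|\beta|}\beta!$, while the $x$-side singularities in $|x - \hat x_k|^{-1}$ are integrable in the three-dimensional variable $\hat x_k$; the roles switch when $\hat x_k$ is near $y$. Partitioning $\R^{3N-3}$ into finitely many sectors indexed by which $\hat x_k$ is close to which of $x$, $y$, the nuclei, or other $\hat x_j$'s, and summing, one arrives at the desired factorial bound with constants uniform on $K$. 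The main technical obstacle lies precisely in this final synthesis: controlling the iterated elliptic analytic bounds for $\phi$ together with the Fa\`a di Bruno expansion for $e^{\CF}$ and the sector bookkeeping sharply enough that the integration does not destroy the $\alpha!\,\beta!$ scaling uniformly away from the diagonal and the nuclei.
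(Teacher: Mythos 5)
There is a genuine gap, and it is located exactly where you defer the difficulty: the claim that after extracting the Jastrow--Kato factor the $x$-derivatives of the integrand admit bounds $C^{|\alpha|}\alpha!\,h(x,\hat\bx)$ with a fixed integrable majorant $h$. Higher derivatives of the explicit factor already destroy this: $\partial_x^\alpha$ applied to a cusp term such as $|x-\hat x_k|$ produces singularities of size $|x-\hat x_k|^{1-|\alpha|}$ (not just first powers of the Coulomb kernel), which fail to be integrable in the three\hyphen dimensional variable $\hat x_k$ as soon as $|\alpha|\ge 4$; the singularity order grows with $|\alpha|$, so no $\alpha$-independent integrable $h$ exists. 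The same problem afflicts the factor $\phi$ in $\psi=e^{\CF}\phi$: the results you cite give $\phi\in C^{1,\alpha}$, not smoothness or analyticity across the coalescence set, so the ``iterated elliptic analytic bounds for $\phi$'' degenerate as $\hat x_k\to x$ or $\hat x_j\to\hat x_k$ exactly as they do for $\psi$, and differentiation under the integral sign cannot be justified at high order. In addition, your sector decomposition depends on $x$ and $y$, so $x$- and $y$-derivatives also hit the cut-offs, and this bookkeeping (which in fact generates new clusters at each differentiation) is not addressed. In short, the ``final synthesis'' you flag as the main obstacle is not a technicality; it is the theorem.

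The paper avoids ordinary derivatives of $\psi$ near coalescences altogether. After localizing with admissible cut-offs that group the integration variables lying near $x$ (respectively near $y$) into a cluster, one changes variables $x_j= w_j+x$ for the cluster indices, so that $\partial_x$ acting on the integral becomes a cluster (directional) derivative $\SD_\SP$ of $\psi$; with respect to such derivatives $\psi$ stays smooth across intra-cluster coalescences because the inter-electron potentials are annihilated by them. Factorial $\plainL2$-bounds (not pointwise bounds) for $\SD_\BP^{\bm}\psi$ on cluster-separating sets are proved by an elliptic-regularity induction, the hypothesis $x\ne y$ enters through the incompatibility of overlapping $x$- and $y$-clusters (Lemma \ref{lem:empty}), the derivatives falling on the cut-offs are controlled by showing they produce new extended cut-offs with strictly larger clusters, and analyticity is finally deduced from the $\plainL2$ factorial bounds via Lemma \ref{lem:2toinfty}. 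If you want to salvage your route you would need quantitative analytic-type estimates for $\psi$ (or $\phi$) in directions transversal to each coalescence plane with explicit control of the blow-up rate, plus a justification of arbitrarily many differentiations under the integral; that is essentially reconstructing the cluster-derivative machinery.
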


As mentioned above, the eigenfunction $\psi(\bx)$ 
loses smoothness at the particle coalescence points. 
Therefore, a direct differentiation of 
\eqref{eq:den} w.r.t. $x$ and $y$ under the integral will not produce the required analyticity. 
In order to circumvent this difficulty we use the property 
that the eigenfunction preserves smoothness even at the coalescence points if 
one replaces the standard derivatives by cleverly chosen directional ones. 
For example, the function $\psi$ is infinitely smooth in the variable $x_1+x_2+\dots x_N$,  
as long as  
$x_j\not = R_l$ for each $j = 1, 2, \dots, N$, $l = 1, 2, \dots, N_0$. In other words, it is infinitely differentiable 
with respect to the directional derivative 
\begin{align*}
\SD = \sum_{j=1}^N\nabla_{x_j}. 
\end{align*}
Such regularity follows from the fact that the potential 
\eqref{eq:coulomb} is smooth w.r.t. $\SD$ on the same domain. In particular, 
\begin{align*}
\SD\, \frac{1}{|x_j-x_k|} = 0,\quad \textup{for all}\quad j\not = k.
\end{align*} 
This approach was successfully used in \cite{FHOS2004} (or even in the 
earlier paper \cite{FHOS2002}) 
in the study of the electron density $\rho(x)=\g(x, x)$. 
To illustrate the use of the directional derivatives
in the study of $\rho(x)$, below we give a simplified example. 

Assume that $N_0=1$ and that $R_1 = 0$.  For further simplification 
instead of the function $\g(x, x)$ we consider only a part of it. Precisely, 
let $\z\in\plainC\infty_0(\R^3)$ be a real-valued function such that $\z(t) = 0$ for $|t|>\varepsilon/2$ 
with some $\varepsilon>0$.  Let us show that the integral 
\begin{align}\label{eq:F}
F(x) = \int_{\R^{3N-3}} |\psi(x, \hat\bx)|^2 \prod_{j=2}^N \z(x-x_j)\, d\hat\bx
\end{align}
is $\plainC\infty$ for all $x\in\R^3$ such that $|x|>\varepsilon$. 
The presence of the cut-off functions in \eqref{eq:F} means that all the particles are within a $\varepsilon/2$ distance 
from the first particle. Thus on the domain of integration all variables are separated from $0$:
$|x_j|\ge |x| - |x-x_j|>\varepsilon/2$, $j = 2, 3, \dots, N$. 
Rewrite $F$ making the change of variables $x_j = w_j + x$, 
$j = 2, 3, \dots, N$, under the integral:
\begin{align*}
F(x) = \int_{\R^{3N-3}} |\psi(x, w_2+x, \dots, w_N+x)|^2
\prod_{j=2}^N \z(-w_j)\, d\hat\bw,\ \ \hat\bw = (w_2, w_3, \dots, w_N).
\end{align*}
Differentiating the integral w.r.t. $x$ we get:
\begin{align*}
\nabla_x F(x) = &\ 2\re \int_{\R^{3N-3}} (\SD \psi)(x, w_2+x, \dots, w_N+x)\\
&\ \qquad\qquad \times \overline{\psi(x, w_2+x, \dots, w_N+x)}
\prod_{j=2}^N \z(-w_j)\, d\hat\bw\\
= &\ 2\re \int_{\R^{3N-3}} (\SD \psi)(x, \hat\bx)\overline{\psi(x, \hat\bx)} 
\prod_{j=2}^N \z(x-x_j)\, d\hat\bx.
\end{align*}
Now it is clear that by virtue of smoothness of $\psi$ w.r.t. the derivative $\SD$, 
this relation can be differentiated arbitrarily many times, thereby proving that $F\in\plainC\infty$ 
for all $x: |x| >\varepsilon$. 

The complete proof of real analyticity of $\rho(x)$ in \cite{FHOS2004} 
is more involved. In particular, it requires the 
study of various cut-off functions that keep some of the particles 
``close" to each other, but separate from the rest 
of them (we call this group of particles \textit{the cluster associated 
with the given cut-off}). 
Adaptation of the above argument to such cases leads to the introduction of 
the \textit{cluster} derivatives 
(i.e. directional derivatives involving only the particles in a cluster), and  
it is far from straightforward.  
 
As in \cite{FHOS2004}, 
in the current paper our 
starting point is again a careful analysis of the smoothness properties of 
the eigenfunction $\psi$ with respect to the cluster derivatives. 
However we find the argument in \cite{FHOS2004} somewhat condensed and sketchy in places. 
Thus we provide our own proofs that contain more detail and at the same time, 
as we believe, are sometimes simpler than in \cite{FHOS2004}. 

To obtain bounds for the derivatives of $\g(x, y)$ we use again cluster derivatives, but the method 
of \cite{FHOS2004} is ineffective if applied directly. 
It has to be reworked taking into account the presence of two variables (i.e. $x, y$) instead of one. 
At the heart of our approach is the concept of an \textit{extended cut-off function} that depends 
on the variables $(x, y)\in\R^3\times\R^3$ and $\hat\bx\in \R^{3N-3}$. Any such function 
$\Phi(x, y, \hat\bx)$ has two clusters associated with it,  
whose properties are linked to each other (see Subsect. \ref{subsect:extended}).  
This enables us to apply the cluster derivatives method to integrals of the form 
\begin{align}\label{eq:pau}
\int\limits_{\R^{3N-3}} \psi(x, \hat\bx) \overline{\psi(y, \hat\bx)}
\Phi(x, y, \hat\bx)
\,  d\hat\bx,\quad (x,y)\in\R^3\times\R^3.
\end{align}
At the last stage we construct a partition of unity on $\R^{3N+3}$ which consists of 
extended cut-offs, and split $\g(x, y)$ in the sum of terms of the form \eqref{eq:pau}. Estimating each of them individually, we get the desired real analyticity.

The paper is organized as follows. 
In Sect. \ref{sect:main} we state Theorem 
\ref{thm:main1}, involving more general interactions 
between particles, that implies Theorem \ref{thm:main} as a special case. This step allows to include 
other physically meaningful potentials, such as, for example, the Yukawa potential.  
An important conclusion of this Section is that the claimed analyticity 
of the function $\g(x, y)$ follows from appropriate $\plainL2$-bounds on the derivatives of $\g(x, y)$, 
enunciated in Theorem \ref{thm:deriv2}. The rest of the paper is devoted to the proof of 
Theorem \ref{thm:deriv2}.   

Sect. \ref{sect:reg} is concerned with the study of the directional derivatives of the 
eigenfunction $\psi$. The main objective is 
to establish suitable $\plainL2$-estimates for higher order derivatives of $\psi$ on the open sets, separating 
different clusters of variables. Here our argument follows that 
of \cite{FHOS2004} with some simplifications. 
In Sect. \ref{sect:cutoff} we study in detail properties of smooth 
cut-off functions including 
the extended cut-offs $\Phi = \Phi(x, y, \hat\bx)$, $x, y\in\R^3, \hat\bx\in\R^{3N-3}$, 
and clusters associated with them. In Sect. \ref{sect:estimates} we put 
together the results of Sect. \ref{sect:reg} and \ref{sect:cutoff} to estimate the 
derivatives of integrals of the form \eqref{eq:pau} 
with extended cut-offs $\Phi$. These estimates 
are used to prove Theorem \ref{thm:deriv2} with the help 
of a partition of unity that consists of extended cut-offs. 
This completes the proof of 
  Theorem \ref{thm:main1}, and hence that of the main result, Theorem \ref{thm:main}.
The Appendix contains some elementary combinatorial formulas that are used throughout the proof.

We conclude the introduction with some general notational conventions.

\textit{Constants.}
By $C$ or $c$ with or without indices, we denote 
various positive constants whose exact value is of no importance.

\textit{Coordinates.} 
As mentioned earlier, we use the following standard notation for the coordinates: 
$\bx = (x_1, x_2, \dots, x_N)$,\ where $x_j\in \R^3$, $j = 1, 2, \dots, N$. 
Very often it is convenient to represent $\bx$ in the form 
$\bx = (x_1, \hat\bx)$ with  
$\hat\bx = (x_2, x_3, \dots, x_N)\in\R^{3N-3}$. 

\textit{Clusters.}
Let $\SR = \{1, 2, \dots, N\}$. 
An index set $\SP\subset\SR$ is called 
\textit{cluster}. 
The cluster $\SR$ is called \textit{maximal}. 
We denote $|\SP| = \card \SP$, 
$\SP^{\rm c} = \SR\setminus \SP$,\ $\SP^* = \SP\setminus\{1\}$. 
If $\SP = \varnothing$, then $|\SP| = 0$ and $\SP^{\rm c} = \SR$.

For $M$ clusters $\SP_1, \dots, \SP_M$ we write $\BP = \{\SP_1, \SP_2, \dots, \SP_M\}$, 
 $\BP^* = \{\SP_1^*, \SP_2^*, \dots, \SP_M^*\}$
and call $\BP$, $\BP^*$ \textit{cluster sets}. 
Clusters $\SP_1, \SP_2, \dots, \SP_M$ in a cluster set are not assumed to be all distinct. 

\textit{Derivatives.} 
Let $\mathbb N_0 = \mathbb N\cup\{0\}$.
If $x = (x', x'', x''')\in \R^3$ and $m = (m', m'', m''')\in \mathbb N_0^3$, then 
the derivative $\p_x^m$ is defined in the standard way:
\begin{align*}
\p_x^m = \p_{x'}^{m'}\p_{x''}^{m''}\p_{x'''}^{m'''}.
\end{align*}
This notation extends to $x\in\R^d$ with an arbitrary dimension $d\ge 1$ in the obvious way. 
Denote also 
\begin{align*}
\p^{\bm} =  
\p_{x_1}^{m_1} \p_{x_2}^{m_2}\cdots \p_{x_N}^{m_N},\quad 
\bm = (m_1, m_2, \dots, m_N)\in \mathbb N_0^{3N}.
\end{align*}
A central role is played by the following directional derivatives. 
For a cluster $\SP$ and each $m = (m', m'', m''')\in \mathbb N_0^3$, we define 
the \textit{cluster derivatives} 
\begin{align}\label{eq:clusterder}
\SD_{\SP}^m  = &\ \biggl(\sum_{k\in \SP} \p_{x_k'}\biggr)^{m'}
\bigg(\sum_{k\in \SP} \p_{x_k''}\bigg)^{m''}
\bigg(\sum_{k\in \SP} \p_{x_k'''} \bigg)^{m'''}.
\end{align}
These operations can be viewed as partial derivatives w.r.t. the variable 
$\sum_{k\in\SP} x_k$. 
Let $\BP = \{\SP_1, \SP_2, \dots, \SP_M\}$ be a cluster set, and let 
$\bm = (m_1, m_2, \dots, m_M)$, $m_k\in \mathbb N_0^3$, $k = 1, 2, \dots, M$. Then we denote 
\begin{align*} 
\SD_{\BP}^{\bm} = \SD_{\SP_1}^{m_1} \SD_{\SP_2}^{m_2}\cdots\SD_{\SP_M}^{m_M}.
\end{align*}  

\textit{Supports.} 
For any smooth function $f = f(\bx)$,  
we define $\supp f = \{\bx: f(\bx)\not = 0\}$. With this definition we immediately get the useful 
property that
\begin{align}\label{eq:supp}
\supp (fg) = \supp f\cap \supp g.
\end{align}
Furthermore, for any $\bm\in\mathbb N_0^{3N}$, $|\bm|=1$,  we have 
\begin{align}\label{eq:suppderiv}
\supp\p^{\bm} f\subset\supp f,\quad \textup{if}\quad f\ge 0.
\end{align}
 
\section{The main result}\label{sect:main}
 
\subsection{Main theorem}
The main theorem \ref{thm:main} 
is derived from the following result, 
that holds for more general potentials than \eqref{eq:coulomb}.

Let $V_{k,l}, W_{k,j}\in \plainC\infty(\R^3\setminus \{0\})$, $l = 1, 2, \dots, N_0$, 
$k, j = 1, 2, \dots, N$, 
be functions on $\R^3$ such that 
for all $v\in \plainH1(\R^{3})$ we have 
\begin{align}\label{eq:form}
\| V_{k, l} v\|_{\plainL2} + \|W_{k, j} v\|_{\plainL2} \le C\| v\|_{\plainH1},
\end{align}
and for every $\varepsilon>0$, we have 
\begin{align}\label{eq:potan}
\sum_{k=1}^N \sum_{l=1}^{N_0} \max_{|x|>\varepsilon}|\p^m_x V_{k, l}(x)| 
+ \sum_{\substack{k, j = 1\\k\not = j}}^{N}\max_{|x|>\varepsilon}|\p^m_x W_{k, j}(x)|
\le A_0^{1+|m|} (1+|m|)^{|m|},
\end{align}
for all $l = 1, 2, \dots, N_0,\ k, j = 1, 2, \dots, N$ with some positive 
constant $A_0 = A_0(\varepsilon)$. The condition 
\eqref{eq:potan}  
implies that the functions $V_{k, l}$ and $W_{k, j}$ are real analytic on $\R^3\setminus\{0\}$.  
Instead of the potential $V^{\rm C}$ defined in \eqref{eq:coulomb}, we 
consider the potential  
\begin{align}\label{eq:genpot}
V(\bx) = \sum_{k=1}^N \sum_{l=1}^{N_0} V_{k, l}(x_k-R_l) 
+ \sum_{\substack{k, j = 1\\k\not = j}}^{N} W_{k, j}(x_k-x_j).
\end{align}  
The Coulomb potentials $V_{k, l} (x) = -Z_l|x|^{-1}$ 
and $W_{k, j}(x) = (2|x|)^{-1}$ satisfy \eqref{eq:form} 
in view of the classical Hardy's inequality, see e.g. 
\cite[The Uncertainty Principle Lemma, p. 169]{ReedSimon2}. 
Furthermore, the bounds \eqref{eq:potan} can be 
deduced from the estimates for harmonic functions, established, e.g. in \cite[Theorem 7, p. 29]{Evans1998}.
Thus the potential 	\eqref{eq:coulomb} is a special case of \eqref{eq:genpot}. 
Working with more general potentials allows one to include into consideration other physically meaningful 
interactions, such as, e.g., the Yukawa potential. 
This generalization was pointed out in \cite{FHOS2004}.

We need the following elementary elliptic regularity fact, which we 
give with a proof, since it is quite short.   

\begin{lem}\label{lem:reg}
Suppose that $V$ is given by \eqref{eq:genpot}.  
 Then 
\begin{align}\label{eq:hardy}
\|V v\|_{\plainL2}\le C\|v\|_{\plainH1},\ 
\end{align}				                                                              
for all $v\in \plainH1(\R^{3N})$. 

If $v\in \plainH1(\R^{3N})$ and $H v\in\plainL2(\R^{3N})$, then 
$v\in \plainH2(\R^{3N})$ and  
\begin{align}\label{eq:domain}
\|v\|_{\plainH2}\le C\big(\|H v\|_{\plainL2} + \|v\|_{\plainL2}\big).
\end{align} 
The constant $C$ depends on $N$ and $N_0$ only.
 \end{lem}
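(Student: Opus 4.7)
The plan has two parts, matching the two assertions of the lemma.

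For the bound \eqref{eq:hardy}, the idea is to reduce the $3N$-variable estimate to the one-variable hypothesis \eqref{eq:form} applied to each summand of \eqref{eq:genpot} via Fubini. For a nucleus term $V_{k,l}(x_k-R_l)$, I would fix $\hat\bx_k=(x_1,\dots,x_{k-1},x_{k+1},\dots,x_N)$ and apply \eqref{eq:form} to the slice $v(\cdot,\hat\bx_k)$, which lies in $\plainH1(\R^3)$ for a.e.\ $\hat\bx_k$. Then
\begin{equation*}
\|V_{k,l}(x_k-R_l)v\|_{\plainL2(\R^{3N})}^2
\le C\int_{\R^{3N-3}}\|v(\cdot,\hat\bx_k)\|_{\plainH1(\R^3)}^2\,d\hat\bx_k
\le C\|v\|_{\plainH1(\R^{3N})}^2.
\end{equation*}
For a pair term $W_{k,j}(x_k-x_j)$, the same reasoning applies after the linear change of variables $y=x_k-x_j$ with $x_j$ fixed (Jacobian one): apply \eqref{eq:form} in the single variable $y$ and then integrate in the remaining $3N-3$ variables. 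Summing over the $NN_0$ one-body terms and the $N(N-1)$ pair terms yields \eqref{eq:hardy} with a constant depending only on $N,N_0$.

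For the second assertion, the strategy is standard elliptic regularity on $\R^{3N}$ combined with an interpolation absorption. Given $v\in\plainH1$ with $Hv\in\plainL2$, part one gives $Vv\in\plainL2$, and hence, as distributions,
\begin{equation*}
-\Delta v = Hv - Vv \in \plainL2(\R^{3N}).
\end{equation*}
Together with $v\in\plainL2$, Fourier characterisation of $\plainH2(\R^{3N})$ yields $v\in\plainH2$ with
$\|v\|_{\plainH2}\le C\bigl(\|\Delta v\|_{\plainL2}+\|v\|_{\plainL2}\bigr)$. Combining this with \eqref{eq:hardy} produces
\begin{equation*}
\|v\|_{\plainH2} \le C\bigl(\|Hv\|_{\plainL2} + \|v\|_{\plainH1} + \|v\|_{\plainL2}\bigr).
\end{equation*}

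To finish, I would use the standard interpolation bound $\|v\|_{\plainH1}\le \varepsilon\|v\|_{\plainH2}+C_\varepsilon\|v\|_{\plainL2}$ (immediate from $|\xi|\le\varepsilon(1+|\xi|^2)+C_\varepsilon$ on the Fourier side) and choose $\varepsilon$ so small that the $\|v\|_{\plainH2}$ term on the right can be absorbed into the left-hand side, producing \eqref{eq:domain}. The only non-routine step in the whole argument is the Fubini reduction for the pair potentials $W_{k,j}$, and even there the change of variables $y=x_k-x_j$ makes it straightforward; everything else is classical.
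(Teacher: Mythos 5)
Your proposal is correct and follows essentially the same route as the paper: the Fubini/slicing reduction you spell out is exactly what the paper treats as "immediate" in deducing \eqref{eq:hardy} from \eqref{eq:form}, and your second part (Fourier characterisation of $\plainH2$, then the interpolation bound $\|v\|_{\plainH1}\le \varepsilon\|v\|_{\plainH2}+C_\varepsilon\|v\|_{\plainL2}$ with absorption) is the same argument the paper carries out via its intermediate estimate \eqref{eq:inf} and the choice $\d=(2C_1)^{-1}$.
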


\begin{proof}
The bound \eqref{eq:hardy} immediately follows from \eqref{eq:form}. 

For $v\in \plainH1$, $H v\in\plainL2$, it follows from \eqref{eq:hardy} that  
\begin{align}\label{eq:per}
-\Delta v = H v - V v \in \plainL2. 
\end{align}
Consequently, in view of the straightforward bound
\begin{align}\label{eq:delta}
\|v\|_{\plainH2}\le C_1\big(\|\Delta v\|_{\plainL2} + \|v\|_{\plainL2}\big), 
\end{align}
the function $v$ is $\plainH2$, and hence \eqref{eq:hardy} implies that  
\begin{align}\label{eq:inf}
\|V v\|_{\plainL2}
\le \d \|v\|_{\plainH2} + \tilde C_\d \|v\|_{\plainL2},
\end{align}
for all $\d > 0$. Together with \eqref{eq:per} and \eqref{eq:delta} this 
leads to the bound 
\begin{align*}
\|v\|_{\plainH2}\le C_1\big(\|H v\|_{\plainL2} + \d\|v\|_{\plainH2} 
+ (\tilde C_\d + 1)\|v\|_{\plainL2}\big). 
\end{align*}
Taking $\d = (2 C_1)^{-1}$, we easily derive \eqref{eq:domain} with a suitable constant $C>0$. 
\end{proof}
 
Note that the estimate \eqref{eq:inf} shows that the potential 
\eqref{eq:genpot} 
is infinitesimally $H^{(0)}$-bounded, so that the operator  $H$ defined in 
\eqref{eq:oper} is self-adjoint on 
the domain $D(H^{(0)}) = \plainH2(\R^{3N})$.

Theorem \ref{thm:main} is a consequence of the following result. 

\begin{thm}\label{thm:main1} Let the potential $V$ be given by \eqref{eq:genpot}, with 
some functions $V_{k, l}$ and $W_{k, j}$, satisfying the conditions \eqref{eq:form} and \eqref{eq:potan}. 
Let $\psi$ be an eigenfunction of the operator \eqref{eq:oper}, and let 
$\g(x, y)$ be as defined in \eqref{eq:den}. 
Then $\g(x, y)$ is real analytic as a function of variables $x$ and $y$ on the set \eqref{eq:D0}.
\end{thm}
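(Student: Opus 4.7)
The first step is to reduce Theorem~\ref{thm:main1} to quantitative $\plainL2$-estimates of the form
$\|\p_x^{\al}\p_y^{\b}\g\|_{\plainL2(K)} \le A^{1+|\al|+|\b|}(|\al|+|\b|)!$,
uniformly on every compact set $K\subset\CD_0$ -- this is the content of the forthcoming Theorem~\ref{thm:deriv2}. Once such $\plainL2$-bounds on the derivatives of $\g$ are available, Sobolev embedding in the six-dimensional $(x,y)$-space upgrades them to pointwise factorial bounds, which is precisely the analyticity criterion on $\CD_0$.

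The core technical input is the smoothness of $\psi$ with respect to the cluster derivatives $\SD_{\SP}^{m}$ of \eqref{eq:clusterder}, as illustrated by the toy example \eqref{eq:F}. Fix a cluster $\SP\subset\SR$ and an open set $\Omega\subset\R^{3N}$ on which (i) no particle $x_k$ with $k\in\SP$ meets any nucleus $R_l$, and (ii) no two distinct particles $x_k,x_j$ with $k,j\in\SP$ coalesce. On such $\Omega$, each within-$\SP$ interaction $W_{k,j}(x_k-x_j)$ in \eqref{eq:genpot} is annihilated by $\SD_{\SP}$, and the remaining terms of $V$ are $\plainC\infty$ along the direction $\sum_{k\in\SP}x_k$ with quantitative bounds inherited from \eqref{eq:potan}. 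Applying iterated cluster derivatives to $(H-E)\psi=0$ and using the elliptic regularity estimate \eqref{eq:domain} of Lemma~\ref{lem:reg}, together with Leibniz-type commutator bounds for $[\SD_{\SP}^{m},V]$, I would prove by induction on the order factorial $\plainL2$-bounds $\|\SD_{\BP}^{\bm}\psi\|_{\plainL2(\Omega)}\le A^{1+|\bm|}(|\bm|!)^{\kappa}$ for every cluster set $\BP$ whose clusters all satisfy the compatibility conditions on $\Omega$. This is the programme of Section~\ref{sect:reg}, following \cite{FHOS2004} with simplifications.

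To pass from these bounds on $\psi$ to bounds on $\p_x^{\al}\p_y^{\b}\g$, I would introduce in Section~\ref{sect:cutoff} the \emph{extended cut-off functions} $\Phi(x,y,\hat\bx)$ announced in the introduction. Each $\Phi$ carries two associated clusters $\SP^{(x)}, \SP^{(y)}\subset\{2,\dots,N\}$: indices $j$ such that $x_j$ is confined to a neighborhood of $x$ on $\supp\Phi$, and respectively of $y$. Combined with the separation conditions defining $\CD_0$, a finite partition of unity of $\R^{3}_x\times\R^{3}_y\times\R^{3N-3}_{\hat\bx}$ by such extended cut-offs yields
$\g(x,y)=\sum_{\Phi}\int_{\R^{3N-3}}\psi(x,\hat\bx)\,\overline{\psi(y,\hat\bx)}\,\Phi(x,y,\hat\bx)\,d\hat\bx$.
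Translating $x_j\mapsto x_j-x$ for $j\in\SP^{(x)}$ and $x_j\mapsto x_j-y$ for $j\in\SP^{(y)}$ in each summand converts $\p_x$ acting on $\psi(x,\hat\bx)$ into the cluster derivative $\SD_{\SP^{(x)}\cup\{1\}}$, and similarly for $\p_y$ on $\overline{\psi(y,\hat\bx)}$. Expanding $\p_x^{\al}\p_y^{\b}[\psi\,\bar\psi\,\Phi]$ by Leibniz and applying the cluster-derivative bounds of Section~\ref{sect:reg} (together with derivative estimates on $\Phi$ supplied in Section~\ref{sect:cutoff}) gives Theorem~\ref{thm:deriv2} in Section~\ref{sect:estimates}, and hence Theorem~\ref{thm:main1}.

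The main obstacle is the coupling between the two variables $x$ and $y$, which is absent in the one-variable density analysis of \cite{FHOS2004}: each extended cut-off now carries \emph{two} clusters which must be manipulated simultaneously, because the integration variables $\hat\bx$ appear in both $\psi(x,\hat\bx)$ and $\overline{\psi(y,\hat\bx)}$, and the clusters $\SP^{(x)}$ and $\SP^{(y)}$ may overlap. The combinatorial bookkeeping required to distribute derivatives across $\psi$, $\bar\psi$ and $\Phi$ while preserving the cluster structure, and to aggregate the resulting factorial constants without losing the correct growth rate, is where the argument of \cite{FHOS2004} must be reworked; this is the source of the combinatorial formulas collected in the Appendix and the true heart of the proof.
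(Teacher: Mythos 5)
Your overall architecture coincides with the paper's: reduce the theorem to $\plainL2$-bounds with factorial growth for $\p_x^{k}\p_y^{m}\g$ on the sets $\CD_\varepsilon$ exhausting $\CD_0$ (Theorem \ref{thm:deriv2}), upgrade these to pointwise factorial bounds by Sobolev embedding (Lemma \ref{lem:2toinfty}), and prove the $\plainL2$-bounds via cluster derivatives of $\psi$, extended cut-offs carrying two clusters, a change of variables converting $\p_x,\p_y$ into cluster derivatives, and a partition of unity by such cut-offs. So the route is the same; the problem lies in the hypotheses you impose in the eigenfunction-regularity step, which as stated would make the central induction fail.

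The open set $\Omega$ on which you claim $\|\SD_{\BP}^{\bm}\psi\|_{\plainL2(\Omega)}\le A^{1+|\bm|}(|\bm|!)^{\kappa}$ is defined by (i) cluster particles away from the nuclei and (ii) no coalescence \emph{within} the cluster $\SP$. Condition (ii) is both unnecessary and insufficient: unnecessary because, as you note yourself, the within-cluster interactions $W_{k,j}(x_k-x_j)$ with $j,k\in\SP$ are annihilated by $\SD_{\SP}$, so within-cluster coalescences are harmless; insufficient because the cross terms with $k\in\SP$, $j\in\SP^{\rm c}$ are \emph{not} annihilated --- one has $\SD_{\SP}^{m}W_{k,j}(x_k-x_j)=(\p^{m}W_{k,j})(x_k-x_j)$ --- and \eqref{eq:potan} controls these derivatives only where $|x_k-x_j|>\varepsilon$. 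Your $\Omega$ allows a cluster particle to coalesce with a non-cluster particle, where these derivatives blow up in the Coulomb case, so the quantitative bounds you say are ``inherited from \eqref{eq:potan}'' are false on $\Omega$ and the induction on $|\bm|$ breaks down. The correct domain is the paper's $U_{\BP}(\varepsilon)=X_{\BP}(\varepsilon)\cap T_{\BP}(\varepsilon)$: cluster particles away from the nuclei \emph{and} separated from every particle of $\SP^{\rm c}$; in the application this is exactly the separation that the associated cluster $\SQ(\phi)$ of an admissible cut-off guarantees on its support (Lemmas \ref{lem:supp} and \ref{lem:link}). Two smaller points: for analyticity you must take $\kappa=1$ (any larger exponent yields only a Gevrey class, insufficient for Theorem \ref{thm:main1}); and the overlap $\SP^{(x)}\cap\SP^{(y)}\ne\varnothing$ that you single out as a main obstacle is in fact dispatched trivially, since $|x-y|>\varepsilon$ on $\CD_\varepsilon$ forces any extended cut-off whose two clusters share an index to vanish identically there (this is Lemma \ref{lem:empty}).
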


For the sake of simplicity we prove this theorem only for the case of a single atom, 
i.e. for $N_0=1$. 
The general case requires only obvious modifications. 
Without loss of generality we set $R_1 = 0$. 
Thus \eqref{eq:genpot} rewrites as 
\begin{align}\label{eq:genpot1}
V(\bx) = \sum_{k=1}^N V_{k}(x_k) 
+ \sum_{\substack{k, j = 1\\k\not = j}}^{N} W_{k, j}(x_k-x_j),\ V_k = V_{k, 1},
\end{align}
and the stated analyticity of $\g(x, y)$ will be proved on the set
\begin{align*}
\CD_0 = \{(x, y)\in \R^3\times \R^3: x\not = 0, y\not=0, x\not=y\}.
\end{align*}
This result is derived from the following $\plainL2$-bound on the set
\begin{align}\label{eq:deps}
\CD = \CD_\varepsilon = 
\{(x, y)\in \R^3\times \R^3: |x|>\varepsilon, |y|> \varepsilon,\ |x-y|>\varepsilon\},\quad \varepsilon>0.
\end{align}

\begin{thm} \label{thm:deriv2}
Let $\varepsilon>0$ be arbitrary. 
 Then, for all $k, m\in \mathbb N_0^3$, we have 
\begin{align}\label{eq:der}
\|\p_{x}^{k}\p_y^{m}\gamma(\ \cdot\ , \ \cdot\ )\|_{\plainL2(\CD_\varepsilon)}
\le A^{|k|+|m|+2} (|k|+|m|+1)^{|k|+|m|},
\end{align}
with some constants $A = A(\varepsilon)$, independent of $k, m$.
\end{thm}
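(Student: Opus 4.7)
The plan is to localize $\g(x,y)$ by a partition of unity on $\R^3\times\R^3\times\R^{3N-3}$ built from the extended cut-offs introduced in Subsect.~\ref{subsect:extended}, split $\g$ into a finite sum of integrals of the form \eqref{eq:pau}, and then estimate the $x,y$-derivatives of each piece by trading $\p_x,\p_y$ for cluster derivatives of $\psi$, for which the $\plainL2$-bounds of Sect.~\ref{sect:reg} provide Gevrey-$1$ control.

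First, fix $\varepsilon>0$ and, following Sect.~\ref{sect:cutoff}, construct a finite family $\{\Phi_\alpha\}$ of extended cut-offs whose restrictions to $\CD_\varepsilon\times\R^{3N-3}$ form a partition of unity. Each $\Phi_\alpha$ carries two associated clusters $\SP_\alpha,\SQ_\alpha\subset\SR$ (both containing the index $1$). On $\supp\Phi_\alpha$ the electrons with $j\in\SP_\alpha^*$ are tied to $x$ at scale $\lesssim\varepsilon$, the electrons with $j\in\SQ_\alpha^*$ are tied to $y$ (the two sets being disjoint since $|x-y|>\varepsilon$), and all remaining electrons together with $x,y,0$ stay uniformly separated. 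Consequently, coalescences among the variables of the first (resp.\ second) factor $\psi$ can occur only inside $\SP_\alpha$ (resp.\ $\SQ_\alpha$), precisely matching the cluster structure required by the estimates of Sect.~\ref{sect:reg}.

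Second, for each $\alpha$ consider
\begin{align*}
\g_\alpha(x,y) = \int_{\R^{3N-3}} \psi(x,\hat\bx)\,\overline{\psi(y,\hat\bx)}\,\Phi_\alpha(x,y,\hat\bx)\,d\hat\bx
\end{align*}
and perform the substitutions $x_j=x+w_j$ for $j\in\SP_\alpha^*$ and $x_j=y+w_j$ for $j\in\SQ_\alpha^*$, leaving the other $x_j$ unchanged. After this change of variables, $\p_x$ under the integral acts as $\SD_{\SP_\alpha}$ on the first $\psi$-factor plus derivatives on $\Phi_\alpha$; symmetrically, $\p_y$ acts as $\SD_{\SQ_\alpha}$ on the second $\psi$-factor plus derivatives on $\Phi_\alpha$. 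Iterating via the Leibniz rule, $\p_x^k\p_y^m\g_\alpha$ expands into a finite combination of integrals in which each $\psi$-factor carries only a cluster derivative of order at most $|k|$ or $|m|$, and the residual derivatives land on $\Phi_\alpha$. The $\plainL2(\CD_\varepsilon)$-norm of each such integral is then controlled by Cauchy--Schwarz in $\hat\bx$ together with the cluster-derivative $\plainL2$-bounds of Sect.~\ref{sect:reg} and the Gevrey-type bounds for derivatives of $\Phi_\alpha$ from Sect.~\ref{sect:cutoff}. Summing the resulting Leibniz multinomials (with help from the combinatorial identities of the Appendix) and over the finitely many $\alpha$'s delivers \eqref{eq:der}.

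The hardest part is not the qualitative $\plainC\infty$-smoothness, which essentially follows the scheme of \cite{FHOS2004}, but the \emph{quantitative} Gevrey-$1$ growth $A^{|k|+|m|+2}(|k|+|m|+1)^{|k|+|m|}$ uniform in $k,m$. This demands that three ingredients fit together with sharp constants: iterated elliptic estimates on the cluster derivatives of $\psi$, controlled by the explicit bounds \eqref{eq:potan} via Sect.~\ref{sect:reg}; analytic-class (rather than merely smooth) control of the extended cut-offs $\Phi_\alpha$ from Sect.~\ref{sect:cutoff}; and Leibniz bookkeeping that loses only an exponential constant per derivative. The genuinely new difficulty compared with the density case $\g(x,x)$ of \cite{FHOS2004} is that $\p_x$ and $\p_y$ act through a single $\hat\bx$-integration on \emph{two} coupled clusters $\SP_\alpha,\SQ_\alpha$; engineering this coupling inside $\Phi_\alpha$ so that every derivative produced by the Leibniz expansion is a \emph{valid} cluster derivative of $\psi$ on $\supp\Phi_\alpha$, without destroying the factorial accounting, is the principal obstacle.
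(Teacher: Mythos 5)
Your overall strategy coincides with the paper's: a partition of unity made of extended cut-offs, a change of variables turning $\p_x,\p_y$ into cluster derivatives of $\psi$, and Cauchy--Schwarz combined with the $\plainL2$-bounds of Corollary \ref{cor:derivl2}. However, two of your claims do not survive inspection. First, after you substitute $x_j=x+w_j$ for $j\in\SP_\alpha^*$ and $x_j=y+w_j$ for $j\in\SfS_\alpha^*$, the second factor $\overline{\psi(y,\hat\bx)}$ still depends on $x$ through the shifted variables indexed by $\SP_\alpha^*$, so $\p_x$ also produces a cluster derivative $\SD_{\SP_\alpha^*}$ acting on the \emph{second} $\psi$-factor (and symmetrically for $\p_y$ on the first); this is exactly the second term in \eqref{eq:stepup}. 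Hence each factor can accumulate mixed cluster derivatives of total order up to $|k|+|m|$, coming from different clusters, which is why the paper's induction must carry whole cluster sets $\BP,\BS$ rather than ``order at most $|k|$ or $|m|$ per factor''. This is repairable, but your bookkeeping as stated is wrong.

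The genuine gap is your treatment of the derivatives that land on the cut-off. You propose to control them by ``Gevrey-type'' or ``analytic-class'' bounds on $\Phi_\alpha$, but no such bounds exist: the factors $\z,\t$ of \eqref{eq:pu} are built from the fixed smooth step \eqref{eq:xi} and are constant outside compact sets, so an estimate $\|\p^\beta\Phi_\alpha\|_\infty\le C^{1+|\beta|}\,|\beta|!$ is impossible (it would force $\Phi_\alpha$ to be real analytic, hence constant), and Sect.\ \ref{sect:cutoff} provides nothing of the sort; choosing Gevrey-$s$ cut-offs with $s>1$ would only yield Gevrey-$s$ control of $\g$, not the analytic bound \eqref{eq:der}. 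The paper's mechanism is structural rather than quantitative: by Lemma \ref{lem:derivext}, one cluster derivative $\SD_{\SP}^l\Phi$ is, on $\CD_\varepsilon$, a finite sum of \emph{new} extended cut-offs $\Phi^{(l)}_{s,r}$ of the form \eqref{eq:philrs}, \eqref{eq:philmu}, containing only first derivatives $\p^l\t$, and their associated clusters strictly contain $\SP$. The next $x$-derivative is then converted using the \emph{enlarged} cluster, which annihilates the already differentiated crossing factor (both of its indices now lie inside the cluster), so no cut-off factor is ever differentiated twice, and the enlargement can occur at most $N-1$ times; once $\SP=\SR$ one has $\SD_\SP^l\Phi=0$ on $\CD_\varepsilon$. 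This adaptive cluster growth, encoded in the induction of Proposition \ref{prop:deriv} over all pairs $(\BP,\BS)$ and all extended cut-offs satisfying \eqref{eq:suppembed1}--\eqref{eq:suppembed2}, is precisely what keeps the loss per derivative exponential and yields \eqref{eq:der}; without it, your Leibniz expansion cannot close at the Gevrey-$1$ level.
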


The derivation of Theorem \ref{thm:main1} from Theorem \ref{thm:deriv2} 
is based on the following elementary lemma. 
 
\begin{lem}\label{lem:2toinfty}
Let $\Om\subset\R^d$ be an open set, and let 
Let $f\in\plainC\infty(\Om)$ be a function such that 
\begin{align}\label{eq:deriv2}
\|\p_x^s f\|_{\plainL2(\Om)}\le B^{2+|s|} (1+|s|)^{|s|},
\end{align} 
for all $s\in \mathbb N_0^d$, with some positive constant $B$. Then $f$ is real analytic on $\Om$. 
\end{lem}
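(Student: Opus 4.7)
The plan is to upgrade the $\plainL2$-bounds \eqref{eq:deriv2} to pointwise bounds on derivatives of the classical form $|\p_x^s f(x)|\le C A^{|s|} |s|!$ on compact subsets of $\Omega$, from which real analyticity follows by the standard Taylor-remainder estimate. The bridge between the two types of estimates is Sobolev embedding.

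First, fix an arbitrary $x_0\in\Omega$ and choose $r>0$ small enough that the closed ball $\overline{B_{2r}(x_0)}$ is contained in $\Omega$. Let $k$ be the smallest integer strictly greater than $d/2$. By the Sobolev embedding $\plainH{k}(B_r(x_0))\hookrightarrow \plainC{0}(\overline{B_r(x_0)})$ applied to each derivative $\p_x^s f$, there is a constant $C_0=C_0(d,r)$ such that
\begin{align*}
\sup_{x\in B_r(x_0)} |\p_x^s f(x)|
\le C_0 \sum_{|\alpha|\le k} \|\p_x^{s+\alpha} f\|_{\plainL2(B_r(x_0))}
\le C_0 \sum_{|\alpha|\le k} \|\p_x^{s+\alpha} f\|_{\plainL2(\Omega)}.
\end{align*}

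Next, I would insert the hypothesis \eqref{eq:deriv2} into the right-hand side: each term is bounded by $B^{2+|s|+|\alpha|}(1+|s|+|\alpha|)^{|s|+|\alpha|}$. The elementary inequality $n^n\le e^n\, n!$ (Stirling) gives
\begin{align*}
(1+|s|+|\alpha|)^{|s|+|\alpha|}\le e^{|s|+|\alpha|}(|s|+|\alpha|)!
\le e^{|s|+|\alpha|}\, 2^{|s|+|\alpha|}\, |s|!\, |\alpha|!,
\end{align*}
using the standard binomial bound $(p+q)!\le 2^{p+q} p!\, q!$. Since $|\alpha|\le k$ is bounded, the sum over $\alpha$ contributes only a finite multiplicative constant depending on $k$ and $B$. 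Absorbing these constants, one obtains a bound of the form
\begin{align*}
\sup_{x\in B_r(x_0)} |\p_x^s f(x)|\le C_1 A_1^{|s|}\, |s|!
\end{align*}
with $C_1,A_1$ depending on $x_0$, $r$, $B$, $d$ but not on $s$.

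Finally, such pointwise bounds on all derivatives imply real analyticity on $B_{r'}(x_0)$ for any $r' < \min(r,1/A_1)$ by estimating the Taylor remainder of $f$ at $x_0$ in the standard way: for $x\in B_{r'}(x_0)$,
\begin{align*}
\Bigl| f(x) - \sum_{|s|<M}\frac{\p_x^s f(x_0)}{s!}(x-x_0)^s\Bigr|
\le \sum_{|s|=M}\frac{1}{s!}\sup_{B_{r'}(x_0)}|\p_x^s f|\cdot |x-x_0|^{|s|}
\le C_1 (d A_1 r')^{M},
\end{align*}
which tends to $0$ as $M\to\infty$. Since $x_0\in\Omega$ was arbitrary, $f$ is real analytic on $\Omega$. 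The only step requiring any care is the combinatorial conversion of $(1+|s|+|\alpha|)^{|s|+|\alpha|}$ to $|s|!$, and this is routine.
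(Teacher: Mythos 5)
Your argument is correct and takes essentially the same route as the paper's: both upgrade the $\plainL2$ bounds to pointwise factorial bounds $|\p_x^s f(x)|\le C A^{|s|}|s|!$ via Sobolev embedding (the paper uses a cut-off $\b$ and the inequality $\|g\|_{\plainL\infty}\le C\|(1-\Delta)^l g\|_{\plainL2}$ with $l>d/4$, you use $\plainH{k}(B_r(x_0))\hookrightarrow\plainC0(\overline{B_r(x_0)})$ directly) together with the same Stirling-type combinatorics, and then conclude analyticity, the paper by citing Krantz--Parks and you by estimating the Taylor remainder by hand. The only (harmless) slip is in the radius: since $\sum_{|s|=M}M!/s!=d^M$, your remainder bound $C_1(dA_1r')^M$ tends to $0$ only if $r'<1/(dA_1)$, not merely $r'<1/A_1$, so shrink $r'$ accordingly (or convert $|s|!\le d^{|s|}s!$ first, as the paper does).
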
 
 
\begin{proof}
Let $x_0\in\Om$, and let $r >0$ be such that $B(x_0, 2r)\subset \Om$. We aim to prove that 
\begin{align}\label{eq:factorial}
|\p_x^s f(x)|\le C R^{-|s|}s!, \quad \forall s\in\mathbb N_0^d,
\end{align}
for each $x\in B(x_0, r)$, with some  positive constants $C$ and $R$, possibly depending on $x_0$. 
According to \cite[Proposition 2.2.10]{Krantz_Parks} this would 
imply the required analyticity.

Let $\b\in\plainC\infty_0(\R^d)$ be a function supported on $B(x_0, 2r)$ and such that 
$\b = 1$ on $B(x_0, r)$. Denote 
\begin{align*}
g(x) = \b(x) \p_x^s f(x).
\end{align*}    
For $l > d/4$ we can estimate 
\begin{align*}
\|g\|_{\plainL\infty(\R^d)}
\le C\|(1-\Delta)^l g\|_{\plainL2(\R^d)},
\end{align*}
with a constant $C$ depending on $l$. 
Now it follows from \eqref{eq:deriv2} that 
\begin{align*}
\|g\|_{\plainL\infty(\R^d)}
\le C' B^{|s|+2l+2}(|s|+2|l|+1)^{|s|+2l}.
\end{align*}
By \eqref{eq:comb1}, the right-hand side does not exceed 
\begin{align*}
\tilde C (B e)^{|s|+2l+2} (|s| + 2l)!
\le 
\tilde C (B e)^{|s|+2l+2} e^{2l(|s|+2l)} |s|!
\end{align*}
According to \eqref{eq:spr}, 
\begin{align*}
|s|!\le d^{|s|}s!.
\end{align*}
Consequently,
\begin{align*}
\|g\|_{\plainL\infty(\R^d)}
\le \tilde C (Be)^{2l+2} e^{4l^2} 
(Be^{1+2l} d)^{|s|} s!.
\end{align*}
This bound leads to \eqref{eq:factorial} with explicitly given constants $C$ and 
$R$. The proof is now complete.
\end{proof} 
 
\begin{proof}[Proof of Theorem \ref{thm:main1}] 
According to Theorem \ref{thm:deriv2} and Lemma 
\ref{lem:2toinfty}, the function $\g(x, y)$ is real analytic on $\CD_\varepsilon$ for all $\varepsilon>0$. 
Consequently, it is real analytic on 
\begin{align*}
\CD_0 = \bigcup_{\varepsilon>0} \CD_\varepsilon, 
\end{align*}
as required. 
\end{proof}

The rest of the paper is focused on the proof of Theorem \ref{thm:deriv2}.

\subsection{More notation}
Here we introduce some important sets in $\R^{3N}$ and $\R^{3N-3}$. 
For $\varepsilon\ge 0$ introduce 
\begin{align}\label{eq:xp}
X_{\SP}(\varepsilon) = 
\begin{cases}
\R^{3N}\quad {\rm for}\ |\SP| = 0 \ \textup{or} \ N,\\[0.2cm]
\{\bx\in \R^{3N}: \ |x_j-x_k| > \varepsilon, 
\forall j\in \SP, k\in \SP^{\rm c} \},\quad  {\rm for}\ 0< |\SP| < N,
\end{cases}
\end{align} 
The set $X_\SP(\varepsilon), \varepsilon>0,$ separates the points $x_k$ 
and $x_j$ labeled by the clusters $\SP$ and $\SP^{\rm c}$ respectively. 
Note that $X_{\SP}(\varepsilon) = X_{\SP^{\rm c}}(\varepsilon)$. 

Define also the sets separating $x_k$'s from the origin:
\begin{align}\label{eq:tp}
T_\SP(\varepsilon) = 
\begin{cases}
\R^{3N}, \quad {\rm for}\ |\SP|=0,\\[0.2cm]
\{\bx\in\R^{3N}: |x_j|>\varepsilon,\ \forall j\in \SP\},\ \quad {\rm for}\ |\SP|>0.
\end{cases}
\end{align}
It is also convenient to introduce a similar notation involving only the variable $\hat\bx$:
\begin{align}\label{eq:hattp}
\widehat T_{\SP^*}(\varepsilon) = 
\begin{cases}
\R^{3N-3}, \quad {\rm for}\ |\SP^*|=0,\\[0.2cm]
\{\hat\bx\in\R^{3N-3}: |x_j|>\varepsilon,\ \forall j\in \SP^*\},\ \quad {\rm for}\ |\SP^*|>0.
\end{cases}
\end{align}
For the cluster sets 
$\BP = \{\SP_1, \SP_2, \dots, \SP_M\}$, 
 $\BP^* = \{\SP_1^*, \SP_2^*, \dots, \SP_M^*\}$ define   
\begin{align}\label{eq:up}
\begin{cases}
X_{\BP}(\varepsilon) = &\ \bigcap_{s=1}^M X_{\SP_s}(\varepsilon)\subset \R^{3N},
\quad T_{\BP}(\varepsilon) =  \bigcap_{s=1}^M T_{\SP_s}(\varepsilon)\subset \R^{3N},\\[0.3cm]
&\ 
U_{\BP}(\varepsilon) = X_{\BP}(\varepsilon)\cap T_{\BP}(\varepsilon),\\[0.3cm]
&\ \widehat T_{\BP^*}(\varepsilon) =  \bigcap_{s=1}^M \widehat T_{\SP_s^*}(\varepsilon)\subset\R^{3N-3}.
\end{cases}
\end{align} 

Now we introduce the standard cut-off functions with which we work. Let 
\begin{align}\label{eq:xi}
\xi\in \plainC\infty(\R):\ 0\le \xi(t)\le 1,\ 
\xi(t) = 
\begin{cases}
0,\ {\rm if}\ t \le 0,\\[0.2cm]
1,\ {\rm if}\ t\ge 1.
\end{cases}
\end{align}
Now we define two  radially-symmetric functions $\z, \t\in\plainC\infty(\R^3)$ as follows:
\begin{align}\label{eq:pu}
\t(x) = \xi\bigg(\frac{4N}{\varepsilon}|x|-1\bigg),\quad \z(x) = 1-\t(x), \quad x\in\R^3,
\end{align}
so that 
%
\begin{align*}
\t(x) = 0 \quad \textup{for}\quad x\in B\big(0, \varepsilon(4N)^{-1}\big),\qquad\qquad
\z(x) = 0 \quad \textup{for}\quad x\notin B\big(0, \varepsilon(2N)^{-1}\big).
%
\end{align*}

\section{Regularity of the eigenfunctions}\label{sect:reg}
   
In this section we establish estimates for the derivatives $\SD^\bm_\BP\psi$ 
of the eigenfunction $\psi$, see \eqref{eq:clusterder} for the 
definition of the cluster derivatives.   
Our argument is an expanded version of the approach suggested in \cite{FHOS2004}, which, in turn, was
inspired by the proof of analyticity for solutions of elliptic equations with analytic coefficients,
see e.g. the classical monograph \cite[Section 7.5]{Hor1976}. 

The key point of our argument is the regularity of the functions $\SD^\bm_\BP\psi$ for all 
$\bm\in\mathbb N_0^{3M}$ on the domain $U_\BP(\varepsilon)$ with arbitrary positive $\varepsilon$. 
As before, in the estimates below we denote by $C, c$ with or without indices positive constants 
whose exact value is of no importance. 
For constants that are important for subsequent results, we use the notation $L$ or $A$ with indices. 
The letter $L$ (resp. $A$) is used when the constant is independent of (resp. dependent on) $\varepsilon$.

We begin the proof of the required property with studying the regularity of 
the potential \eqref{eq:genpot1}. 

\subsection{Regularity of the potential \eqref{eq:genpot1}}
The next assertion is a more detailed variant of \cite[Part 2 of Lemma A.3]{FHOS2004} 
adapted to the potential \eqref{eq:genpot1}.

\begin{lem}\label{lem:V}
Let $V$ be as defined in \eqref{eq:genpot1} with $N_0 = 1$, 
and let $\BP=\{\SP_1, \SP_2, \dots, \SP_M\}$ 
be an arbitrary collection of clusters. 
Then for all $\bm\in\mathbb N_0^{3M}, |\bm|\ge 1$, the function 
$\SD^{\bm}_\BP V$ is $\plainC\infty$ 
on $U_\BP(\varepsilon)$, and the bound 
\begin{align}\label{eq:V}
\|\SD^{\bm}_\BP V\|_{\plainL\infty(U_\BP(\varepsilon))}
\le A_0^{1+|\bm|} (|\bm|+1)^{|\bm|}
\end{align}
holds, where 
$A_0$ is the constant from the condition \eqref{eq:potan}.
\end{lem}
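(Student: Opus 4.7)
My plan is to compute $\SD^{\bm}_\BP V$ term by term, observing that each cluster derivative $\SD_{\SP_s}$ acts on the one-body and two-body potentials as a \emph{signed} partial derivative, so that the hypothesis \eqref{eq:potan} can be invoked directly with no combinatorial blow-up.

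First I would record the elementary identities
\begin{align*}
\SD_{\SP_s} V_k(x_k) &= \1_{\{k\in\SP_s\}}\,\nabla V_k(x_k),\\
\SD_{\SP_s} W_{k,j}(x_k-x_j) &= \bigl(\1_{\{k\in\SP_s\}}-\1_{\{j\in\SP_s\}}\bigr)\,\nabla W_{k,j}(x_k-x_j),
\end{align*}
valid for every cluster $\SP_s$, since in $\SD_{\SP_s}=\sum_{l\in\SP_s}\p_{x_l}$ only the terms with $l=k$ or $l=j$ act nontrivially. Iterating $\SD_{\SP_s}^{m_s}$ and composing over $s=1,\dots,M$ (the operators commute), I obtain
\begin{align*}
\SD_\BP^{\bm} V_k(x_k) &= c_k\, \p_{x_k}^{\bar m} V_k(x_k),\\
\SD_\BP^{\bm} W_{k,j}(x_k-x_j) &= c_{k,j}\, \p^{\bar m} W_{k,j}(x_k-x_j),
\end{align*}
where $\bar m = m_1+\cdots+m_M\in\mathbb N_0^3$ (so $|\bar m|=|\bm|$) and $c_k, c_{k,j}\in\{-1,0,1\}$ are signs that vanish unless the summand is compatible with every $\SP_s$ for which $m_s\ne 0$: namely, either $k\in\SP_s$ (one-body case) or exactly one of $k,j$ lies in $\SP_s$ (two-body case).

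Second, I would verify smoothness and the quantitative bound on $U_\BP(\varepsilon)$. If $c_k\ne 0$ then $k\in\SP_s$ for at least one such $s$, and the inclusion $U_\BP(\varepsilon)\subset T_{\SP_s}(\varepsilon)$ gives $|x_k|>\varepsilon$, where $V_k\in\plainC\infty$. Similarly, if $c_{k,j}\ne 0$ then $k,j$ are separated by some $\SP_s$, so $U_\BP(\varepsilon)\subset X_{\SP_s}(\varepsilon)$ gives $|x_k-x_j|>\varepsilon$. The triangle inequality together with $|c_k|,|c_{k,j}|\le 1$ then yields, on $U_\BP(\varepsilon)$,
\begin{align*}
|\SD_\BP^{\bm} V(\bx)|
&\le \sum_k |\p_{x_k}^{\bar m} V_k(x_k)| + \sum_{k\ne j}|\p^{\bar m} W_{k,j}(x_k-x_j)|\\
&\le \sum_k\max_{|x|>\varepsilon}|\p_x^{\bar m}V_k(x)| + \sum_{k\ne j}\max_{|x|>\varepsilon}|\p_x^{\bar m}W_{k,j}(x)|,
\end{align*}
and an application of \eqref{eq:potan} (with $N_0=1$ and $m=\bar m$) delivers the desired bound \eqref{eq:V}.

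The only real subtlety is combinatorial bookkeeping in the first step — identifying precisely when a given summand survives the product $\SD_\BP^{\bm}$ and tracking the $\pm$ signs; once this is done, the crucial observation is that the right-hand side of \eqref{eq:potan} already bounds the \emph{sum} over $k$ and over $(k,j)$, so no extra factor of $N$ or $N^2$ is inherited and the constant $A_0$ from \eqref{eq:potan} may be used verbatim.
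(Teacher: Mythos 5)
Your proposal is correct and follows essentially the same route as the paper: compute $\SD_\BP^{\bm}$ term by term, observe that each surviving one-body or two-body term is just an ordinary derivative $\p^{\bar m}V_k$ or $\p^{\bar m}W_{k,j}$ (up to a sign of modulus at most one) evaluated where $|x_k|>\varepsilon$ resp.\ $|x_k-x_j|>\varepsilon$ on $U_\BP(\varepsilon)$, and then invoke \eqref{eq:potan} with $m=\bar m$, noting that it already controls the sums over $k$ and $(k,j)$. If anything, your explicit tracking of the signs and of the survival condition (exactly one of $k,j$ in each $\SP_s$ with $m_s\neq 0$) is a touch more precise than the paper's case analysis for the $W$-terms, but the argument and the resulting bound are the same.
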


\begin{proof} 
Without loss of generality we may assume that 
$\bm = (m_1, m_2, \dots, m_M)$ with all $|m_j|\ge 1$. 
Indeed, suppose that 
$m_1 = 0$ and represent $\BP = \{\SP_1, \tilde \BP\}$ with $\tilde\BP = \{\SP_2, \SP_3, \dots, \SP_M\}$. 
Then, denoting $\tilde\bm = (m_2, m_3, \dots, m_M)$, we get  
\begin{align*}
\|\SD^{\bm}_\BP V\|_{\plainL\infty(U_\BP(\varepsilon))}
= \|\SD^{\tilde\bm}_{\tilde\BP} V\|_{\plainL\infty(U_\BP(\varepsilon))}
\le \|\SD^{\tilde\bm}_{\tilde\BP} V\|_{\plainL\infty(U_{\tilde\BP}(\varepsilon))}. 
\end{align*}
Repeating, if necessary, this procedure we can eliminate all zero components of $\bm$, 
and the clusters, attached to them. 
Thus we assume henceforth that $|m_j|\ge 1$, $j = 1, 2, \dots, M$.

If $|m|=1$, then a direct differentiation gives the formula
\begin{align*}
\SD^m_{\SP_s} V_{k}(x_k)= 
\begin{cases}
0,\quad k\notin \SP_s,\\
\left.\p_x^m V_k(x)\right|_{x = x_k},\quad k\in \SP_s.
\end{cases}
\end{align*}
This function is $\plainC\infty$ on $U_{\SP_s}(\varepsilon)$, and 
further differentiation gives the same formula for all 
$|m|\ge 1$.
Similarly, 
\begin{align*}
\SD^m_{\SP_s} W_{kj}(x_k-x_j) = 
\begin{cases}
0,\quad k, j\in \SP_s\quad  {\rm or}\quad k, j\notin \SP_s,\\
\left.\p_x^m W_{kj}(x)\right|_{x = x_k-x_j},\quad k\in \SP_s, j\notin \SP_s.
\end{cases}
\end{align*}
Consequently, 
\begin{align*}
\SD_{\BP}^{\bm} V_k(x_k) = 
\begin{cases}
0,\quad k\notin \cap_s \SP_s,\\
\left.\p_x^{m_1+m_2+\dots+m_M} V_k(x)\right|_{x = x_k},\quad k\in \cap_s \SP_s.
\end{cases}
\end{align*}
and
\begin{align*}
\SD^{\bm}_{\BP} W_{kj}(x_k-x_j) = 
\begin{cases}
0,\quad j, k\in \cap_s \SP_s\quad  {\rm or}\quad k, j\notin \cap_s \SP_s,\\
\left.\p_x^{m_1+m_2+\dots+ m_M} W_{kj}(x)\right|_{x = x_k-x_j},
\quad k\in \cap_s \SP_s, j\notin \cap_s \SP_s.
\end{cases}
\end{align*}
These functions are $\plainC\infty$ on $U_{\BP}(\varepsilon)$, and, by the definition \eqref{eq:genpot1},  
it follows from \eqref{eq:potan}
that 
\begin{align*}
\|\SD_{\BP}^{\bm} V\|_{\plainL\infty(U_{\BP}(\varepsilon))}\le 
A_0^{1+|\bm|} (1+|\bm|)^{|\bm|}.
\end{align*}
This bound coincides with \eqref{eq:V}.
\end{proof}

Now we proceed to the study of the derivatives $\SD_\BP^{\bm}\psi$. 

\subsection{Regularity of the derivatives $\SD^{\bm}_\BP\psi$} 
As before, let $\psi\in\plainH2(\R^{3N})$ be an eigenfunction of 
the operator $H$, with the eigenvalue $E\in\R$, i.e. $H_E\psi = 0$, where $H_E = H-E$. 
Let $\BP = \{\SP_1, \SP_2, \dots, \SP_M\}$ be a cluster set, and 
consider the function $u_\bm =\SD_{\BP}^{\bm}\psi$ with some $\bm\in\mathbb N_0^{3M}$. 
As an eigenfunction of $H$, the function $\psi$ is $\plainH2(\R^{3N})$, and, 
by elliptic regularity, 
it is 
smooth (even analytic) on the set  
$\mathcal S = \{x_k\not= 0, x_k \not= x_j:\ j, k = 1, 2, \dots, N\}$. 
Our objective is to show that 
the function $\psi$ has derivatives 
$u_\bm$ of all orders $|\bm|\ge 0$ 
on the larger set $U_\BP(0)\supset \mathcal S$, and that $u_\bm\in \plainH2(U_\BP(\varepsilon))$ 
for all $\varepsilon>0$.

Let us begin with a formal calculation. 
Since $H_E\psi = 0$, by Leibniz's formula, we obtain  
\begin{align}\label{eq:he}
H_E u_{\bm}  
= &\ [H_E, \SD_{\BP}^{\bm}]\psi = [V, \SD_{\BP}^{\bm}]\psi \notag\\
= &\ -\sum_{
\substack{0\le \bs \le \bm\\|\bs|\ge 1}}
 {\bm\choose\bs} \SD_{\BP}^{\bs} V\,  u_{\bm-\bs} = f_\bm. 
\end{align}
Thus $u_\bm$ is a solution of the equation $H_E u_\bm = f_\bm$. 
The next assertion gives this statement a precise meaning.  

First we observe that 
by Lemma \ref{lem:V}, $\|\SD_\BP^{\bs}V\|_{\plainL\infty(U_\BP(\varepsilon))}<\infty$ 
for every $\bs: |\bs|\ge 1,$ and all $\varepsilon>0$. 
Therefore, if $u_{\bm}\in\plainL2(U_\BP(\varepsilon))$ for 
all $\bm: |\bm|\le p$, then   
$f_\bm\in \plainL2(U_\BP(\varepsilon))$ for all $|\bm|\le p+1$. 

\begin{lem}\label{lem:weaksol}
Suppose that $u_\bm\in \plainL2(U_\BP(\varepsilon))$ for some $\varepsilon>0$ and all 
$\bm\in\mathbb N_0^{3M}$ 
such that $|\bm|\le p$ with some $p\in\mathbb N_0$. Then $u_\bm$ is a weak 
solution of the equation $H_E u_\bm = f_\bm$, that is, it 
satisfies the identity
\begin{align}\label{eq:weaksol}
\int u_{\bm} \overline{H_E \eta}d\bx = \int f_\bm\overline{\eta}d\bx,  
\end{align}
for all $\eta\in \plainC\infty_0(U_\BP(\varepsilon))$ and all $\bm: |\bm|\le p$.
\end{lem}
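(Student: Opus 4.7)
The approach is to mollify $\psi$, apply classical Leibniz to the smooth mollification, and pass to the limit. The main delicacy is that although the potential $V$ itself has singularities inside $U_\BP(\varepsilon)$ (e.g.\ at cluster-internal coalescences), the cluster derivatives $\SD_\BP^\bs V$ with $|\bs|\ge 1$ are $\plainC\infty$ on $U_\BP(\varepsilon)$ and locally bounded there by Lemma \ref{lem:V}; this is precisely what enables the commutator calculation.

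Let $\rho_\delta \in \plainC\infty_0(\R^{3N})$ be a standard mollifier and set $\psi_\delta := \rho_\delta * \psi$. Using $H_E\psi = 0$, one computes $H_E \psi_\delta = V\psi_\delta - \rho_\delta * (V\psi) =: R_\delta$, and Lemma \ref{lem:reg} together with $\psi_\delta \to \psi$ in $\plainH1$ yields $R_\delta \to 0$ in $\plainL2(\R^{3N})$. Since $\psi_\delta$ is smooth, Leibniz on $U_\BP(\varepsilon)$ combined with $[\SD_\BP^\bm, -\Delta]=0$ gives the pointwise identity
\begin{equation*}
H_E(\SD_\BP^\bm \psi_\delta) = \SD_\BP^\bm R_\delta + f_\bm^\delta,\qquad
f_\bm^\delta := -\sum_{\substack{\bs\le\bm \\ |\bs|\ge 1}}\binom{\bm}{\bs}(\SD_\BP^\bs V)(\SD_\BP^{\bm-\bs}\psi_\delta),
\end{equation*}
valid on $U_\BP(\varepsilon)$.

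Now fix $\eta \in \plainC\infty_0(U_\BP(\varepsilon))$ and pair the previous identity with $\bar\eta$. Using the formal self-adjointness of $H_E$ on the left (integration by parts for $-\Delta$ is classical since $\SD_\BP^\bm\psi_\delta$ is smooth, while all of $\Delta\eta, V\eta, E\eta$ lie in $\plainL2$ with compact support), and distributional integration by parts on the right to move $\SD_\BP^\bm$ off $R_\delta$ onto $\bar\eta$ (valid since $\bar\eta \in \plainC\infty_0$ and $\SD_\BP^\bm$ has real constant coefficients), we arrive at
\begin{equation*}
\int \SD_\BP^\bm \psi_\delta\,\overline{H_E \eta}\,d\bx
= (-1)^{|\bm|}\int R_\delta\,\SD_\BP^\bm\bar\eta\,d\bx + \int f_\bm^\delta\,\bar\eta\,d\bx.
\end{equation*}
For any $\bm' \le \bm$, the hypothesis $u_{\bm'} \in \plainL2(U_\BP(\varepsilon))$ together with the compactness of $\supp\eta \subset U_\BP(\varepsilon)$ ensures that $\SD_\BP^{\bm'}\psi_\delta = \rho_\delta * u_{\bm'} \to u_{\bm'}$ in $\plainL2(\supp\eta)$ as $\delta \to 0$; combined with the $\plainL\infty(\supp\eta)$-bounds on $\SD_\BP^\bs V$ from Lemma \ref{lem:V} and with $R_\delta \to 0$ in $\plainL2$, passing to the limit yields \eqref{eq:weaksol}.
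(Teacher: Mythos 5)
Your proof is correct, but it takes a genuinely different route from the paper. The paper proves \eqref{eq:weaksol} by induction on $|\bm|$: assuming the identity for $u_\bm$, it writes $u_{\bm+\bl}=\SD_\BP^{\bl}u_\bm$, integrates by parts to move the single derivative $\SD_\BP^{\bl}$ onto the test function, commutes it through $H_E$ (producing the extra term $(\SD_\BP^{\bl}V)\eta$), and then re-assembles $f_{\bm+\bl}$ via a binomial-coefficient identity. You instead mollify: setting $\psi_\delta=\rho_\delta*\psi$, you obtain the exact identity $H_E\psi_\delta=R_\delta$ with $R_\delta=V\psi_\delta-\rho_\delta*(V\psi)\to 0$ in $\plainL2$ (this is where \eqref{eq:form}/Lemma \ref{lem:reg} enters, controlling the commutator of mollification with multiplication by $V$), apply the Leibniz/commutator computation to the smooth $\psi_\delta$ on $U_\BP(\varepsilon)$ in one stroke for arbitrary $\bm$, pair with $\eta$, and pass to the limit using the hypothesis $u_{\bm'}\in\plainL2(U_\BP(\varepsilon))$, $\bm'\le\bm$, to get $\SD_\BP^{\bm'}\psi_\delta\to u_{\bm'}$ in $\plainL2$ on $\supp\eta$. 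Your approach trades the paper's induction and binomial bookkeeping for the error term $R_\delta$ and the interior-mollification argument; both hinge on the same key input, namely that $\SD_\BP^{\bs}V$, $|\bs|\ge1$, is smooth and bounded on $U_\BP(\varepsilon)$ (Lemma \ref{lem:V}), together with $V v\in\plainL2$ for $v\in\plainH1$.

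Two small points of precision, neither a gap. First, the identity $H_E(\SD_\BP^{\bm}\psi_\delta)=\SD_\BP^{\bm}R_\delta+f_\bm^\delta$ is an identity in $\mathcal D'(U_\BP(\varepsilon))$ rather than pointwise (since $R_\delta$ contains $V\psi_\delta$), and it uses tacitly that the \emph{distributional} cluster derivatives of $V$ on $U_\BP(\varepsilon)$ coincide with the classical expressions of Lemma \ref{lem:V} (the intra-cluster singular terms are translation-invariant along the cluster direction, the remaining terms are smooth there); the paper relies on the same fact, at the same level of informality. Second, $\SD_\BP^{\bm'}\psi_\delta=\rho_\delta*u_{\bm'}$ only makes literal sense on $\supp\eta$ once $\delta$ is smaller than the distance from $\supp\eta$ to the complement of $U_\BP(\varepsilon)$, so that the mollifier only samples the region where $\SD_\BP^{\bm'}\psi$ is the $\plainL2$ function $u_{\bm'}$; your appeal to compactness of $\supp\eta\subset U_\BP(\varepsilon)$ is exactly the needed point and should just be stated in this form.
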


\begin{proof} 
As noted before the lemma, 
$f_\bm\in \plainL2(U_\BP(\varepsilon))$ for all $|\bm|\le p+1$, so that 
both sides of \eqref{eq:weaksol} are finite. Throughout the proof we use the fact that 
$\SD_\BP^{\bs}V\in \plainC\infty$ on $U_\BP(\varepsilon)$ for all $\bs: |\bs|\ge 1$, see 
Lemma \ref{lem:V}.

We prove the identity \eqref{eq:weaksol} by induction. 
First note that \eqref{eq:weaksol} 
holds for $\bm  = 0$, since $\psi$ is an eigenfunction and $f_0 = 0$. 
Suppose that it holds for all $\bm: |\bm|\le k,$ with some 
$k\le p-1$. We need to show that this implies \eqref{eq:weaksol} for 
$\bm+\bl$, where $\bl\in\mathbb N_0^{3M}: |\bl|=1$. 
As $u_{\bm+\bl} = \SD_\BP^{\bl}u_\bm$, we can integrate by parts, 
using \eqref{eq:weaksol} for $|\bm|\le k$:
\begin{align}\label{eq:plus1}
\int u_{\bm+\bl} \overline{H_E\eta} d\bx
= &\ - \int u_\bm \overline{\SD_\BP^{\bl} H_E\eta} d\bx 
= - \int u_\bm \overline{H_E\SD_\BP^{\bl}\eta} d\bx 
- \int u_\bm \big(\SD_\BP^{\bl}V\big) \overline\eta d\bx\notag\\
= &\ - \int f_\bm \overline{\SD_\BP^\bl\eta} d\bx 
- \int u_\bm \big(\SD_\BP^{\bl}V\big) \overline{\eta} d\bx.
\end{align}  
Integrating by parts and using definition of $f_{\bm}$ 
(see \eqref{eq:he}), we get for the first integral on the right-hand side that 
\begin{align*}
\int f_\bm \overline{\SD_\BP^\bl\eta}\, d\bx = \sum_{\bs:|\bs|\ge 1}^{\bm}{\bm\choose\bs} 
\int \big(
(\SD_\BP^{\bl+\bs} V) u_{\bm-\bs} + (\SD_\BP^{\bs}V) u_{\bm+\bl-\bs}
\big)\overline{\eta}\, d\bx. 
\end{align*}
Standard calculations involving binomial coefficients, show that 
%
%
\begin{align*}
\int f_\bm \overline{\SD_\BP^\bl\eta}\, d\bx = & \sum_{\bs:|\bs|\ge 1}^{\bm+\bl}{\bm+\bl\choose\bs} 
\int 
(\SD_\BP^{\bs} V) u_{\bm+\bl-\bs} \overline{\eta}\,d\bx 
- \int (\SD_\BP^{\bl}V) u_{\bm}\overline{\eta}\, d\bx\\
= &\ - \int f_{\bm+\bl} \overline{\eta} \, d\bx  - \int (\SD_\BP^{\bl}V) u_{\bm}\overline{\eta}\, d\bx.
\end{align*}
Substituting this into \eqref{eq:plus1}, we obtain that 
\begin{align*}
\int u_{\bm+\bl} \overline{H_E\eta} \,d\bx 
= 
%
%
\int f_{\bm+\bl}\overline{\eta} \,d\bx, 
\end{align*}
which coincides with \eqref{eq:weaksol} for $\bm+\bl$. Now by induction we conclude that \eqref{eq:weaksol} 
holds for all $\bm:|\bm|\le p$, as claimed.
\end{proof}

\begin{thm} 
Let $E$ be an eigenvalue of $H$ and let $\psi$ be the associated 
eigenfunction. 
For each $\varepsilon>0$ the function $u_\bm = \SD_\BP^{\bm}\psi$ 
belongs to $\plainH2(U_\BP(\varepsilon))$ for 
all $\bm\in \mathbb N_0^{3M}$. 
\end{thm}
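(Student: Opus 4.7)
The plan is to proceed by induction on $|\bm|$. The base case $|\bm|=0$ is immediate, since $u_0=\psi\in\plainH2(\R^{3N})$ and the restriction to $U_\BP(\varepsilon)$ is trivially in $\plainH2(U_\BP(\varepsilon))$ for every $\varepsilon>0$. Assuming the conclusion for all multi-indices of order at most $p$, I would fix some $\bm$ with $|\bm|=p+1$, fix $\varepsilon>0$, and choose an auxiliary $\varepsilon_0\in(0,\varepsilon)$, with the aim of deducing the conclusion at $\varepsilon$ from inductive information available at $\varepsilon_0$.

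First I would extract lower-order regularity of $u_\bm$ on the larger set. Writing $\bm=\bm'+\bl$ with $|\bl|=1$ and $|\bm'|=p$, the identity $u_\bm=\SD_\BP^{\bl} u_{\bm'}$ together with the inductive hypothesis $u_{\bm'}\in\plainH2(U_\BP(\varepsilon_0))$ gives $u_\bm\in\plainH1(U_\BP(\varepsilon_0))$, and in particular $u_\bm\in\plainL2(U_\BP(\varepsilon_0))$. The same argument produces $u_{\tilde\bm}\in\plainL2(U_\BP(\varepsilon_0))$ for every $\tilde\bm$ with $|\tilde\bm|\le p+1$, so Lemma \ref{lem:weaksol} applies at order $p+1$ and shows that $u_\bm$ is a weak solution of $H_E u_\bm=f_\bm$ on $U_\BP(\varepsilon_0)$. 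Since each coefficient $\SD_\BP^{\bs} V$ with $|\bs|\ge 1$ is bounded on $U_\BP(\varepsilon_0)$ by Lemma \ref{lem:V}, and each $u_{\bm-\bs}$ with $|\bs|\ge 1$ lies in $\plainL2(U_\BP(\varepsilon_0))$, one has $f_\bm\in\plainL2(U_\BP(\varepsilon_0))$.

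Next I would globalize the weak equation by a cutoff. Because the separation margins defining $U_\BP(\varepsilon)$ exceed those of $U_\BP(\varepsilon_0)$, one can construct $\chi\in\plainC\infty(\R^{3N})$, with $\chi$ and all its derivatives uniformly bounded, $\chi\equiv 1$ on $U_\BP(\varepsilon)$, and $\supp\chi\subset U_\BP(\varepsilon_0)$, as a finite product of compositions of the function $\xi$ from \eqref{eq:xi} with the maps $\bx\mapsto|x_j-x_k|$ and $\bx\mapsto|x_j|$ (these compositions are globally smooth because each $\xi$-factor vanishes before the respective singularity is reached). Set $v=\chi u_\bm$, extended by zero; then $v\in\plainH1(\R^{3N})$. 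Testing \eqref{eq:weaksol} against $\chi\eta$ for $\eta\in\plainC\infty_0(\R^{3N})$ and handling the commutator $[H_E,\chi]=[-\Delta,\chi]=-(\Delta\chi)-2\nabla\chi\cdot\nabla$ by an integration by parts (legitimate because $u_\bm\in\plainH1$ on $\supp\nabla\chi\subset U_\BP(\varepsilon_0)$), one finds that $v$ weakly satisfies
\begin{equation*}
H_E v = \chi f_\bm - (\Delta\chi)\, u_\bm - 2\,\nabla\chi\cdot\nabla u_\bm \in \plainL2(\R^{3N}).
\end{equation*}
Hence $Hv\in\plainL2(\R^{3N})$ and $v\in\plainH1(\R^{3N})$, so Lemma \ref{lem:reg} promotes $v$ to $\plainH2(\R^{3N})$. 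Since $v=u_\bm$ on $U_\BP(\varepsilon)$, this gives $u_\bm\in\plainH2(U_\BP(\varepsilon))$ and closes the induction.

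The main obstacle is bootstrapping from the distributional identity of Lemma \ref{lem:weaksol} to genuine $\plainH2$-regularity on the open, unbounded set $U_\BP(\varepsilon)$. Classical elliptic regularity with smooth coefficients is unavailable, because inside $U_\BP(\varepsilon)$ the potential $V$ remains singular along the intra-cluster coalescences $x_j=x_k$ with $j,k\in\SP_s$; this is precisely why the argument must rely on the global $\plainH1$-to-$\plainH2$ estimate of Lemma \ref{lem:reg}, which is tailored to Coulomb-type singularities via Hardy's inequality. The cutoff device converts the local weak equation into a global one on $\R^{3N}$, at the cost of producing explicit commutator source terms whose $\plainL2$-membership on $\R^{3N}$ must be verified using the inductively gained $\plainH1$-regularity of $u_\bm$ on the larger set $U_\BP(\varepsilon_0)$.
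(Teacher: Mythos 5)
Your proposal is correct and follows essentially the same route as the paper: induction in $|\bm|$, lower-order regularity of $u_\bm$ on a slightly larger set $U_\BP(\varepsilon_0)$ (the paper takes $\varepsilon_0=\varepsilon/2$), the weak equation $H_E u_\bm=f_\bm$ from Lemma \ref{lem:weaksol}, multiplication by a smooth cutoff equal to $1$ on $U_\BP(\varepsilon)$ and supported in the larger set, and then the global $\plainH1$-to-$\plainH2$ estimate of Lemma \ref{lem:reg} applied to $\chi u_\bm$. Your treatment of the commutator terms and of the cutoff construction is just a more explicit version of the same steps.
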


\begin{proof} 
For brevity throughout the proof we use the notation 
$\CH^\a_\varepsilon = \plainH{\a}(U_\BP(\varepsilon))$, $\a = 1, 2$,  
$\CL^2_\varepsilon = \plainL{2}(U_\BP(\varepsilon))$.

The claim holds for $\bm =0$, since $\psi\in \plainH2(\R^{3N})$ is an eigenfunction 
and $f_0 = 0$. Suppose that it holds 
for all $\bm: |\bm|\le p\in \mathbb N_0$. We need to show that this implies that 
$u_{\bm+\bl}\in \CH^2_\varepsilon$, for all $\varepsilon>0$, where 
$\bl\in\mathbb N_0^{3M}:|\bl| = 1$ and $|\bm| = p$.

Since $u_\bm\in \CH^2_\varepsilon$, we have $u_{\bm+\bl}\in \CH^1_{\varepsilon}\subset \CL^2_\varepsilon$ 
for all $\varepsilon>0$. 
Thus, by Lemma \ref{lem:weaksol}, 
$u_{\bm+\bl}$ satisfies \eqref{eq:weaksol} with $f_{\bm+\bl}\in\CL^2_\varepsilon$. 
In order to show that $u_{\bm+\bl}\in\CH^2_\varepsilon$, for all $\varepsilon>0$, 
we apply Lemma \ref{lem:reg}. To this end 
let $\eta_1\in \plainC\infty(\R^{3N})$ 
be a function such that $\eta_1(\bx) = 0$ for $\bx\in \R^{3N}\setminus U_\BP(\varepsilon/2)$ and 
$\eta_1(\bx) = 1$ for $\bx\in U_\BP(\varepsilon)$. Thus, by \eqref{eq:weaksol}, 
\begin{align*}
H_E (u_{\bm+\bl} \eta_1) = &\ \eta_1 H_E u_{\bm+\bl} 
- 2\nabla\eta_1\nabla u_{\bm+\bl} - u_{\bm+\bl}\Delta \eta_1\\
= &\ \eta_1 f_{\bm+\bl} - 2\nabla\eta_1\nabla u_{\bm+\bl} - u_{\bm+\bl}\Delta \eta_1.
\end{align*}
Since $u_{\bm+\bl}\in \CH^1_{\varepsilon/2}$, the right-hand side belongs to $\plainL2(\R^{3N})$. 
Therefore, $H(u_{\bm+\bl}\eta)\in\plainL2(\R^{3N})$, and 
by Lemma \ref{lem:reg}, $u_{\bm+\bl}\eta_1\in \plainH2(\R^{3N})$. 
As a consequence, 
$u_{\bm+\bl}\in \CH^2_\varepsilon$, as required.     
Now, by induction, $u_\bm\in\CH^2_\varepsilon$ for all $\bm\in\mathbb N_0^{3M}$. 
\end{proof}

\subsection{Eigenfunction estimates} 
Apart from the qualitative fact of smoothness of $u_\bm = \SD_\BP^\bm\psi$, 
now we need to establish explicit estimates for $u_\bm$. 
As before we denote 
$H_E = H-E$ with an arbitrary $E\in\R$.

\begin{lem}\label{lem:coer}
Let $v\in \plainH2(U_\BP(\varepsilon))$ 
and let $\bm\in \mathbb N_0^{3N}$, $|\bm|\le 2$. Then for any $\varepsilon>0$, 
$\d\in (0, 1)$ we have 
\begin{align*}
\d^{|\bm|} \| \p^{\bm} v\|_{\plainL2(U_{\BP}(\varepsilon + \d))}
\le C_0 
\bigg(
\d^2 \|H v\|_{\plainL2(U_{\BP}(\varepsilon))} 
+ \max_{\substack{\bj\in\mathbb N_0^{3N}\\|\bj|\le 1}}\d^{|\bj|}
\|\p^{\bj} v\|_{\plainL2(U_{\BP}(\varepsilon))}
\bigg),
\end{align*}
with a constant $C_0$ 
independent of the function 
$v$, constants $\varepsilon$, $\d$ and of the cluster set $\BP$.  
\end{lem}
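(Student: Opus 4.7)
The plan is a standard Caccioppoli/interior elliptic estimate, scaled carefully in $\d$. For $|\bm|\le 1$ the inequality is trivial because the right-hand side already contains the norm $\d^{|\bm|}\|\p^{\bm} v\|_{\plainL2(U_\BP(\varepsilon))}$ and $U_\BP(\varepsilon+\d)\subset U_\BP(\varepsilon)$. Hence the only case to treat is $|\bm|=2$.

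First I would construct a cut-off function $\chi\in \plainC\infty(\R^{3N})$ with the properties
$0\le\chi\le 1$, $\chi=1$ on $U_\BP(\varepsilon+\d)$, $\supp\chi\subset U_\BP(\varepsilon)$, and
$\|\p^{\ba}\chi\|_{\plainL\infty}\le C\d^{-|\ba|}$ for $|\ba|\le 2$,
with $C$ independent of $\varepsilon$, $\d$, and $\BP$. Such a $\chi$ is obtained as a finite product of one-dimensional smooth cut-offs built out of $\xi$ in \eqref{eq:xi}, applied to the functions $|x_j-x_k|$ and $|x_j|$ appearing in the definitions \eqref{eq:xp}, \eqref{eq:tp}, \eqref{eq:up} of $U_\BP(\varepsilon)$. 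The total number of factors is bounded by a constant depending only on $N$, so the $\d^{-|\ba|}$ bound on the derivatives holds uniformly in $\BP$.

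Next I would apply Lemma \ref{lem:reg} to the function $\chi v\in \plainH1(\R^{3N})$. Since $H(\chi v)=\chi Hv - 2\nabla\chi\cdot\nabla v - v\,\Delta\chi\in\plainL2(\R^{3N})$ (every term is compactly supported inside $U_\BP(\varepsilon)$ where $v\in \plainH2$), the estimate \eqref{eq:domain} gives
\begin{align*}
\|\chi v\|_{\plainH2(\R^{3N})}\le C\bigl(\|H(\chi v)\|_{\plainL2}+\|\chi v\|_{\plainL2}\bigr).
\end{align*}
Using the $\plainL\infty$-bounds on $\chi$, $\nabla\chi$, $\Delta\chi$ and restricting all integrals to $U_\BP(\varepsilon)$, this becomes
\begin{align*}
\|\chi v\|_{\plainH2(\R^{3N})}\le C'\Bigl(\|Hv\|_{\plainL2(U_\BP(\varepsilon))}+\d^{-1}\|\nabla v\|_{\plainL2(U_\BP(\varepsilon))}+\d^{-2}\|v\|_{\plainL2(U_\BP(\varepsilon))}\Bigr).
\end{align*}
Multiplying through by $\d^2$ and noting that $\chi=1$ on $U_\BP(\varepsilon+\d)$, so that $\|\p^{\bm}v\|_{\plainL2(U_\BP(\varepsilon+\d))}\le\|\chi v\|_{\plainH2(\R^{3N})}$ for $|\bm|=2$, yields precisely the asserted bound with $C_0=C'$.

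The main technical point is producing the cut-off $\chi$ with the uniform bounds $\|\p^{\ba}\chi\|_{\plainL\infty}\le C\d^{-|\ba|}$ that are independent of $\BP$, so that the constant $C_0$ does not depend on the cluster set. The scaling in the argument of $\xi$ (as in \eqref{eq:pu}, but with $\varepsilon/(4N)$ replaced by $\d$) and the fact that the number of factors in the product is controlled by $N$ alone make this construction routine; everything else is a direct application of \eqref{eq:domain}.
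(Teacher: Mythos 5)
Your proposal is correct and follows essentially the same route as the paper: the case $|\bm|\le 1$ is dismissed in the same trivial way, and for $|\bm|=2$ the paper likewise builds a $\d$-scaled cut-off $\eta$ from products of $\xi\bigl((|x_k|-\varepsilon)/\d\bigr)$ and $\xi\bigl((|x_k-x_j|-\varepsilon)/\d\bigr)$, applies \eqref{eq:domain} to $\eta v$, expands $H(\eta v)=\eta Hv-2\nabla\eta\cdot\nabla v-v\,\Delta\eta$, and multiplies by $\d^2$, with uniformity in $\BP$ obtained exactly as you indicate from the finitely many (in $N$) possible factors.
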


\begin{proof}
Let $|\bm|\le 1$. Since $U_\BP(\varepsilon+\d)\subset U_\BP(\varepsilon)$,  we have 
\begin{align*}
\d^{|\bm|} \| \p^{\bm} v\|_{\plainL2(U_{\BP}(\varepsilon + \d))}
\le \max_{\substack{\bj\in\mathbb N_0^{3N}\\|\bj|\le 1}}\d^{|\bj|}
\|\p^{\bj} v\|_{\plainL2(U_{\BP}(\varepsilon))}, 
\end{align*}
so that the required bound holds.

Assume now that $|\bm| = 2$. 
Without loss of generality assume that all clusters $\SP_s\in \BP$, 
$s = 1, 2, \dots, M$,  
are distinct.  
Let $\xi$ be the smooth function 
defined in \eqref{eq:xi}.  
For arbitrary $\varepsilon, \d>0$ define the cut-off 
\begin{align*}
\eta(\bx) = \eta_\BP(\bx)
= \prod_{s=1}^M 
\prod_{\substack{k\in \SP_s\\j\in \SP_s^{\rm c}}}
\xi\bigg(\frac{|x_k|-\varepsilon}{\d}\bigg)
\xi\bigg(\frac{|x_k-x_j|-\varepsilon}{\d}\bigg).
\end{align*}
Then $\supp\eta\subset U_\BP(\varepsilon)$ and $\eta = 1$ on $U_\BP(\varepsilon+\d)$. 
It is also clear that
\begin{align*}
\max_{\BP}\|\p^{\bk}\eta\|\le C_{\bk}|\d|^{-|\bk|},\quad \forall \bk\in\mathbb N_0^{3N},
\end{align*}
with some positive constants $C_\bk$ independent of $\varepsilon$ and $\d$, 
where the maximum is taken over all sets $\BP$ of distinct clusters. 
Estimate, using the bound 
\eqref{eq:domain}:
\begin{align*}
\| \p^{\bm} v\|_{\plainL2(U_\BP(\varepsilon + \d))}
\le &\ \| \p^{\bm} (v\eta)\|_{\plainL2}
\le C\big(\|H (v\eta)\|_{\plainL2} + \|v\eta\|_{\plainL2}\big)\\ 
\le &\ C\big(\|\eta H v\|_{\plainL2} 
+ \|v\Delta\eta\|_{\plainL2} + 2\|(\nabla\eta)\nabla v\|_{\plainL2}
+ \|v\eta\|_{\plainL2} 
\big)\\
\le &\ \tilde C\big(\|H  v\|_{\plainL2(U_\BP(\varepsilon))} 
+ (\d^{-2}+1)\|v\|_{\plainL2(U_\BP(\varepsilon))}  
+ \d^{-1}\|\nabla v\|_{\plainL2(U_\BP(\varepsilon))} \big),
\end{align*}
with constants independent of $\varepsilon, \delta$.
Multiplying by $\d^2$, we get the required estimate.
\end{proof}

%
%

Let $E$ be an eigenvalue of $H$ and $\psi$ be the associated eigenfunction. 
Now we use Lemma \ref{lem:coer} for the function 
$v = u_\bm = \SD_\BP^\bm\psi\in \plainH2\big(U_\BP(\varepsilon)\big), \varepsilon>0$. 

\begin{cor}
There exists a constant $L_2>0$ 
independent of the cluster set $\BP$ and of the parameters $\varepsilon>0,\d\in (0, 1)$, 
such that for all 
$\bm\in \mathbb N_0^{3M}$, $\bk, \bl\in \mathbb N_0^{3N}$, $|\bk|+ |\bl|\le 2$,  
 we have 
\begin{align}\label{eq:coer}
\d^{|\bk| + |\bl|} \| \p^{\bk}  
\SD_\BP^{\bm+\bl}\psi
\|_{\plainL2(U_\BP(\varepsilon + \d))}
\le L_2\bigg(
\d^2 \|f_\bm\|_{\plainL2(U_\BP(\varepsilon))} 
+ \max\limits_{\substack{\bj\in\mathbb N_0^{3N}\\|\bj|\le 1}}\d^{|\bj|}
\|\p^{\bj} \SD_\BP^{\bm}\psi\|_{\plainL2(U_\BP(\varepsilon))}
\bigg).
\end{align}
\end{cor}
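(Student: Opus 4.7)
The plan is to reduce the estimate to a single application of Lemma~\ref{lem:coer} with $v = u_\bm = \SD_\BP^\bm\psi$, which by the preceding theorem belongs to $\plainH2(U_\BP(\varepsilon))$ for every $\varepsilon>0$.

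First I would rewrite $\p^\bk\SD_\BP^{\bm+\bl}\psi$ in terms of ordinary partial derivatives of $u_\bm$. By the definition \eqref{eq:clusterder}, each block of $\SD_\BP^\bl$ is a sum of first-order operators $\sum_{k\in\SP_s}\p_{x_k^i}$, so expanding produces $\SD_\BP^\bl = \sum_\ba c_\ba\,\p^\ba$ with $|\ba|=|\bl|$ and combinatorial coefficients $c_\ba$ bounded by a constant that depends only on $|\bl|\le 2$ and $N$, and in particular \emph{not} on $\BP$. Consequently
$$
\p^\bk\SD_\BP^{\bm+\bl}\psi \;=\; \p^\bk\SD_\BP^\bl u_\bm \;=\; \sum_\ba c_\ba\,\p^{\bk+\ba} u_\bm,
$$
with $|\bk+\ba|=|\bk|+|\bl|\le 2$. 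Hence the $\plainL2(U_\BP(\varepsilon+\d))$-norm on the left is majorised by a constant (depending only on the a priori bound $|\bk|+|\bl|\le 2$) times $\max_{|\bn|\le 2}\|\p^\bn u_\bm\|_{\plainL2(U_\BP(\varepsilon+\d))}$.

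Next I would feed the eigenvalue equation into Lemma~\ref{lem:coer}. By \eqref{eq:he} we have $H_E u_\bm = f_\bm$, so $Hu_\bm = f_\bm + Eu_\bm$ and
$$
\|Hu_\bm\|_{\plainL2(U_\BP(\varepsilon))} \;\le\; \|f_\bm\|_{\plainL2(U_\BP(\varepsilon))} + |E|\,\|u_\bm\|_{\plainL2(U_\BP(\varepsilon))}.
$$
Applying Lemma~\ref{lem:coer} to each $\p^\bn u_\bm$ with $|\bn|\le 2$, multiplying by $\d^{|\bk|+|\bl|}$, and substituting the above bound for $\|Hu_\bm\|$ produces exactly the desired inequality. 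The only stray term is $\d^2|E|\,\|u_\bm\|_{\plainL2(U_\BP(\varepsilon))}$, which (using $\d<1$) is absorbed into the $|\bj|=0$ entry of the maximum on the right at the cost of enlarging the overall constant by a factor $1+|E|$. Taking $L_2 = C C_0 (1+|E|)$ closes the argument.

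The constants remain independent of $\BP$, $\varepsilon$, and $\d$ because the $c_\ba$ depend only on $|\bl|\le 2$ and on $N$, and the constant $C_0$ furnished by Lemma~\ref{lem:coer} is already uniform in $\BP$. No serious obstacle is anticipated: the corollary is essentially a bookkeeping repackaging of Lemma~\ref{lem:coer} together with the commutator identity \eqref{eq:he}, in a form suited to the induction on $|\bm|$ that will drive the subsequent eigenfunction estimates.
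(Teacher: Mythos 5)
Your argument is correct and follows essentially the same route as the paper: apply Lemma \ref{lem:coer} to $v=u_\bm=\SD_\BP^{\bm}\psi$, bound $\|Hu_\bm\|_{\plainL2(U_\BP(\varepsilon))}$ by $\|f_\bm\|_{\plainL2(U_\BP(\varepsilon))}+|E|\,\|u_\bm\|_{\plainL2(U_\BP(\varepsilon))}$ via \eqref{eq:he}, and absorb the $|E|$-term into the maximum using $\d<1$. The only difference is that you make explicit the expansion of $\SD_\BP^{\bl}$ into at most $N^{|\bl|}$ ordinary derivatives (with constants independent of $\BP$), a step the paper leaves implicit; this is a harmless elaboration, not a different method.
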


\begin{proof}
Apply Lemma \ref{lem:coer} to the function $v= u_\bm$ and estimate 
\begin{align*}
\|H u_\bm\|_{\plainL2(U_\BP(\varepsilon))} 
\le &\ \| H_Eu_\bm\|_{\plainL2(U_\BP(\varepsilon))} + |E|\|u_\bm\|_{\plainL2(U_\BP(\varepsilon))} \\
= &\ \|f_\bm\|_{\plainL2(U_\BP(\varepsilon))} + |E|\|u_\bm\|_{\plainL2(U_\BP(\varepsilon))}.
\end{align*}
\end{proof}

Now we use the bound \eqref{eq:coer} to obtain estimates 
for the function $u_{\bm}$ 
with arbitrary $\bm\in\mathbb N_0^{3M}$. 
Let $A_0$, $L_2$ and $L_3$ be the constants featuring in 
\eqref{eq:V}, \eqref{eq:coer} and \eqref{eq:choosepow1} respectively. Define 
\begin{align}\label{eq:L0}
A_1 = 2 A_0 + L_2(L_3 A_0 + 1) + \max\limits_{\bj: |\bj|\le 1}
\|\p^{\bj} \psi\|_{\plainL2(\R^{3N})}.
\end{align}
Thus defined constant depends on the eigenvalue $E$ and $\varepsilon>0$, but is independent 
of the cluster set $\BP$ and of $\delta\in (0, 1]$. 

\begin{lem} 
Let the constant $A_1$ be as defined in \eqref{eq:L0}. Then 
for all $\bm\in \mathbb N_0^{3M}$, all 
$\bk\in\mathbb N_0^{3N}$, $|\bk|\le 1$, 
and all $\varepsilon>0$ and  $\d>0$ such that $\d(|\bm|+1)\le 1$, we have 
\begin{align}\label{eq:induction}
\|\p^{\bk}\SD_{\BP}^{\bm}\psi\|_{\plainL2(U_{\BP}(\varepsilon+(|\bm|+1)\d)} 
\le A_1^{|\bm|+1} \d^{-|\bm| - |\bk|}.
\end{align}
\end{lem}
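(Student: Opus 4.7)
The plan is induction on $|\bm|$, with the coercivity inequality \eqref{eq:coer} as the main tool and Lemma \ref{lem:V} controlling the source term $f_{\tilde\bm}$ produced by commuting $H_E$ through $\SD_\BP^{\tilde\bm}$. The base case $|\bm|=0$ is immediate: since $U_\BP(\varepsilon+\d)\subset\R^{3N}$, $|\bk|\le 1$, and $\d\le 1$, the quantity $\|\p^\bk\psi\|_{\plainL2(U_\BP(\varepsilon+\d))}$ is dominated by the third summand in \eqref{eq:L0}, which is at most $A_1\le A_1\d^{-|\bk|}$.

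For the inductive step, fix $\bm$ with $|\bm|\ge 1$ and assume \eqref{eq:induction} for all cluster multi-indices of strictly smaller order. Split $\bm=\tilde\bm+\bl$ with $|\bl|=1$ and apply \eqref{eq:coer} to $u_{\tilde\bm}$, replacing the parameter $\varepsilon$ there by $\varepsilon+|\bm|\d$ while keeping $\d$ the same. The left-hand side of \eqref{eq:coer} is then precisely $\d^{|\bk|+1}\|\p^\bk\SD^\bm_\BP\psi\|_{\plainL2(U_\BP(\varepsilon+(|\bm|+1)\d))}$, matching the quantity to be estimated, and its right-hand side splits into two pieces.

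The easier piece, $\max_{|\bj|\le 1}\d^{|\bj|}\|\p^\bj\SD^{\tilde\bm}_\BP\psi\|_{\plainL2(U_\BP(\varepsilon+|\bm|\d))}$, is controlled by the inductive hypothesis applied to $\tilde\bm$ (since $\varepsilon+|\bm|\d=\varepsilon+(|\tilde\bm|+1)\d$), yielding at most $A_1^{|\bm|}\d^{1-|\bm|}$. The harder piece $\d^2\|f_{\tilde\bm}\|$ is expanded via \eqref{eq:he}; each $\|\SD^\bs_\BP V\|_{\plainL\infty(U_\BP(\varepsilon+|\bm|\d))}$ is bounded by $A_0^{1+|\bs|}(|\bs|+1)^{|\bs|}$ through Lemma \ref{lem:V}, and each $\|u_{\tilde\bm-\bs}\|_{\plainL2(U_\BP(\varepsilon+|\bm|\d))}$ by the inductive hypothesis at $\bk=0$, exploiting the domain nesting $U_\BP(\varepsilon+|\bm|\d)\subset U_\BP(\varepsilon+(|\tilde\bm-\bs|+1)\d)$. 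After collecting the resulting factors and extracting the common scale $A_1^{|\bm|}\d^{1-|\bm|}$, what remains is a combinatorial sum of the form $\d\sum_{0<\bs\le\tilde\bm}\binom{\tilde\bm}{\bs}A_0(A_0/A_1)^{|\bs|}(|\bs|+1)^{|\bs|}\d^{|\bs|+1}$. The constraint $\d(|\bm|+1)\le 1$ tames the $(|\bs|+1)^{|\bs|}$ factor, and the appendix identity \eqref{eq:choosepow1} then bounds the full sum by $L_3 A_0$. Combining the two pieces, multiplying by $L_2$, and using $A_1\ge L_2(L_3 A_0+1)$ from \eqref{eq:L0}, the right-hand side of \eqref{eq:coer} is at most $A_1^{|\bm|+1}\d^{1-|\bm|}$; dividing by $\d^{|\bk|+1}$ closes the induction.

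The main obstacle is precisely this combinatorial estimate: the growth $(|\bs|+1)^{|\bs|}$ from Lemma \ref{lem:V} is essentially factorial, and crude bounds (e.g., applying the multinomial identity after using $((|\bs|+1)\d)^{|\bs|}\le 1$) produce a factor like $(1+A_0/A_1)^{|\tilde\bm|}$ that cannot be absorbed by $\d^2\le(|\bm|+1)^{-2}$ uniformly in $|\bm|$. The sharp appendix inequality \eqref{eq:choosepow1} is essential to capture the exact cancellation between the binomial coefficients $\binom{\tilde\bm}{\bs}$, the powers $\d^{|\bs|}$, and the $(|\bs|+1)^{|\bs|}$ factor, and it is precisely to accommodate the constants coming out of that inequality that $A_1$ has been defined with the specific combination in \eqref{eq:L0}.
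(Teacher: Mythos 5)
Your overall scheme -- induction on $|\bm|$ driven by the coercivity bound \eqref{eq:coer}, with Lemma \ref{lem:V} controlling $f_{\tilde\bm}$ and the Stirling-type inequality \eqref{eq:choosepow1} handling the combinatorics -- is the same as the paper's, and your base case and your treatment of the second (``easy'') term on the right of \eqref{eq:coer} are correct. The gap is in how you invoke the induction hypothesis for $u_{\tilde\bm-\bs}$. You apply it with the \emph{same} $\d$ and then use the nesting $U_\BP(\varepsilon+|\bm|\d)\subset U_\BP(\varepsilon+(|\tilde\bm-\bs|+1)\d)$, which only yields $\|u_{\tilde\bm-\bs}\|\le A_1^{p-q+1}\d^{-(p-q)}$ with $p=|\tilde\bm|$, $q=|\bs|$. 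With this input, the quantity you must bound (after extracting $A_1^{p+1}\d^{-p}$) is $A_0\sum_{q=1}^{p}\binom{p}{q}(A_0/A_1)^q(q+1)^q\d^{q+2}$, and applying \eqref{eq:choosepow1} together with $(p+1)\d\le 1$ leaves, per term, the factor $\bigl(\tfrac{p+1}{p-q+1}\bigr)^{p-q}$, which is of size $e^{q}$ (and genuinely attains it as $p\to\infty$). So your sum is only bounded by $L_3A_0\,\d^2\sum_q\,(eA_0/A_1)^q$. Since \eqref{eq:L0} guarantees no more than $A_0/A_1\le\frac12$, the ratio $eA_0/A_1$ may exceed $1$, the sum grows like $(e/2)^{p}$, and $\d^2\le(p+2)^{-2}$ cannot absorb it. Hence your assertion that ``\eqref{eq:choosepow1} bounds the full sum by $L_3A_0$'' does not follow, and the induction does not close with the constant $A_1$ of \eqref{eq:L0}. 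There is no exact cancellation available at this point of your argument: the factor needed to cancel the denominator $(q+1)^q(p-q+1)^{p-q}$ in \eqref{eq:choosepow1} is simply not present in your estimate of $\|u_{\tilde\bm-\bs}\|$.

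The missing idea (and what the paper does) is to use the full strength of the induction hypothesis, which holds for \emph{every} admissible parameter, not just your fixed $\d$: apply \eqref{eq:induction} to $u_{\tilde\bm-\bs}$ with $\tilde\d=(p+1)(p-q+1)^{-1}\d$, admissible because $(p+1)\d\le1$, so that $\varepsilon+(p-q+1)\tilde\d=\varepsilon+(p+1)\d$ and
\begin{align*}
\|u_{\tilde\bm-\bs}\|_{\plainL2(U_\BP(\varepsilon+(p+1)\d))}\le A_1^{p-q+1}(p+1)^{-(p-q)}(p-q+1)^{p-q}\d^{-(p-q)}.
\end{align*}
The extra factor $(p+1)^{-(p-q)}(p-q+1)^{p-q}$ is exactly what matches the denominator in \eqref{eq:choosepow1}; after the cancellation the surviving factor is $((p+1)\d)^q\le1$, the series $\sum_q(A_0/A_1)^q\le1$ by $A_0A_1^{-1}\le\frac12$, and one gets $\d^2\|f_{\tilde\bm}\|\le L_3A_0A_1^{p+1}\d^{-p}$, after which $A_1\ge L_2(L_3A_0+1)$ closes the induction as you intended. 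Alternatively you could keep your same-$\d$ nesting, but then you would have to redefine $A_1$ so that $eA_0/A_1$ is strictly small and re-derive the final constant, which is not what \eqref{eq:L0} provides; as written, your step fails.
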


\begin{proof}
The formula \eqref{eq:induction} holds for 
$\bm = \mathbf 0$. Indeed, since $\d\le 1$, we get 
\begin{align*}
\d^{|\bk|}\|\p^{\bk} \psi\|_{\plainL2(U_{\BP}(\varepsilon+\d))}
\le  
\max\limits_{\bj: |\bj|\le 1}\d^{|\bj|}
\|\p^{\bj} \psi\|_{\plainL2(\R^{3N})} \le A_1.
\end{align*}
Further proof is by induction. 
As before, we use the notation $u_\bm = \SD_\BP^{\bm}\psi$. 
Suppose that \eqref{eq:induction} holds for all 
$\bm\in\mathbb N_0^{3M}$ such that $|\bm|\le p$ with some $p$. 
Our task is to deduce from this that \eqref{eq:induction} holds for all $\bm$, such that $|\bm| = p+1$. 
Precisely, we need to show that if $|\bm|=p$ and $\bl\in \mathbb N_0^{3M}$ is 
such that $|\bl|=1$, then  
\begin{align}\label{eq:induction1}
\|\p^{\bk} u_{\bm+\bl}
\|_{\plainL2(U_{\BP}(\varepsilon+(p+2)\d)} 
\le A_1^{p+2} \d^{-p - |\bk|-1},
\end{align}
for all $\d>0$ such that $(p+2)\d \le 1$.

Since $|\bl|+|\bk| = 1+|\bk|\le 2$, it follows from \eqref{eq:coer} that    
\begin{align}\label{eq:ind}
\d^{|\bk|+1}\|\p^{\bk} u_{\bm + \bl} 
\|_{\plainL2(U_{\BP}(\varepsilon+ (p+2)\d))}
\le &\ L_2 \bigg(
\d^{2} \| f_\bm\|_{\plainL2(U_{\BP}(\varepsilon+(p+1)\d)}\notag\\
&\quad\quad\quad\quad + \max_{\substack{\bj\in\mathbb N_0^{3N}\\|\bj|\le 1}}\d^{|\bj|}
\|\p^{\bj}u_\bm\|_{\plainL2(U_{\BP}(\varepsilon+(p+1)\d))}
\bigg). 
\end{align}
By the induction hypothesis, the second term in the brackets on the right-hand side 
satisfies the bound 
\begin{align}\label{eq:second}  
\max_{\substack{\bj\in\mathbb N_0^{3N}\\|\bj|\le 1}}\d^{|\bj|}
\|\p^{\bj}u_{\bm}\|_{\plainL2(U_{\BP}(\varepsilon+(p+1)\d))}
\le \d^{-p}A_1^{p+1}.  
\end{align}
Let us estimate the first term on the right-hand side of \eqref{eq:ind}.  
First we find suitable bounds for the norms of the functions $u_{\bm-\bs}$, $0\le \bs\le \bm$, $|\bs|\ge 1$, 
featuring in the definition of the function $f_\bm$, see \eqref{eq:he}. Denote $q = |\bs|$. 
Since $|\bm-\bs|\le p$, we can use the induction assumption to obtain 
\begin{align*}
\|u_{\bm-\bs}\|_{\plainL2(U_\BP(\varepsilon+(p-q+1)\tilde\d))}\le A_1^{p-q+1}{\tilde\d}^{-p+q},
\end{align*}
for all $\tilde\d$ such that $(p-q+1)\tilde\d\le 1$. In particular, the value 
$\tilde\d = (p+1)(p-q+1)^{-1}\d$ satisfies the latter requirement, because $(p+1)\d\le 1$. Thus 
\begin{align*}
\|u_{\bm-\bs}\|_{\plainL2(U_{\BP}(\varepsilon+(p+1)\d))}
\le A_1^{p-q+1} (p+1)^{-p+q}(p-q+1)^{p-q} \d^{-p+q}.
\end{align*}
For the derivatives of $V$ we use \eqref{eq:V}, so that
\begin{align*}
\|\SD_{\BP}^{\bs} V\|_{\plainL\infty(U_{\BP}(\varepsilon+(p+1)\d))}
\le 
\|\SD_{\BP}^{\bs} V\|_{\plainL\infty(U_{\BP}(\varepsilon))}
\le  A_0^{q+1} (q+1)^q.
\end{align*}
Using the definition of $f_\bm$, see 
\eqref{eq:he}, and 
putting together the two previous estimates, we obtain
\begin{align*}
\|f_\bm\|_{\plainL2(U_{\BP}(\varepsilon+(p+1)\d))}
\le \sum_{q=1}^p \sum_{|\bs| = q}{\bm\choose\bs} A_0^{q+1} (q+1)^q  
A_1^{p-q+1} (p+1)^{-p + q}(p-q+1)^{p-q}
\d^{-p+q}. 
\end{align*}
In view of \eqref{eq:choose}, the right-hand side coincides with 
\begin{align*}
A_0 A_1^{p+1}\sum_{q = 1}^p {p\choose q} \big(A_0 A_1^{-1}\big)^q (q+1)^q 
(p+1)^{-p + q}(p-q+1)^{p-q}
\d^{-p+q}.
\end{align*}
Estimate the coefficient ${p\choose q}$, using \eqref{eq:choosepow1}: 
\begin{align*}
\|f_\bm & \|_{\plainL2(U_{\BP} (\varepsilon+(p+1)\d))}\\
\le &\ 
L_3 A_0 A_1^{p+1}\d^{-p}\sum_{q=1}^p   \big(A_0 A_1^{-1}\big)^q ((1+p)\d)^q
\le L_3 A_0 A_1^{p+1}\d^{-p}\sum_{q=1}^p   \big(A_0 A_1^{-1}\big)^q, 
\end{align*}
where we have taken into account that $(p+1)\d\le 1$. 
 By \eqref{eq:L0}, we have $A_0 A_1^{-1}\le 1/2$, 
 so that the sum on the right-hand side does not exceed $1$. 
Since $\d \le 1$, we can now conclude that 
\begin{align*}
\d^2\| f_\bm\|_{\plainL2(U_{\BP}(\varepsilon+(p+1)\d))}
\le  L_3 A_0 A_1^{p+1} \d^{-p+2}\le L_3 A_0 A_1^{p+1} \d^{-p}.
\end{align*}
Substituting 
this bound together with \eqref{eq:second} in 
\eqref{eq:ind} we arrive at the estimate  
\begin{align*}
\|\p^{\bk} u_{\bm+\bl}\|_{\plainL2(U_{\BP}(\varepsilon+ (p+2)\d)}
\le &\ \d^{-p-1 -|\bk|} A_1^{p+1} L_2\big(1 + L_3 A_0\big). 
\end{align*}
By the definition \eqref{eq:L0}, the factor $L_2(1 + L_3 A_0)$ does not exceed $A_1$. 
This leads to the bound \eqref{eq:induction1}, and hence proves the lemma. 
\end{proof}

\begin{cor}\label{cor:derivl2} 
For any $\varepsilon\in (0, 1]$ there is a constant 
$A_2=A_2(\varepsilon)$, such that 
for all cluster sets 
$\BP = \{\SP_1, \SP_2, \dots, \SP_M\}$ and all $\bm\in\mathbb N_0^{3M}$, we have 
\begin{align*}
\|\SD^{\bm}_{\BP}\psi\|_{\plainL2({U}_{\BP}(2\varepsilon))}
\le A_2^{|\bm|+1} (1+|\bm|)^{|\bm|}.
\end{align*}
\end{cor}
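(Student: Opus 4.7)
The plan is to derive the corollary directly from the inequality \eqref{eq:induction} in the preceding lemma by making a judicious choice of the parameter $\d$ in terms of $|\bm|$ and $\varepsilon$. Since the corollary only asks for an $\plainL2$ bound on $\SD_\BP^{\bm}\psi$ itself (not its Cartesian derivatives), I take $\bk = \mathbf 0$ in \eqref{eq:induction}.

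First I would set $\d = \varepsilon/(|\bm|+1)$. Because $\varepsilon\in (0,1]$, the constraint $\d(|\bm|+1)\le 1$ needed to apply \eqref{eq:induction} is automatically satisfied, and with this choice one has $\varepsilon + (|\bm|+1)\d = 2\varepsilon$, so the enclosure $U_\BP(\varepsilon + (|\bm|+1)\d)$ collapses to exactly $U_\BP(2\varepsilon)$ as desired.

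Next, substituting into \eqref{eq:induction} with $\bk = \mathbf 0$ gives
\begin{align*}
\|\SD_\BP^{\bm}\psi\|_{\plainL2(U_\BP(2\varepsilon))}
\le A_1^{|\bm|+1}\,\d^{-|\bm|}
= A_1^{|\bm|+1}\,\varepsilon^{-|\bm|}(|\bm|+1)^{|\bm|}.
\end{align*}
Since $\varepsilon\le 1$, we have $\varepsilon^{-|\bm|}\le \varepsilon^{-|\bm|-1}$, so the right-hand side is bounded by $(A_1/\varepsilon)^{|\bm|+1}(|\bm|+1)^{|\bm|}$. Setting $A_2 = A_2(\varepsilon) := A_1/\varepsilon$ yields the claimed bound. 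Note that $A_1$ depends on $\varepsilon$ through the definition \eqref{eq:L0}, but is independent of $\BP$ and of $\bm$, so $A_2$ has the same property.

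I do not anticipate any real obstacle here: the previous lemma was specifically designed to yield bounds with factorial-type growth once $\d$ is optimized as a function of the order of differentiation, and the argument is the standard trick of trading a factor of $\d^{-|\bm|}$ for a factor of $(|\bm|+1)^{|\bm|}$ by setting $\d$ proportional to $1/(|\bm|+1)$. The only point worth checking carefully is the compatibility of the constraint $\d(|\bm|+1)\le 1$ with the choice $\d = \varepsilon/(|\bm|+1)$, which requires $\varepsilon \le 1$ — precisely the hypothesis of the corollary.
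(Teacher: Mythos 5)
Your proof is correct and follows exactly the paper's own argument: apply \eqref{eq:induction} with $\bk=\mathbf 0$ and $\d=\varepsilon(|\bm|+1)^{-1}$, note the constraint $\d(|\bm|+1)=\varepsilon\le 1$ is satisfied, and absorb $\varepsilon^{-|\bm|}$ into $A_2=\varepsilon^{-1}A_1$ using $\varepsilon\le 1$. Nothing further is needed.
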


\begin{proof}
Use \eqref{eq:induction} with $\bk = \mathbf 0$ and $\d = (|\bm|+1)^{-1}\varepsilon$:
\begin{align*}
\|\SD_{\BP}^{\bm} \psi\|_{\plainL2({U}_{\BP}(2\varepsilon))}
\le \varepsilon^{-|\bm|}A_1^{|\bm|+1} (1+|\bm|)^{|\bm|}
\le A_2^{|\bm|+1} (1+|\bm|)^{|\bm|}, 
\end{align*}
with $A_2 = \varepsilon^{-1} A_1$, where we have taken into account that $\varepsilon\le 1$.
\end{proof}

\section{Cut-off functions and associated clusters}\label{sect:cutoff}

\subsection{
Admissible 
cut-off functions}
Let $\{f_{jk}\}, 1\le j, k\le N$, be a set of functions such that 
each of them is one of the functions $\z, \t$ or $\p_x^l \t$, $l\in \mathbb N_0^3, |l|=1$, and 
$f_{jk} = f_{kj}$.
We work with the smooth functions of the form
\begin{align}\label{eq:canon}
\phi(\bx) = \prod\limits_{1\le j < k\le N} f_{j k}(x_j-x_k).
\end{align}
We call such functions \textit{admissible cut-off functions}
or simply \textit{admissible cut-offs}. 
For any such function 
$\phi$ we also introduce the following ``partial" products. 
For an arbitrary cluster 
$\SP\subset \SR = \{1, 2, \dots, N\}$ 
define  
\begin{align*}
\phi(\bx; \SP) = 
\begin{cases}
\prod\limits_{\substack{j < k\\
j, k\in \SP}}f_{j k}(x_j-x_k),\quad &{\rm if}\ |\SP|\ge 2;\\[0.2cm]
\qquad 1,\quad &{\rm if}\ |\SP|\le 1.
\end{cases}
\end{align*}
Furthermore, for any two 
clusters $\SP, \SfS\subset \SR$, such that $\SfS\cap \SP = \varnothing$, we define
\begin{align}\label{eq:PS}
\phi(\bx; \SP, \SfS) = 
\begin{cases}
\prod\limits_{j\in \SP, k\in \SfS}f_{j k}(x_j-x_k),
\quad &{\rm if}\ \SP\not = \varnothing \ {\rm and}\ \SfS\not=\varnothing;\\[0.2cm]
\qquad 1,\quad &{\rm if}\ \SP = \varnothing\ {\rm or}\ \SfS = \varnothing.
\end{cases}
\end{align} 
Note that 
\begin{align}\label{eq:diffl}
\SD_{\SP}^l \phi(\bx; \SP) = \SD_{\SP}^l \phi(\bx; \SP^{\rm c}) = 0,\quad \textup{for all}
\ l\in\mathbb N_0^3,\, |l|\ge 1.
\end{align}
It is straightforward to see that for any cluster $\SP$ the function 
$\phi(\bx)$ can be represented as follows:
\begin{align}\label{eq:phirep}
\phi(\bx) = \phi(\bx; \SP) \phi(\bx; \SP^{\rm c}) 
\phi(\bx; \SP, \SP^{\rm c}).
\end{align} 
Following 
\cite{FHOS2004}, we associate with the function $\phi$ a cluster $\SQ(\phi)$ defined next. 

\begin{defn} 
For an admissible cut-off $\phi$, let $I(\phi)\subset \{(j, k)\in\SR\times\SR: j\not= k\}$
be the set such that 
$(j, k)\in I(\phi)$, iff $f_{jk}\not=\t$.    
We say that two indices $j, k\in\SR$, 
are $\phi$-\textit{linked} to each other 
if either $j=k$, or $(j, k)\in I(\phi)$, or 
there exists a sequence of pairwise 
distinct indices $j_1, j_2, \dots, j_s$, $1\le s\le N-2$, 
all distinct from $j$ and $k$, 
such that $(j, j_1), (j_s, k)\in I(\phi)$ and $(j_p, j_{p+1})\in I(\phi)$ for all $p = 1, 2, \dots, s-1$.

The cluster $\SQ(\phi)$ is defined as the set of all indices that are 
$\phi$-linked to index $1$.  
\end{defn}

It follows from the above definition that $\SQ(\phi)$ always contains index $1$. Note also that 
the notion of being linked defines an equivalence relation on $\SR$, and the cluster $\SQ(\phi)$ is 
nothing but the equivalence class of index $1$.

For example, if $N = 4$ and 
\begin{align*}
\phi(\bx) = \z(x_1-x_2) \t(x_1 - x_3)
\t(x_1-x_4)\p_x^l\t(x_2-x_3)  \t(x_2 - x_4)\t(x_3-x_4),
\end{align*}
with some $l\in \mathbb N_0^3$, $|l|=1$, 
then $\SQ(\phi) = \{1, 2, 3\}$.
 
Let $\SP = \SQ(\phi)$. 
If $\SP^{\rm c}$ is not empty, i.e. 
$\SP\not = \SR$, then, by the definition of $\SP$, 
we always have $f_{jk}(x) = \t(x)$ for all $j\in \SP$ and $k\in \SP^{\rm c}$, and hence 
the representation \eqref{eq:phirep} holds with 
\begin{align}\label{eq:phipq}
\phi(\bx; \SP, \SP^{\rm c}) = \prod_{j\in \SP,\  k\in \SP^{\rm c}} \t(x_j-x_k).
\end{align} 
The notion of associated cluster is useful because of its connection with the 
support of the cut-off $\phi$. This is clear from the next two lemmata. 
Recall that the sets $X_{\SP}, \widehat T_{\SP}$ are defined in 
 \eqref{eq:xp} and \eqref{eq:hattp} respectively.

\begin{lem}\label{lem:supp} 
For $\SP = \SQ(\phi)$ the inclusion 
\begin{align}\label{eq:suppphi}
\supp\phi
\subset X_{\SP}\big(\varepsilon(4N)^{-1}\big)
\end{align}
holds. 
\end{lem}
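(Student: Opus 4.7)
The plan is to unpack the two definitions and reduce the inclusion to the elementary fact that $\supp \t = \{x : |x| > \varepsilon/(4N)\}$.

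First I would dispose of the trivial case $\SP = \SR$, where by definition $X_{\SP}(\varepsilon/(4N)) = \R^{3N}$ and there is nothing to prove. Henceforth assume $\SP^{\rm c} \neq \varnothing$.

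The key structural observation is: for every $j \in \SP$ and $k \in \SP^{\rm c}$ one must have $f_{jk} = \t$. Indeed, if instead $f_{jk} \neq \t$, then by definition $(j,k) \in I(\phi)$, which means $j$ and $k$ are $\phi$-linked; since $j \in \SP$ is $\phi$-linked to index $1$, so would be $k$, contradicting $k \in \SP^{\rm c}$. Consequently the partial product \eqref{eq:PS} takes the explicit form \eqref{eq:phipq}, namely
\begin{equation*}
\phi(\bx; \SP, \SP^{\rm c}) = \prod_{j \in \SP,\ k \in \SP^{\rm c}} \t(x_j - x_k).
\end{equation*}

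Using the factorisation \eqref{eq:phirep} together with the product property \eqref{eq:supp} of supports, we obtain
\begin{equation*}
\supp \phi \subset \supp \phi(\bx; \SP, \SP^{\rm c}) = \bigcap_{j \in \SP,\, k \in \SP^{\rm c}} \supp \t(x_j - x_k).
\end{equation*}
By the definition \eqref{eq:pu} of $\t$, we have $\t(x) = \xi(4N|x|/\varepsilon - 1) = 0$ whenever $|x| \le \varepsilon/(4N)$, so $\supp \t(x_j - x_k) \subset \{\bx : |x_j - x_k| > \varepsilon/(4N)\}$. Intersecting over all $j \in \SP,\ k \in \SP^{\rm c}$ yields exactly $X_{\SP}(\varepsilon/(4N))$ by \eqref{eq:xp}, which gives \eqref{eq:suppphi}.

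There is no substantial obstacle here; the argument is purely a matter of combining the definition of the associated cluster $\SQ(\phi)$ with the explicit form of $\t$. The only point requiring a small care is the implication ``$f_{jk} \neq \t \Rightarrow k$ is $\phi$-linked to $j$'', which is immediate from the definition of $I(\phi)$ and of the linking relation.
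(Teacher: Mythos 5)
Your proof is correct and takes essentially the same route as the paper: the paper also treats $\SP^{\rm c}=\varnothing$ as trivial and otherwise cites the factorisation \eqref{eq:phirep}, the formula \eqref{eq:phipq} (which it justifies, exactly as you do, by noting that $f_{jk}=\t$ for $j\in\SP$, $k\in\SP^{\rm c}$ by the definition of $\SQ(\phi)$) and the definition of $\t$. The only nitpick is your opening claim that $\supp\t=\{x:|x|>\varepsilon(4N)^{-1}\}$ -- equality need not hold since $\xi$ is unconstrained on $(0,1)$ -- but your argument only uses the inclusion $\supp\t\subset\{x:|x|>\varepsilon(4N)^{-1}\}$, which is all that is needed.
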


\begin{proof} 
If $\SP^{\rm c} = \varnothing$, then, by definition,  $X_{\SP} = \R^{3N}$, and hence 
\eqref{eq:suppphi} is trivial. 

Suppose that $\SP^{\rm c}$ is non-empty. 
The inclusion \eqref{eq:suppphi} immediately 
follows from the representation \eqref{eq:phirep}, formula \eqref{eq:phipq} 
and the definition of the function $\t$. 
\end{proof}

\begin{lem}\label{lem:link}
If $j\in \SQ(\phi)$, then $|x_1-x_j|<\varepsilon/2$ for all $\bx\in\supp\phi$. 

Moreover, 
\begin{align}\label{eq:suppphi1}
\supp\phi(x_1, \ \cdot\ )
\subset \widehat T_{\SP^*}\big(\varepsilon/2\big),
\end{align}
for all $x_1: |x_1|>\varepsilon$.
\end{lem}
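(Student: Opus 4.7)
The plan is to deduce both assertions from a single pointwise bound on each factor in \eqref{eq:canon} labelled by an edge of $I(\phi)$. Recall that $(j,k)\in I(\phi)$ means $f_{jk}$ is either $\z$ or $\p_x^l \t$ with $|l|=1$. From \eqref{eq:pu}, together with the fact that $\xi(t)=1$ (and hence $\xi'(t)=0$) for $t\ge 1$, both $\z(x)$ and $\p_x^l \t(x)$ vanish whenever $|x|\ge \varepsilon/(2N)$. Since every factor in \eqref{eq:canon} is nonzero on $\supp\phi$ by \eqref{eq:supp}, this gives the key estimate
\begin{align*}
|x_j-x_k|<\frac{\varepsilon}{2N}\quad\text{for every }(j,k)\in I(\phi)\text{ and every }\bx\in\supp\phi.
\end{align*}

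Given this, the first assertion follows directly from the definition of $\SQ(\phi)=\SP$. If $j\in\SP\setminus\{1\}$, then by the definition of being $\phi$-linked to index $1$ there is a chain of pairwise distinct indices $1=j_0, j_1,\dots,j_s, j_{s+1}=j$, with $s+2\le N$, such that $(j_p,j_{p+1})\in I(\phi)$ for every $p=0,1,\dots,s$. Applying the factor estimate edge by edge and then the triangle inequality, on $\supp\phi$ we obtain
\begin{align*}
|x_1-x_j|\le \sum_{p=0}^{s}|x_{j_p}-x_{j_{p+1}}|<(s+1)\,\frac{\varepsilon}{2N}\le (N-1)\,\frac{\varepsilon}{2N}<\frac{\varepsilon}{2}.
\end{align*}
The case $j=1$ is trivial.

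The ``moreover'' statement is then immediate. If $|x_1|>\varepsilon$ and $\hat\bx\in\supp\phi(x_1,\,\cdot\,)$, then $\bx=(x_1,\hat\bx)\in\supp\phi$, and for each $j\in\SP^*=\SP\setminus\{1\}$ the first part yields $|x_1-x_j|<\varepsilon/2$, so $|x_j|\ge |x_1|-|x_1-x_j|>\varepsilon/2$. By \eqref{eq:hattp} this is precisely $\hat\bx\in\widehat T_{\SP^*}(\varepsilon/2)$. I do not anticipate a real obstacle: the argument is a combinatorial use of the $\phi$-linking definition combined with the elementary supports of $\z$ and $\p^l\t$. The only points requiring care are that membership in $I(\phi)$ explicitly excludes $f_{jk}=\t$ (whose support reaches far from the origin), and that the distinctness of the indices in the chain forces $s+1\le N-1$, which is exactly what preserves the strict inequality $|x_1-x_j|<\varepsilon/2$.
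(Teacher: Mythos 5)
Your proposal is correct and follows essentially the same route as the paper: the pointwise bound $|x_j-x_k|<\varepsilon(2N)^{-1}$ on $\supp\phi$ for $(j,k)\in I(\phi)$ (which the paper asserts directly from the definitions of $\z$ and $\t$, and which you justify in slightly more detail for the $\p_x^l\t$ factors), followed by the chain/triangle-inequality estimate using the $\phi$-linking definition, and then the elementary deduction $|x_j|\ge|x_1|-|x_1-x_j|>\varepsilon/2$ for the inclusion \eqref{eq:suppphi1}. No gaps.
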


\begin{proof}
Let $\bx\in \supp\phi$. 
By the definition of $\z$ and $\t$, if $(j, k)\in I(\phi)$, then $|x_j-x_k|< \varepsilon(2N)^{-1}$. 
Thus, if $j$ and $k$ are $\phi$-linked to each other, then  
\begin{align*}
|x_j-x_k|\le &\ |x_j - x_{j_1}| 
+ \sum_{p=1}^{s-1} |x_{j_p} - x_{j_{p+1}}| + |x_{j_s} - x_k| \notag\\
\le &\ \frac{\varepsilon}{2N}(s+1) < \frac{\varepsilon}{2}.
\end{align*} 
In particular, for $j\in \SQ(\phi)$ we have $|x_1 - x_j|< \varepsilon/2$, 
as claimed.

Proof of \eqref{eq:suppphi1}. Suppose that $\bx\in\supp\phi$ and $|x_1| > \varepsilon$. 
By Lemma \ref{lem:link}, for each $j\in \SP^*$ we have $|x_1-x_j|<\varepsilon/2$, so that 
\begin{align*}
|x_j|\ge |x_1| - |x_1 - x_j|> \frac{\varepsilon}{2},
\end{align*}
 as claimed.
\end{proof}

To summarize in words, on the support of the admissible cut-off $\phi$ the variables 
$x_j$, indexed by $j\in\SP = \SQ(\phi)$, are ``close" to each other and  
``far" from the remaining variables. 
 
Let $\phi$ be of the form \eqref{eq:canon}, and let $\SP = \SQ(\phi)$. 
For each $l\in \mathbb N_0^3$ the function $\SD_{\SP}^l\phi$ has the form 
\begin{align}\label{eq:phider}
 \SD_{\SP}^l \phi(\bx) = 
\phi(\bx; \SP) \phi(\bx; \SP^{\rm c}) \SD_{\SP}^l \phi(\bx; \SP, \SP^{\rm c}),\quad \SP = \SQ(\phi),
\end{align}
where we have used the factorization \eqref{eq:phirep} and property \eqref{eq:diffl}. 
By the definition \eqref{eq:PS}, $\SD_{\SP}^l\phi = 0$ if $\SP^{\rm c} = \varnothing$. 
 
\begin{lem}
Let $\SP = \SQ(\phi)$ and $l\in\mathbb N_0^3, |l|=1$. 
If $\SP^{\rm c}\not = \varnothing$, then the function \eqref{eq:phider} is represented in the form
\begin{align}\label{eq:derivrep}
\SD_{\SP}^l \phi(\bx)  
= \sum\limits_{s\in \SP, r\in \SP^{\rm c}} \phi_{s, r}^{(l)},
\end{align}
where each $\phi^{(l)}_{s, r}$ is an admissible cut-off of the form 
\begin{align}\label{eq:philrs}
\phi
^{(l)}_{s, r}(\bx)
= 
\phi(\bx; \SP) \phi(\bx; \SP^{\rm c}) \p_x^l \t(x_s-x_r)
\prod_{\substack{j\in \SP, k\in \SP^{\rm c}\\(j,k)\not = (s, r)}} \t(x_j-x_k). 
\end{align}
Moreover, $\SP\subset \SQ(\phi^{(l)}_{s, r})$ and 
$|\SQ(\phi_{s, r}^{(l)})| \ge  |\SP|+1$.
\end{lem}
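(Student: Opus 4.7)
The plan is to combine the factorization \eqref{eq:phider} with the explicit formula \eqref{eq:phipq} and apply the Leibniz rule to the cross-product $\phi(\bx;\SP,\SP^{\rm c}) = \prod_{j\in\SP,\,k\in\SP^{\rm c}}\t(x_j-x_k)$. Since $|l|=1$, the cluster derivative $\SD_\SP^l = \sum_{s\in\SP}\p_{x_s}^l$ is a first-order operator, so the Leibniz rule produces a plain sum over factors with unit coefficients. For a single cross-pair factor $\t(x_j-x_k)$ with $j\in\SP$, $k\in\SP^{\rm c}$, the summands $\p_{x_s}^l\t(x_j-x_k)$ vanish for $s\neq j$ (the function depends on $x_s$ only for $s\in\{j,k\}$, and $k\notin\SP$), leaving $\SD_\SP^l\t(x_j-x_k) = \p_{x_j}^l\t(x_j-x_k) = (\p_x^l\t)(x_j-x_k)$. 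Combining the product rule with the surviving prefactors $\phi(\bx;\SP)\phi(\bx;\SP^{\rm c})$ from \eqref{eq:phider} gives the decomposition \eqref{eq:derivrep}-\eqref{eq:philrs} directly.

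Admissibility of each $\phi^{(l)}_{s,r}$ is a matter of reading off \eqref{eq:philrs}: it is a product over all unordered pairs $\{j<k\}$ of functions drawn from $\{\z,\t,\p_x^l\t\}$, namely the original intra-cluster factors $f_{jk}$ for pairs inside $\SP$ or inside $\SP^{\rm c}$, the distinguished factor $\p_x^l\t(x_s-x_r)$, and $\t(x_j-x_k)$ for all other cross-pairs. This matches the canonical form \eqref{eq:canon}.

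For the cluster claims, compare the sets $I(\phi)$ and $I(\phi^{(l)}_{s,r})$. The factors at all pairs inside $\SP$ and inside $\SP^{\rm c}$ are unchanged; the cross-pair factor at $(s,r)$ has switched from $\t$ (not in $I(\phi)$) to $\p_x^l\t$ (in $I(\phi^{(l)}_{s,r})$), while all other cross-pair factors remain $\t$. Each index in $\SP=\SQ(\phi)$ is $\phi$-linked to $1$ by a chain that automatically stays inside $\SP$, because the $\phi$-link equivalence class of $1$ is $\SP$ and cross-pair edges of $\phi$ are absent. Those same chains give $\phi^{(l)}_{s,r}$-links, so $\SP\subset\SQ(\phi^{(l)}_{s,r})$. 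Finally, $r\in\SP^{\rm c}$ is $\phi^{(l)}_{s,r}$-linked to $s\in\SP$ via the new edge $(s,r)\in I(\phi^{(l)}_{s,r})$, hence linked to $1$, so $r\in\SQ(\phi^{(l)}_{s,r})\setminus\SP$, giving $|\SQ(\phi^{(l)}_{s,r})|\geq |\SP|+1$. The only real observation driving the argument is the cancellation $\SD_\SP^l\t(x_j-x_k)=(\p_x^l\t)(x_j-x_k)$ at cross-pair arguments, which keeps the derivative concentrated at a single pair and hence keeps each summand in admissible form; no substantial obstacle is expected.
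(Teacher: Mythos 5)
Your proof is correct and follows essentially the same route as the paper's: the paper likewise obtains \eqref{eq:derivrep} by applying the first-order Leibniz rule to the cross-factor \eqref{eq:phipq} (after \eqref{eq:diffl} removes the intra-cluster factors), reads admissibility off \eqref{eq:philrs}, and derives the cluster claims from the new link at the pair $(s,r)$ created by the factor $\p_x^l\t$. Your write-up simply makes explicit the details the paper leaves implicit, e.g.\ the identity $\SD_{\SP}^l\t(x_j-x_k)=(\p_x^l\t)(x_j-x_k)$ for $j\in\SP$, $k\in\SP^{\rm c}$, and the observation that linking chains from $1$ stay inside $\SP$ so they survive in $\phi^{(l)}_{s,r}$.
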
 

\begin{proof} 
The representation \eqref{eq:derivrep} immediately follows 
from the definition \eqref{eq:phipq}. 
It is clear from 
\eqref{eq:philrs} that $\phi^{(l)}_{s, r}$ has the form 
\eqref{eq:canon}, and hence it is admissible.

Due to the presence of the derivative $\p_x^l\t$, in addition to all 
indices linked to index $1$ by the function $\phi$, the new function $\phi^{(l)}_{r, s}$ links 
the indices $r$ and $s$ as well, and hence its associated 
cluster $\SQ\big(\phi^{(l)}_{r, s}\big)$ 
contains $\SP$ and $|\SQ\big(\phi^{(l)}_{r, s}\big)|\ge  |\SP|+1$, as claimed. 
\end{proof}
 
In what follows a special role is played by the factorization \eqref{eq:phirep} with 
$\SP = \{1\}$, so that 
\begin{align}\label{eq:factor}
\phi(x_1, \hat\bx) = \om(x_1, \hat\bx)\varkappa(\hat\bx)
\quad {\rm with}\quad \om(x_1, \hat\bx) = \phi(x_1, \hat\bx; \{1\}, \SR^*),\ 
\varkappa(\hat\bx) = \phi(\bx; \SR^*).
\end{align}
We call the functions $\om$ and $\varkappa$ the \textit{canonical} factors of $\phi$.  
In the next corollary we find the canonical factors for the cut-offs $\phi^{(l)}_{s, r}$ defined 
in \eqref{eq:philrs}.

\begin{cor}
Let $\om, \varkappa$ be the canonical factors of $\phi$, and let $\SP^{\rm c}\not = \varnothing$. 
Then the functions $\phi^{(l)}_{s, r}$ can be represented as follows:
\begin{align*}
\phi^{(l)}_{s, r}(x_1, \hat\bx) = \om^{(l)}_{r, s} (x_1, \hat\bx)
\varkappa^{(l)}_{s, r}(\hat\bx),\quad  
s\in \SP, r\in \SP^{\rm c},
\end{align*}
with 
\begin{align}\label{eq:factor1r}
\begin{split}
\om^{(l)}_{1, r}(x_1, \hat\bx) = &\ \phi(x_1, \hat\bx; \{1\}, \SP^*)
\p_x^l \t(x_1-x_r)\prod_{k\in\SP^{\rm c}, k\not = r}
\t(x_1-x_k),\\
\varkappa^{(l)}_{1, r}(\hat\bx) = &\ \varkappa(\hat\bx),
\end{split}
\end{align}
and 
\begin{align}\label{eq:factorsr}
\begin{split}
\om^{(l)}_{s, r}(x_1, \hat\bx) = &\ \om(x_1, \hat\bx),\\
\varkappa^{(l)}_{s, r}(\hat\bx) =  &\ \phi(\hat\bx; \SP^*) \phi(\hat\bx; \SP^{\rm c}) 
\p_x^l \t(x_s-x_r)
\prod_{\substack{j\in \SP^*, k\in \SP^{\rm c}\\(j,k)\not = (s, r)}} \t(x_j-x_k),
\end{split}
\end{align}
for all $s\in \SP^*$. 
\end{cor}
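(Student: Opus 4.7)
The plan is to split each factor in the formula \eqref{eq:philrs} for $\phi^{(l)}_{s,r}$ into a part depending on $x_1$ and a part depending only on $\hat\bx$, and then to match the resulting groupings with the claimed canonical factors. Throughout, I will exploit that $1\in\SP$ and that, by the definition of $\SQ(\phi)=\SP$, every factor $f_{jk}$ with $j\in\SP$, $k\in\SP^{\rm c}$ equals $\t$.

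First I would apply the refined factorization
\begin{align*}
\phi(\bx;\SP) = \phi(x_1,\hat\bx;\{1\},\SP^*)\,\phi(\hat\bx;\SP^*),\qquad
\phi(\bx;\SP^{\rm c}) = \phi(\hat\bx;\SP^{\rm c}),
\end{align*}
which simply isolates the unique cluster index $1\in\SP$. The remaining product $\p_x^l\t(x_s-x_r)\prod_{(j,k)\neq (s,r)}\t(x_j-x_k)$ taken over $j\in\SP$, $k\in\SP^{\rm c}$ I would split by separating off the terms with $j=1$ from those with $j\in\SP^*$. These two sub-products are disjoint in their variable dependence: the $j=1$ part is a function of $(x_1,\hat\bx)$, while the $j\in\SP^*$ part is a function of $\hat\bx$ only.

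Next I would handle the two cases separately. When $s=1$, the distinguished derivative factor $\p_x^l\t(x_1-x_r)$ belongs to the $j=1$ sub-product and is absorbed into $\omega^{(l)}_{1,r}$; the remaining $\hat\bx$-only factors combine with $\phi(\hat\bx;\SP^*)\phi(\hat\bx;\SP^{\rm c})$ to give precisely $\phi(\bx;\SR^*)=\varkappa(\hat\bx)$, because $\SR^*=\SP^*\cup\SP^{\rm c}$ and the cross factors are all $\t$. This yields \eqref{eq:factor1r}. When $s\in\SP^*$, the derivative factor $\p_x^l\t(x_s-x_r)$ lies in the $\hat\bx$-only sub-product, and combining it with $\phi(\hat\bx;\SP^*)$, $\phi(\hat\bx;\SP^{\rm c})$ and the remaining $j\in\SP^*$ cross factors produces the stated $\varkappa^{(l)}_{s,r}$; the $x_1$-dependent piece assembles into $\phi(x_1,\hat\bx;\{1\},\SP^*)\prod_{k\in\SP^{\rm c}}\t(x_1-x_k)$, which is exactly $\phi(x_1,\hat\bx;\{1\},\SR^*)=\omega(x_1,\hat\bx)$, again using that $f_{1k}=\t$ for $k\in\SP^{\rm c}$. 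This yields \eqref{eq:factorsr}.

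The argument is entirely bookkeeping; the only substantive input is the defining property of the associated cluster $\SQ(\phi)$, which forces $f_{jk}=\t$ across the $\SP$/$\SP^{\rm c}$ partition and thereby makes the cross-cluster factors combine cleanly with $\phi(\hat\bx;\SP^*)\phi(\hat\bx;\SP^{\rm c})$ to regenerate $\phi(\bx;\SR^*)=\varkappa(\hat\bx)$. There is no genuine obstacle; the main point to be careful about is the bookkeeping distinction between the $s=1$ and $s\in\SP^*$ cases, since only in the former does the differentiated factor $\p_x^l\t(x_s-x_r)$ involve $x_1$.
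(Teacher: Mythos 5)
Your proposal is correct and matches the paper's route: the paper simply declares the corollary an immediate consequence of \eqref{eq:philrs}, and your argument is exactly the bookkeeping behind that claim, splitting $\phi(\bx;\SP)$ as $\phi(x_1,\hat\bx;\{1\},\SP^*)\phi(\hat\bx;\SP^*)$, separating the $j=1$ cross factors from the $j\in\SP^*$ ones, and using $f_{jk}=\t$ across the $\SP/\SP^{\rm c}$ partition to reassemble $\om$ or $\varkappa$. Both the $s=1$ and $s\in\SP^*$ cases are handled correctly, so nothing further is needed.
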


\begin{proof} 
The claim is an immediate consequence of \eqref{eq:philrs}. 
\end{proof}

\subsection{Extended cut-offs} \label{subsect:extended} 
Now we are ready to introduce the cut-off functions with which we work when estimating  
the derivatives of the one-particle density matrix $\g(x, y)$. These cut-offs are  
functions of $3N+3$ variables and they are defined as follows. 
We say that two admissible cut-offs 
$\phi = \phi(x_1, \hat\bx)$ and $\mu = \mu(x_1, \hat\bx)$ 
are \textit{coupled 
to each other} if they share the same canonical factor  
$\varkappa = \varkappa(\hat\bx)
= \phi(\bx; \SR^*) = \mu(\bx; \SR^*)$, i.e.
\begin{align*}
\phi(x_1, \hat\bx) = \om(x_1, \hat\bx) \varkappa(\hat\bx),\quad 
\mu(x_1, \hat\bx) = \tau(x_1, \hat\bx) \varkappa(\hat\bx), 
\end{align*}
where $\om$ is defined as in \eqref{eq:factor} and 
$\tau(x_1, \hat\bx) = \mu(x_1, \hat\bx; \{1\}, \SR^*)$.
Out of two coupled cut-offs $\phi, \mu$ we construct a new 
function of $3N+3$ variables:  
\begin{align}\label{eq:Phi}
\Phi(x, y, \hat\bx) = &\ \om(x, \hat\bx) \tau(y, \hat\bx) \varkappa(\hat\bx)
\notag\\ 
= &\ \phi(x, \hat\bx)\tau(y, \hat\bx) 
= \om(x, \hat\bx) \mu(y, \hat\bx),\quad (x, y)\in\R^3\times\R^3,   
\hat\bx\in\R^{3N-3}.
\end{align}
 We call such $\Phi$ an \textit{extended} cut-off. 
It is clear that every extended cut-off defines a pair of coupled admissible $\phi$ and $\mu$ 
uniquely. We say that the pair $\phi$, $\mu$ and the extended 
cut-off $\Phi$ are associated to each other. 
The representations \eqref{eq:Phi} and identity \eqref{eq:supp} 
give the equality 
\begin{align}\label{eq:suppPhib}
\supp\Phi(x, y,\ \cdot\ )
=  
\supp\phi(x, \ \cdot\ )\cap \supp\mu(y, \ \cdot\ ),\quad \forall (x, y)\in\R^3\times\R^3.
\end{align} 
From now on we denote $\SP = \SQ(\phi)$ and $\SfS = \SQ(\mu)$.
 
Below we list some useful 
properties of the extended cut-offs $\Phi$ and associated admissible $\phi$, $\mu$. Due 
to the nature of the definition \eqref{eq:Phi}, in 
all statements involving the functions $\phi$, 
$\mu$ and $\Phi$, the pairs $\{\phi, \SP\}$ and $\{\mu, \SfS\}$ can be interchanged.

\begin{lem}\label{lem:empty}
If $\SP^*\cap \SfS $ is non-empty, 
then $\Phi(x, y, \hat\bx) = 0$ for all $\hat\bx\in\R^{3N-3}$ if $(x, y)\in \CD_\varepsilon$. 
\end{lem}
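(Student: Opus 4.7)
The plan is to argue by contradiction using the two previously established facts: the factorization $\Phi(x,y,\hat\bx)=\phi(x,\hat\bx)\mu(y,\hat\bx)$ from \eqref{eq:Phi}, and Lemma \ref{lem:link}, which says that on the support of an admissible cut-off $\phi$, every index in $\SQ(\phi)\setminus\{1\}$ places the corresponding particle within distance $\varepsilon/2$ of the first coordinate.

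Suppose, for contradiction, that there exist $(x,y)\in\CD_\varepsilon$ and $\hat\bx\in\R^{3N-3}$ with $\Phi(x,y,\hat\bx)\ne 0$. By \eqref{eq:Phi} and \eqref{eq:supp}, this forces both $\phi(x,\hat\bx)\ne 0$ and $\mu(y,\hat\bx)\ne 0$. Pick any $j\in\SP^*\cap\SfS$. Since $j\in\SP^*=\SQ(\phi)\setminus\{1\}$, the index $j$ is $\phi$-linked to index $1$, so Lemma \ref{lem:link}, applied to the cut-off $\phi(\,\cdot\,,\hat\bx)$ evaluated at first coordinate $x$, yields $|x-x_j|<\varepsilon/2$. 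On the other hand, $j\in\SfS$ with $j\ne 1$ (since $j\in\SP^*$) means $j\in\SfS^*=\SQ(\mu)\setminus\{1\}$, so the same lemma applied to $\mu(\,\cdot\,,\hat\bx)$ at first coordinate $y$ gives $|y-x_j|<\varepsilon/2$.

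The triangle inequality then produces
\begin{equation*}
|x-y|\le |x-x_j|+|x_j-y|<\frac{\varepsilon}{2}+\frac{\varepsilon}{2}=\varepsilon,
\end{equation*}
which contradicts the defining condition $|x-y|>\varepsilon$ of $\CD_\varepsilon$. Hence no such $\hat\bx$ exists, and $\Phi(x,y,\hat\bx)=0$ for all $\hat\bx$ whenever $(x,y)\in\CD_\varepsilon$.

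There is no serious obstacle here: the argument is essentially a triangle inequality, and the only subtlety worth noting is the need to choose the shared index $j$ from $\SP^*\cap\SfS$ rather than $\SP\cap\SfS$, so that $j\neq 1$ and Lemma \ref{lem:link} applies to \emph{both} $\phi$ and $\mu$ simultaneously. This is precisely why the hypothesis is stated in terms of $\SP^*\cap\SfS$ (and symmetrically would work with $\SP\cap\SfS^*$): the common index $1$ automatically lies in $\SP\cap\SfS$ but gives no information separating $x$ from $y$.
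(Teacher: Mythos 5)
Your proof is correct and follows essentially the same route as the paper: Lemma \ref{lem:link} applied to both $\phi$ and $\mu$ at a shared index $j\in\SP^*\cap\SfS$, followed by the triangle inequality to contradict $|x-y|>\varepsilon$, with the support identity \eqref{eq:suppPhib} (your \eqref{eq:Phi} plus \eqref{eq:supp}) finishing the argument. The only difference is cosmetic — you phrase it as a contradiction while the paper argues directly that the supports are disjoint.
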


\begin{proof}
Suppose that $\SP^*\cap\SfS$ is non-empty and that 
$(x, \hat\bx)\in\supp\phi$, $(y, \hat\bx)\in\supp\mu$. 
By Lemma \ref{lem:link}, for each $j\in \SP^*\cap \SfS$ we have 
$|x-x_j|<\varepsilon/2$ and $|y-x_j|<\varepsilon/2$.  
Hence $|x-y|<\varepsilon$, and so 
\begin{align*}
\supp\phi(x, \ \cdot\ )\cap \supp\mu(y, \ \cdot\ ) = \varnothing,\ \quad{\rm if}\ (x, y)\in \CD_\varepsilon.
\end{align*}
By \eqref{eq:suppPhib}, $\Phi(x, y, \hat\bx) = 0$ for 
all $\hat\bx\in\R^{3N-3}$ if $(x, y)\in\CD_\varepsilon$, as claimed. 
\end{proof}

Due to Lemma \ref{lem:empty} from now on  we may assume that $\SP^*\subset\SfS^{\rm c}$.

\begin{lem}
Let $\phi$ and $\mu$ be coupled admissible cut-offs, and let $\SP^*\subset \SfS^{\rm c}$. Then 
\begin{align}\label{eq:murep}
\tau(x_1, \hat\bx) = 
\mu\big(x_1, \hat\bx; \{1\}, \SP^c\big) \prod\limits_{j\in \SP^*} \t(x_1-x_j),
\end{align}
and
\begin{align}\label{eq:suppmu}
\supp\mu\subset X_{\SP^*}(\varepsilon(4N)^{-1}).
\end{align}
Furthermore, 
\begin{align}\label{eq:cut}
\tau(y, \hat\bx) = \mu\big(y, \hat\bx; \{1\}, \SP^c\big), 
\quad \textup{if}\ (x, y)\in \CD_\varepsilon\quad \textup{and} \ (x, \hat\bx)\in \supp\phi.
\end{align}
\end{lem}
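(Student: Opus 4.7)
The plan is to verify the three claims in order, in each case reading off the structure of $\mu$ directly from the definition of $\SfS = \SQ(\mu)$ together with the coupling condition that $\phi$ and $\mu$ share the factor $\varkappa(\hat\bx) = \phi(\bx;\SR^*) = \mu(\bx;\SR^*)$. In particular, if I write $f_{jk}^{(\phi)}$ and $f_{jk}^{(\mu)}$ for the building blocks of $\phi$ and $\mu$ respectively, coupling forces $f_{jk}^{(\phi)} = f_{jk}^{(\mu)}$ for every pair $j,k\in\SR^*$ with $j\neq k$; only the factors $f_{1k}$ can differ between $\phi$ and $\mu$.

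For \eqref{eq:murep}, I begin with $\tau(x_1,\hat\bx) = \mu(x_1,\hat\bx;\{1\},\SR^*) = \prod_{k\in\SR^*}f_{1k}^{(\mu)}(x_1-x_k)$ and split the product according to the partition $\SR^* = \SP^*\sqcup\SP^{\rm c}$, which is valid because $1\in\SP$ always. For each $j\in\SP^*\subset \SfS^{\rm c}$, the index $j$ is, by definition of $\SfS$, not $\mu$-linked to $1$; in particular the direct link $(1,j)$ is not in $I(\mu)$, so $f_{1j}^{(\mu)}=\t$. The $\SP^*$-part of the product therefore collapses to $\prod_{j\in\SP^*}\t(x_1-x_j)$, while the $\SP^{\rm c}$-part is by \eqref{eq:PS} exactly $\mu(x_1,\hat\bx;\{1\},\SP^{\rm c})$.

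For \eqref{eq:suppmu}, I observe that $X_{\SP^*}(\varepsilon(4N)^{-1})$ is the set where $|x_j-x_k|>\varepsilon(4N)^{-1}$ for every $j\in\SP^*$ and every $k\in(\SP^*)^{\rm c} = \{1\}\cup\SP^{\rm c}$. The separation from $x_1$ is immediate from \eqref{eq:murep} and the definition \eqref{eq:pu} of $\t$. For $k\in\SP^{\rm c}$ I must show $f_{jk}^{(\mu)}=\t$, and here the coupling enters decisively: since $j,k\in\SR^*$ (both $j$ and $k$ are distinct from $1\in\SP$), coupling gives $f_{jk}^{(\mu)} = f_{jk}^{(\phi)}$, and if this were not $\t$ then $(j,k)\in I(\phi)$; extending a $\phi$-link chain from $1$ to $j\in\SP$ by the edge $(j,k)$ would place $k\in\SP$, contradicting $k\in\SP^{\rm c}$. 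Hence each factor $\t(x_j-x_k)$ with $j\in\SP^*$, $k\in\SP^{\rm c}$ is present inside $\varkappa(\hat\bx)$, enforcing the required separation on $\supp\mu$.

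For \eqref{eq:cut}, I apply \eqref{eq:murep} with $y$ in place of $x_1$ to obtain $\tau(y,\hat\bx) = \mu(y,\hat\bx;\{1\},\SP^{\rm c})\prod_{j\in\SP^*}\t(y-x_j)$, and it suffices to show that each $\t(y-x_j)$, $j\in\SP^*$, equals $1$. Lemma \ref{lem:link} applied to $\phi$ on the hypothesis $(x,\hat\bx)\in\supp\phi$ gives $|x-x_j|<\varepsilon/2$ for every $j\in\SP$; combined with $|x-y|>\varepsilon$ from $(x,y)\in\CD_\varepsilon$, the triangle inequality yields $|y-x_j|>\varepsilon/2\ge\varepsilon/(2N)$, which by \eqref{eq:pu} is precisely the region where $\t\equiv 1$. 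The main obstacle is the middle claim: the separation it asserts is about $\SP^*$ (a cluster associated with $\phi$), yet it is required to hold on $\supp\mu$, and it is only the coupling condition that lets one translate the $\phi$-linking structure into a constraint on the $\mu$-factors.
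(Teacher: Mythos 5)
Your proof is correct and follows essentially the same route as the paper: you split $\tau$ over $\SR^*=\SP^*\cup\SP^{\rm c}$ and use $\SP^*\subset\SfS^{\rm c}$ for \eqref{eq:murep}, you read off from the coupling and from $\SP=\SQ(\phi)$ (i.e.\ the identity \eqref{eq:phipq}) that all factors of $\mu$ linking $\SP^*$ to $(\SP^*)^{\rm c}=\{1\}\cup\SP^{\rm c}$ are $\t$, which gives \eqref{eq:suppmu}, and you combine Lemma \ref{lem:link} with the triangle inequality and $\varepsilon/2\ge\varepsilon(2N)^{-1}$ for \eqref{eq:cut}. The only cosmetic difference is that you prove \eqref{eq:suppmu} uniformly, while the paper treats the case $\SP^{\rm c}=\varnothing$ separately via Lemma \ref{lem:supp} and $X_{\SfS}=X_{\SfS^{\rm c}}$; your uniform argument is equally valid.
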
 

\begin{proof} 
  Since $\SR^* = \SP^*\cup \SP^{\rm c}$, the function 
$\tau(x_1, \hat\bx) = \mu(x_1, \hat\bx;\{1\}, \SR^*)$  
factorizes as follows:
\begin{align*}
\tau(x_1, \hat\bx) = \mu\big(x_1, \hat\bx; \{1\}, \SP^c\big)
\mu\big(x_1, \hat\bx; \{1\}, \SP^*\big).
\end{align*}
As $\SP^*\subset \SfS^{\rm c}$, we have 
\begin{align*}
\mu(x_1, \hat\bx; \{1\}, \SP^*) = \prod\limits_{j\in \SP^*} \t(x_1-x_j),
\end{align*}
which leads to \eqref{eq:murep}. 

If $\SP^{\rm c} = \varnothing$, i.e. $\SP = \SR$, 
then the inclusion $\SP^*\subset \SfS^{\rm c}$ implies that 
$\SfS^{\rm c} = \SP^*$. 
By Lemma \ref{lem:supp}, 
\begin{align*}
\supp \mu\subset X_{\SfS}(\varepsilon(4N)^{-1}).
\end{align*} 
As $X_\SfS = X_{\SfS^{\rm c}}$, the claimed result follows. 

Assume now that $\SP^{\rm c}\not = \varnothing$. 
Consider separately the factors in the representation 
$\mu(\bx) = \tau(x_1, \hat\bx)\varkappa(\hat\bx)$.
Since 
\begin{align*}
\varkappa(\hat\bx) = \phi(\hat\bx; \SP^*) 
\phi(\hat\bx; \SP^{\rm c}) \phi(\hat\bx; \SP^*, \SP^{\rm c}), 
\end{align*}
in view of 
\eqref{eq:phipq} and definition \eqref{eq:pu} we have 
\begin{align}\label{eq:suppkappa}
\supp\varkappa\subset &\ \supp \phi(\ \cdot\ ; \SP^*, \SP^{\rm c})\notag\\
\subset&\ \{\hat\bx: |x_j-x_k|> \varepsilon(4N)^{-1}, 
j\in \SP^*, k\in \SP^{\rm c}\}.
\end{align} 
In view of \eqref{eq:murep}, by the definition \eqref{eq:pu} again,
\begin{align*}
\supp\tau\subset\{\bx: |x_1-x_j|>\varepsilon(4N)^{-1}, j\in\SP^*\}.
\end{align*} 
Since $\SP^{\rm c}\cup\{1\} = (\SP^*)^{\rm c}$, 
together with \eqref{eq:suppkappa}, this gives the inclusion
\begin{align*}
\supp\mu = \supp\varkappa\tau\subset \{\bx: |x_j-x_k|> \varepsilon(4N)^{-1}, 
j\in \SP^*, k\in (\SP^*)^{\rm c}\} = X_{\SP^*}(\varepsilon(4N)^{-1}),
\end{align*}
as required. 

Proof of \eqref{eq:cut}. Since $(x, \hat\bx)\in\supp\phi$, 
by Lemma \ref{lem:link} we have $|x-x_j|<\varepsilon/2$, $j\in\SP^*$. 
As $(x, y)\in\CD_\varepsilon$, this implies that  
$|y-x_j|> \varepsilon/2$ for all $j\in\SP^*$. 
By definition \eqref{eq:pu}, $\t(y-x_j) = 1$ if $|y-x_j|>\varepsilon(2N)^{-1}$. 
Therefore  $\t(y-x_j)=1$ for all 
$j\in\SP^*$. Hence the product on the right-hand side of 
\eqref{eq:murep} equals one, which implies \eqref{eq:cut}.
\end{proof}

In the next lemma we collect all the information on the supports 
of the coupled $\phi, \mu$ and associated extended 
cut-off $\Phi$ that we need in the next section. 

\begin{lem} 
If $\SP^*\subset \SfS^{\rm c}$, then 
\begin{align}\label{eq:phimux}
\begin{cases}
\supp \phi\subset X_{\SP}\big(\varepsilon(4N)^{-1}\big)\cap X_{\SfS^*}\big(\varepsilon(4N)^{-1}\big),\\[0.2cm]
\supp \mu\subset X_{\SfS}\big(\varepsilon(4N)^{-1}\big)\cap X_{\SP^*}\big(\varepsilon(4N)^{-1}\big).
\end{cases}
\end{align}
If $(x, y)\in \CD_\varepsilon$, then 
\begin{align}\label{eq:phihat}
\supp\Phi(x, y, \ \cdot\ ) = \supp\phi(x, \ \cdot\ )\cap \supp\mu(y, \ \cdot\ )\subset \widehat 
T_{\SP^*}(\varepsilon/2)\cap \widehat T_{\SfS^*}(\varepsilon/2).
\end{align}
\end{lem}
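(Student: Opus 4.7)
The plan is to assemble the two displayed inclusions from results already proved in this section, using only one new observation, namely that the hypothesis $\SP^*\subset\SfS^{\rm c}$ is symmetric in the pair $(\phi,\SP)$ and $(\mu,\SfS)$. This symmetry is essential because the previous lemma gave \eqref{eq:suppmu} only in the direction ``$\mu$ is trapped in $X_{\SP^*}$'', whereas the first claim of the present lemma asks for the companion inclusion ``$\phi$ is trapped in $X_{\SfS^*}$''.

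\textbf{Step 1 (symmetry of the hypothesis).} Since $1\in\SP\cap\SfS$ by the very definition of the associated clusters $\SQ(\phi)$ and $\SQ(\mu)$, the identity
\begin{align*}
\SP^*\cap\SfS = (\SP\cap\SfS)\setminus\{1\} = \SfS^*\cap\SP
\end{align*}
holds. Thus $\SP^*\subset\SfS^{\rm c}$ if and only if $\SP\cap\SfS=\{1\}$ if and only if $\SfS^*\subset\SP^{\rm c}$. In particular, every assertion we derive for the pair $(\phi,\SP,\mu,\SfS)$ under the assumption $\SP^*\subset\SfS^{\rm c}$ immediately yields the companion assertion with the roles of $\phi$ and $\mu$ interchanged.

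\textbf{Step 2 (proof of \eqref{eq:phimux}).} The inclusion $\supp\phi\subset X_{\SP}(\varepsilon(4N)^{-1})$ is Lemma~\ref{lem:supp} applied directly to $\phi$, and similarly $\supp\mu\subset X_{\SfS}(\varepsilon(4N)^{-1})$. The inclusion $\supp\mu\subset X_{\SP^*}(\varepsilon(4N)^{-1})$ is precisely \eqref{eq:suppmu} from the preceding lemma. Finally, the fourth inclusion $\supp\phi\subset X_{\SfS^*}(\varepsilon(4N)^{-1})$ is obtained by re-applying \eqref{eq:suppmu} with the roles of $\phi$ and $\mu$ swapped, which is legitimate by Step~1.

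\textbf{Step 3 (proof of \eqref{eq:phihat}).} The equality of supports is exactly \eqref{eq:suppPhib}, so only the inclusion requires argument. If $(x,y)\in\CD_\varepsilon$, then in particular $|x|>\varepsilon$ and $|y|>\varepsilon$. Applying \eqref{eq:suppphi1} of Lemma~\ref{lem:link} first to $\phi$ (with associated cluster $\SP$) and then to $\mu$ (with associated cluster $\SfS$), we obtain $\supp\phi(x,\cdot)\subset\widehat T_{\SP^*}(\varepsilon/2)$ and $\supp\mu(y,\cdot)\subset\widehat T_{\SfS^*}(\varepsilon/2)$. Intersecting these two inclusions, combined with \eqref{eq:suppPhib}, yields \eqref{eq:phihat}.

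The only genuinely new ingredient is the elementary set-theoretic observation in Step~1; everything else is bookkeeping built on previously established lemmata. I do not foresee any obstacle of substance — the main care is merely to verify that the symmetry of the hypothesis allows interchanging the roles of the two coupled cut-offs when invoking \eqref{eq:suppmu}.
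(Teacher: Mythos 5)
Your proposal is correct and follows essentially the same route as the paper: both deduce \eqref{eq:phimux} from Lemma \ref{lem:supp} together with \eqref{eq:suppmu}, applied once as stated and once with the roles of $\phi$ and $\mu$ interchanged via the equivalence $\SP^*\subset\SfS^{\rm c}\Leftrightarrow\SfS^*\subset\SP^{\rm c}$, and obtain \eqref{eq:phihat} from \eqref{eq:suppPhib} and \eqref{eq:suppphi1} using $|x|>\varepsilon$, $|y|>\varepsilon$ on $\CD_\varepsilon$. Your Step 1 merely spells out the symmetry justification that the paper states in one line.
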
  
 
\begin{proof} 
The second inclusion in \eqref{eq:phimux} follows from the inclusion \eqref{eq:suppphi} 
applied to function $\mu$ and 
from the relation \eqref{eq:suppmu}. The condition $\SP^*\subset\SfS^{\rm c}$ 
is equivalent to $\SfS^*\subset\SP^{\rm c}$. Thus the first inclusion in 
\eqref{eq:phimux} follows from the inclusion \eqref{eq:suppphi} 
and the relation \eqref{eq:suppmu} applied to 
the function $\phi$.

Since $|x|>\varepsilon$ and $|y|>\varepsilon$ for $(x, y)\in \CD_\varepsilon$, 
the inclusion \eqref{eq:phihat} follows from \eqref{eq:suppphi1} applied to $\phi$ and $\mu$. 
\end{proof} 

Similarly to the cluster derivatives 
\eqref{eq:phider} of the admissible cut-offs, 
now we need to investigate the cluster derivatives of the extended cut-offs. 
As before, let $\Phi(x, y, \hat\bx)$ be an extended cut-off associated with the coupled 
admissible cut-offs $\phi, \mu$, and let $\SP = \SQ(\phi)$, $\SfS = \SQ(\mu)$. 
It will be sufficient to confine our attention to the derivatives $\SD_\SP^l$ 
w.r.t. the variable $\bx$ 
for the cluster $\SP$. 

Assume that $\SP^{\rm c}\not = \varnothing$. 
Let $\phi^{(l)}_{s, r}$, $s\in\SP, r\in\SP^{\rm c}$,
be the admissible cut-offs defined in \eqref{eq:philrs}, 
and let  $\om^{(l)}_{s, r}$ and 
 $\varkappa^{(l)}_{s, r}$ be their canonical factors detailed in 
 \eqref{eq:factor1r} and \eqref{eq:factorsr} respectively. 
Using the factor $\tau$ from the canonical factorization  
$\mu(\hat\bx) = \tau(x_1, \hat\bx)\varkappa(\hat\bx)$  we define a new admissible cut-off 
\begin{align*}
\mu^{(l)}_{s, r}(x_1, \hat\bx) = \tau(x_1, \hat\bx) 
\varkappa^{(l)}_{s, r}(\hat\bx).
\end{align*}
It is clear that $\phi^{(l)}_{s, r}$ and $\mu^{(l)}_{s, r}$ 
are coupled to each other. 
Introduce the associated extended cut-off:
\begin{align}\label{eq:philmu}
\Phi^{(l)}_{s, r} (x, y, \hat\bx) =  \phi^{(l)}_{s, r}(x, \hat\bx) 
\tau(y, \hat\bx)
= \om^{(l)}_{s, r}(x, \hat\bx) \mu^{(l)}_{s, r}(y, \hat\bx). 
\end{align}
It follows from \eqref{eq:philrs}, \eqref{eq:factorsr} 
and \eqref{eq:supp}, \eqref{eq:suppderiv} 
that  
\begin{align}\label{eq:supports}
\supp\phi^{(l)}_{s, r}\subset \supp\phi, \quad 
\supp\mu^{(l)}_{s, r}\subset \supp\mu. 
\end{align}
Now we can describe the cluster derivatives of $\Phi$. The notation 
$\SD_{\SP}^l\Phi(x, y, \hat\bx)$ means taking the $l$th $\SP$-cluster derivative w.r.t. 
the variable $\bx = (x, \hat\bx)$.

\begin{lem}\label{lem:derivext}
Let $\Phi = \Phi(x, y, \hat\bx)$ be an extended cutoff 
associated with the coupled admissible $\phi$  and $\mu$ 
and assume that $\SP^*\subset \SfS^{\rm c}$. Let $l\in\mathbb N_0^3$ 
be an arbitrary multi-index such that $|l|=1$. 

If $\SP^{\rm c} = \varnothing$, i.e. $\SP = \SR$, then $\SD_\SP^l\Phi(x, y, \hat\bx) = 0$ 
for all $(x, y)\in \CD_\varepsilon$. 

If $\SP^{\rm c}\not = \varnothing$, then  
\begin{align}\label{eq:derivext}
\SD_{\SP}^l \Phi(x, y, \hat\bx) = \sum_{s\in\SP, r\in \SP^{\rm c}} 
\Phi^{(l)}_{s, r} (x, y, \hat\bx),
\quad \textup{for all}\ (x, y)\in \CD_\varepsilon,
\end{align}
where the extended cut-offs $\Phi^{(l)}_{s, r} (x, y, \hat\bx)$ 
are defined in \eqref{eq:philmu}. 
\end{lem}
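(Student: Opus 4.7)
The plan is to differentiate the representation $\Phi(x, y, \hat\bx) = \phi(x, \hat\bx)\tau(y, \hat\bx)$ from \eqref{eq:Phi} directly. Since $\tau$ does not depend on $x = x_1$ and since $|l|=1$ means $\SD_\SP^l = \sum_{k\in\SP}\p_{x_k^\alpha}$ for the unique component $\alpha$ picked out by $l$, the Leibniz rule yields
\begin{align*}
\SD_\SP^l\Phi(x, y, \hat\bx) = (\SD_\SP^l\phi)(x, \hat\bx)\,\tau(y,\hat\bx) + \phi(x, \hat\bx)\sum_{k\in \SP^*}(\p_{x_k^\alpha}\tau)(y, \hat\bx).
\end{align*}
The problem thus splits into two tasks: (i) rewrite the first term in the form dictated by the lemma, and (ii) show that the cross term $\phi\sum_{k\in\SP^*}\p_{x_k^\alpha}\tau$ vanishes identically on $\CD_\varepsilon$.

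I expect step (ii) to be the main obstacle, and it is precisely where the hypothesis $(x, y)\in\CD_\varepsilon$ enters. Using \eqref{eq:murep} I would write
\begin{align*}
\tau(y, \hat\bx) = \mu(y, \hat\bx; \{1\}, \SP^{\rm c})\prod_{j\in \SP^*}\t(y-x_j),
\end{align*}
so that for each $k\in\SP^*$ the derivative $\p_{x_k^\alpha}\tau$ produces a factor of $\p_{x^\alpha}\t$ evaluated at $y - x_k$. If $\phi(x, \hat\bx)\neq 0$ then $(x, \hat\bx)\in\supp\phi$, and Lemma \ref{lem:link} gives $|x - x_k| < \varepsilon/2$ for every $k\in\SP^*$; combined with $|x-y|>\varepsilon$ on $\CD_\varepsilon$, the triangle inequality forces $|y - x_k| > \varepsilon/2 > \varepsilon/(2N)$. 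By the definition \eqref{eq:pu}, $\t$ is identically $1$ on this regime and hence its derivative vanishes in a full neighborhood of $y - x_k$, killing each summand of the cross term pointwise.

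Step (i) is then almost immediate. In the case $\SP^{\rm c}=\varnothing$ the factorization \eqref{eq:phider} already shows $\SD_\SP^l\phi\equiv 0$, so the first term also vanishes and the identity $\SD_\SP^l\Phi=0$ on $\CD_\varepsilon$ follows. If $\SP^{\rm c}\neq\varnothing$, I would invoke \eqref{eq:derivrep} to expand $\SD_\SP^l\phi = \sum_{s\in\SP,\, r\in\SP^{\rm c}}\phi^{(l)}_{s,r}$, multiply through by $\tau(y,\hat\bx)$, and compare with the definition \eqref{eq:philmu} $\Phi^{(l)}_{s,r} = \phi^{(l)}_{s,r}\,\tau$ to recover \eqref{eq:derivext}. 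The only non-routine ingredient is the support argument in step (ii); everything else is bookkeeping from the earlier lemmata of this section.
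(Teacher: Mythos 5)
Your proof is correct and follows essentially the paper's own route: the paper first substitutes $\tau(y,\hat\bx)=\mu(y,\hat\bx;\{1\},\SP^{\rm c})$ on $\CD_\varepsilon$ via \eqref{eq:cut}, so the $y$-factor is independent of the $\SP^*$ variables and the cross term never appears, whereas you keep $\tau$ and kill the cross term pointwise using exactly the same support facts (Lemma \ref{lem:link}, the factorization \eqref{eq:murep} and the flatness of $\t$) that underlie \eqref{eq:cut}. The remaining identification through \eqref{eq:derivrep} and the definition \eqref{eq:philmu} is the same bookkeeping in both arguments.
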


\begin{proof}
According to \eqref{eq:cut}, 
\begin{align*}
\Phi(x, y, \hat\bx) = 
\phi(x, \hat\bx) 
\mu\big(y, \hat\bx; \{1\}, \SP^{\rm c}\big),\quad (x, y) \in\CD_\varepsilon. 
\end{align*}
Therefore, if $\SP^{\rm c} = \varnothing$, then 
$\Phi(x, y, \hat\bx) = \phi(x, \hat\bx)$. As $\SD_{\SP}^l\phi(\bx) = 0$, we also have 
  $\SD_\SP^l\Phi(x, y, \hat\bx) = 0$, as claimed. 

Assume that $\SP^{\rm c}\not = \varnothing$. By \eqref{eq:supports} and \eqref{eq:cut}, 
\begin{align}
\Phi^{(l)}_{s, r}(x, y, \hat\bx) = 
\phi^{(l)}_{s, r}(x, \hat\bx) 
\mu\big(y, \hat\bx; \{1\}, \SP^{\rm c}\big),
\quad s\in\SP, r\in\SP^{\rm c},
\label{eq:restr1}
\end{align}
for all $(x, y)\in \CD_\varepsilon$. 
Since the factor $\mu\big(y, \hat\bx; \{1\}, \SP^{\rm c}\big)$ 
does not depend on $x_j$ with $j\in\SP^*$, we have 
\begin{align*}
\SD_{\SP}^l \Phi(x, y, \hat\bx) = &\ D_{\SP}^l\phi(x, \hat\bx) 
\, \mu\big(y, \hat\bx; \{1\}, \SP^{\rm c}\big) + \phi(x, \hat\bx) \SD_{\SP^*}^l \mu\big(y, \hat\bx; \{1\}, \SP^{\rm c}\big)\\
= &\ D_{\SP}^l\phi(x, \hat\bx)\,  
\mu\big(y, \hat\bx; \{1\}, \SP^{\rm c}\big).
\end{align*}
Using \eqref{eq:derivrep} and \eqref{eq:restr1} we get 
\begin{align*}
\SD_{\SP}^l \Phi(x, y, \hat\bx) = \sum_{s\in\SP, r\in \SP^{\rm c}} \phi^{(l)}_{s, r}(x, \hat\bx)\,  
\mu\big(y, \hat\bx; \{1\}, \SP^{\rm c}\big) 
= \sum_{s\in\SP, r\in \SP^{\rm c}} \Phi^{(l)}_{s, r}(x, y, \hat\bx),
\end{align*}
as required.
\end{proof}
   
\section{Estimating the density matrix}\label{sect:estimates}

The proof of Theorem \ref{thm:deriv2} which is given in Sect. \ref{sect:proof}, 
uses a partition of unity that consists of extended cut-off functions, i.e. functions 
of the form \eqref{eq:Phi}. Thus the objective of this section is to estimate the derivatives of the function 
\begin{align}\label{eq:dens0}
\g(x, y; \Phi)  
= \int_{\R^{3(N-1)}} 
\psi(x, \hat\bx)\overline{\psi(y, \hat\bx)} \Phi(x, y, \hat\bx) 
d\hat\bx, 
\end{align}
with an extended cut-off $\Phi$ on the set $\CD_\varepsilon$:

\begin{thm}\label{thm:deriv}
Let $\Phi$ be an extended cut-off. Then for all $m, k\in\mathbb N_0^{3}$ we have 
\begin{align*}
\|\p_{x}^{k}\p_y^{m}\g 
(\ \cdot\ , \ \cdot\ ; \Phi)\|_{\plainL2(\CD_\varepsilon)}\le A^{|k|+|m|+2} 
(|k|+|m|+1)^{|k|+|m|}.
\end{align*}
The constant $A$ depends on $\varepsilon>0$, but does not depend on $\Phi$. 
\end{thm}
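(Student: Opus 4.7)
The plan is to convert the partial derivatives $\p_x^k\p_y^m$ acting on $\g(x,y;\Phi)$ into cluster derivatives of $\psi$ by a translation change of variables in $\hat\bx$, and then to estimate using the $\plainL2$-bounds of Corollary \ref{cor:derivl2} together with Cauchy--Schwarz in $\hat\bx$ and Fubini in $(x,y)$. By Lemma \ref{lem:empty} I may assume $\SP^*\subset\SfS^{\rm c}$, so $\SP^*\cap\SfS^*=\varnothing$; writing $\SR^*=\SP^*\sqcup\SfS^*\sqcup\SM$ where $\SM$ collects the remaining indices, I perform the unimodular substitution $w_j=x_j-x$ for $j\in\SP^*$, $w_j=x_j-y$ for $j\in\SfS^*$, and $w_j=x_j$ for $j\in\SM$. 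The chain rule shows that $\p_x$ acting on the transformed integrand produces three types of contributions, coming from differentiation of $\psi(x,\hat\bx)$, of $\psi(y,\hat\bx)$, and of $\Phi(x,y,\hat\bx)$ respectively; undoing the substitution gives
\begin{align*}
\p_x\g(x,y;\Phi)
&=\int(\SD_\SP\psi)(x,\hat\bx)\overline{\psi(y,\hat\bx)}\,\Phi\,d\hat\bx
+\int\psi(x,\hat\bx)\overline{(\SD_{\SP^*}\psi)(y,\hat\bx)}\,\Phi\,d\hat\bx\\
&\quad+\int\psi(x,\hat\bx)\overline{\psi(y,\hat\bx)}\,(\SD_\SP\Phi)(x,y,\hat\bx)\,d\hat\bx,
\end{align*}
with a symmetric formula for $\p_y$ involving $\SD_\SfS$ and $\SD_{\SfS^*}$.

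Next, Lemma \ref{lem:derivext} rewrites $\SD_\SP\Phi$ as a finite sum of new extended cut-offs with strictly larger associated clusters. Iterating the identity above together with Lemma \ref{lem:derivext} a total of $|k|+|m|$ times, I obtain the finite expansion
\begin{align*}
\p_x^k\p_y^m\g(x,y;\Phi)=\sum_j c_j\int(\SD^{\alpha_j}\psi)(x,\hat\bx)\overline{(\SD^{\beta_j}\psi)(y,\hat\bx)}\,\Phi_j(x,y,\hat\bx)\,d\hat\bx,
\end{align*}
where each $\Phi_j$ is an extended cut-off associated with coupled admissible $\phi_j,\mu_j$, the orders satisfy $|\alpha_j|+|\beta_j|\le|k|+|m|$, and both the number of summands and $|c_j|$ are controlled by $C^{|k|+|m|}$. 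The expansion terminates because once an associated cluster exhausts $\SR$, the corresponding cluster derivative of $\Phi_j$ vanishes on $\CD_\varepsilon$ by Lemma \ref{lem:derivext}.

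For each summand, Cauchy--Schwarz in $\hat\bx$ combined with $|\Phi_j|\le\|\Phi_j\|_{\plainL\infty}\chi_{\supp\Phi_j}$ yields the pointwise bound
\begin{align*}
\bigg|\int(\SD^{\alpha_j}\psi)(x,\hat\bx)\overline{(\SD^{\beta_j}\psi)(y,\hat\bx)}\Phi_j\,d\hat\bx\bigg|^2
&\le\|\Phi_j\|_{\plainL\infty}^2\,\|\SD^{\alpha_j}\psi(x,\cdot)\|_{\plainL2(\supp\phi_j(x,\cdot))}^2\\
&\quad\times\|\SD^{\beta_j}\psi(y,\cdot)\|_{\plainL2(\supp\mu_j(y,\cdot))}^2.
\end{align*}
Integrating over $(x,y)\in\CD_\varepsilon\subset\{|x|>\varepsilon\}\times\{|y|>\varepsilon\}$ and invoking Fubini separates the right-hand side into a product of norms. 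The support estimates \eqref{eq:suppphi}--\eqref{eq:suppphi1} show that, for $|x|>\varepsilon$, the $x$-slice of $\supp\phi_j$ lies inside $U_{\SQ(\phi_j)}(c\varepsilon)$, and analogously for $\mu_j$; Corollary \ref{cor:derivl2} thus bounds each factor by $A_2^{|\alpha_j|+1}(1+|\alpha_j|)^{|\alpha_j|}$ and the analogous expression in $\beta_j$. Summing over $j$ with the combinatorial count and absorbing $\|\Phi_j\|_{\plainL\infty}$ into the constant yields the required bound $A^{|k|+|m|+2}(1+|k|+|m|)^{|k|+|m|}$.

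The hard part will be the iterated Leibniz expansion in the second step: arranging it so that the combinatorial number of terms stays sub-factorial in $|k|+|m|$, and verifying that every admissible cut-off produced along the way still carries its support inside the region where Corollary \ref{cor:derivl2} applies, with no loss in the factorial growth rate. A cleanly formulated induction on $|k|+|m|$ that simultaneously tracks cluster size, cluster-derivative order on $\psi$, and $\plainL\infty$-norms of the accumulating cut-offs should accomplish this.
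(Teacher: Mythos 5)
Your outline is essentially the paper's own strategy (translation change of variables turning $\p_x$ into cluster derivatives, Lemma \ref{lem:derivext} for $\SD_\SP^l\Phi$, iteration, Cauchy--Schwarz, then Corollary \ref{cor:derivl2}), but there is a genuine gap precisely at the point you defer as ``the hard part'', and the one concrete suggestion you make for closing it would fail. After iterating your identity, the derivative $\SD^{\alpha_j}$ falling on $\psi(x,\hat\bx)$ is not a cluster derivative with respect to the single cluster $\SQ(\phi_j)$ of the current cut-off: it is a composition of cluster derivatives with respect to all the (generally distinct, nested) clusters produced at the earlier stages, and the derivative falling on $\psi(y,\hat\bx)$ involves clusters of the form $\SP^*$ inherited from the \emph{first} factor, which have nothing to do with $\SQ(\mu_j)$. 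To invoke Corollary \ref{cor:derivl2} you need the relevant slice of $\supp\Phi_j$ to lie in $U_{\BP}(c\varepsilon)$ for the \emph{whole accumulated cluster set} $\BP$, i.e.\ in the intersection of $X_{\SP_i}$ and $\widehat T_{\SP_i^*}$ over every cluster $\SP_i$ occurring in $\alpha_j$. Your justification, that the $x$-slice of $\supp\phi_j$ lies in $U_{\SQ(\phi_j)}(c\varepsilon)$, does not give this: if $\SP\subsetneq\SQ$ then $X_{\SQ}\not\subset X_{\SP}$ (separating $\SQ$ from $\SQ^{\rm c}$ says nothing about separating $\SP$ from $\SQ\setminus\SP$), so containment associated with the final, largest cluster does not control the earlier, smaller clusters with respect to which the earlier derivatives were taken.

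What closes the gap in the paper is the explicit bookkeeping with a pair of cluster sets: one works with the generalized quantities $\g_{\bk,\bm}(x,y;\BP,\BS;\Phi)$ of \eqref{eq:dens} under the support hypotheses \eqref{eq:suppembed1}, \eqref{eq:suppembed2}, and Lemma \ref{lem:oneup} shows these hypotheses \emph{propagate} through one differentiation, using $\supp\phi^{(l)}_{s,r}\subset\supp\phi$, $\supp\mu^{(l)}_{s,r}\subset\supp\mu$ (see \eqref{eq:supports}) together with \eqref{eq:phimux}, \eqref{eq:phihat} and, for the second factor, \eqref{eq:suppmu}; the latter requires $\SP^*\subset\SfS^{\rm c}$, which must be re-checked via Lemma \ref{lem:empty} for the \emph{current} pair of clusters at every stage of the iteration (the clusters grow as the cut-offs change, so a single application at the outset does not suffice). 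This is the content of \eqref{eq:phimu}--\eqref{eq:phixyl} and the induction of Proposition \ref{prop:deriv}. Your combinatorial count ($2+N^2$ new terms per differentiation, hence at most $C^{|k|+|m|}$ terms) is correct and matches the paper's constant $A=2A_3+N^2$, but without the propagated support conditions the application of Corollary \ref{cor:derivl2} in your final estimate is unjustified, and that is the substance of the proof rather than a technicality.
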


The proof of Theorem \ref{thm:deriv} is conducted by induction and 
hence we have to involve a more general object than \eqref{eq:dens0}. Precisely, 
for cluster sets $\BP=\{\SP_1, \SP_2, \dots, \SP_M\}$, $\BS=\{\SfS_1, \SfS_2, \dots, \SfS_K\}$, 
and multi-indices 
$\bk\in \mathbb N_0^{3M}, \bm\in \mathbb N_0^{3K}$,  
introduce the function  
\begin{align}\label{eq:dens}
\g_{\bk, \bm}(x, y; \BP, &\ \BS; \Phi) 
= \int_{\R^{3(N-1)}} 
\SD_{\BP}^{\bk}\psi(x, \hat\bx)\overline{\SD_{\BS}^{\bm}\psi(y, \hat\bx)} 
\Phi(x, y, \hat\bx) 
d\hat\bx.
\end{align}
If $\bm = \mathbf 0$ (and/or $\bk = \mathbf 0$), then 
this integral is independent 
of $\BP$ (and/or $\BS$), and 
in this case we set $\BP = \varnothing$ (and/or $\BS = \varnothing$).  
Thus $\g_{\mathbf 0, \mathbf 0}(x, y; \varnothing, \varnothing; \Phi)$ coincides with the function 
in \eqref{eq:dens0}.
Note the symmetry of $\g_{\bk, \bm}$:
\begin{align}\label{eq:sym}
\g_{\bk, \bm}(x, y; \BP, \BS; \Phi)  = \overline{\g_{\bm, \bk}(y, x; \BS, \BP; \tilde\Phi) },\quad 
\tilde\Phi(y, x; \hat\bx) = \Phi(x, y; \hat\bx).
\end{align}
We estimate the function $\g_{\bk, \bm}$ and its derivatives 
on the set $\CD_\varepsilon, \varepsilon>0$, 
defined in \eqref{eq:deps} with the help of Corollary \ref{cor:derivl2} 
by reducing the estimates to the integrals $\|\SD_\BP^{\bk}\psi\|_{\plainL2(U_{\BP}(\varepsilon))}$ 
and $\|\SD_\BS^{\bm}\psi\|_{\plainL2(U_\BS(\varepsilon))}$.  
To this end we assume that the 
admissible cut-offs  $\phi$ and $\mu$ associated with $\Phi$  
satisfy the following support conditions:
\begin{align}\label{eq:suppembed1}
\supp\phi\subset X_{\BP}\big(\varepsilon(4N)^{-1}\big), \quad 
\supp\mu\subset X_{\BS}\big(\varepsilon(4N)^{-1}\big),
\end{align}
and   
\begin{align}\label{eq:suppembed2}
\supp\phi(x,\ \cdot\ ) \cap \supp\mu(y,\ \cdot\ )\subset 
\widehat T_{\BP^*}(\varepsilon/2)\cap \widehat T_{\BS^*}(\varepsilon/2),
\quad 
\textup{for all}\  (x, y)\subset \CD_\varepsilon. 
\end{align}
Note that 
conditions \eqref{eq:suppembed1} and \eqref{eq:suppembed2} are automatically satified 
for $\BP = \{\SP\}$, 
$\BS = \{\SfS\}$, where as usual $\SP = \SQ(\phi)$, $\SfS = \SQ(\mu)$.
Indeed, \eqref{eq:suppembed1} follows from \eqref{eq:suppphi}, 
and \eqref{eq:suppembed2} follows from \eqref{eq:phihat}.

Recall that the sets $X, T, \widehat T$ with various subscripts 
are defined in \eqref{eq:xp}, \eqref{eq:tp}, \eqref{eq:hattp}, \eqref{eq:up}. 
For brevity, throughout the proofs below for an arbitrary cluster set $\BQ$  we 
use the notation $T_\BQ = T_\BQ(\varepsilon/2)$, 
$\widehat T_\BQ = \widehat T_\BQ(\varepsilon/2)$ 
and $X_\BQ = X_\BQ(\varepsilon(4N)^{-1})$.

\begin{lem}\label{lem:base}
Suppose that $\Phi$ is of the form \eqref{eq:Phi} and  
that \eqref{eq:suppembed1} 
and \eqref{eq:suppembed2} hold. Then there exists a constant $A_3$, independent of 
the cluster sets $\BP, \BS$, and of the cut-off $\Phi$, such that 
\begin{align*}
\|\gamma_{\bk, \bm}(\ \cdot\ , \ \cdot\ ; \BP,  \BS; \Phi)\|_{\plainL2(\CD_\varepsilon)}
\le A_3^{|\bk|+|\bm|+2} (|\bk|+|\bm|+1)^{|\bk|+|\bm|},
\end{align*}
for all $\bk\in \mathbb N_0^{3M}$, $\bm\in\mathbb N_0^{3K}$.  
\end{lem}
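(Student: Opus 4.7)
The plan is to reduce the $\plainL2(\CD_\varepsilon)$-bound on $\g_{\bk,\bm}$ to the $\plainL2$-estimates for cluster derivatives of $\psi$ furnished by Corollary \ref{cor:derivl2}, via a Cauchy--Schwarz inequality in the $\hat\bx$-variable.

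First I would observe that $\Phi$, being a finite product of the uniformly bounded functions $\z$, $\t$, and their first derivatives, satisfies $|\Phi|\le C_N$ for some constant $C_N$ depending only on $N$. Applying Cauchy--Schwarz to the defining integral \eqref{eq:dens} in the $\hat\bx$-integration then yields
\begin{align*}
|\g_{\bk,\bm}(x,y;\BP,\BS;\Phi)|^2
&\le C_N^2 \int_{\supp\Phi(x,y,\,\cdot\,)} |\SD_\BP^{\bk}\psi(x,\hat\bx)|^2\, d\hat\bx \\
&\qquad \times \int_{\supp\Phi(x,y,\,\cdot\,)} |\SD_\BS^{\bm}\psi(y,\hat\bx)|^2\, d\hat\bx.
\end{align*}

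Next, I would combine the inclusion $\supp\phi \subset X_\BP(\varepsilon(4N)^{-1})$ from \eqref{eq:suppembed1} with $\supp\Phi(x,y,\,\cdot\,) \subset \widehat T_{\BP^*}(\varepsilon/2)$ from \eqref{eq:suppembed2} (valid whenever $(x,y)\in\CD_\varepsilon$) and with $|x|>\varepsilon$ to conclude that the $\hat\bx$-domain of the first inner integral is contained in the slice $\{\hat\bx : (x,\hat\bx) \in U_\BP(\varepsilon(4N)^{-1})\}$, which depends only on $x$. A symmetric argument controls the second integral in terms of a slice of $U_\BS(\varepsilon(4N)^{-1})$, depending only on $y$. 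Since $\CD_\varepsilon \subset \{|x|>\varepsilon\}\times\{|y|>\varepsilon\}$, Fubini's theorem then gives
\begin{align*}
\|\g_{\bk,\bm}(\,\cdot\,,\,\cdot\,;\BP,\BS;\Phi)\|^2_{\plainL2(\CD_\varepsilon)}
\le C_N^2\, \|\SD_\BP^{\bk}\psi\|^2_{\plainL2(U_\BP(\varepsilon(4N)^{-1}))}\, \|\SD_\BS^{\bm}\psi\|^2_{\plainL2(U_\BS(\varepsilon(4N)^{-1}))}.
\end{align*}
Finally, applying Corollary \ref{cor:derivl2} with parameter $\min(\varepsilon(8N)^{-1},1)\in(0,1]$ bounds the two factors by $A_2^{|\bk|+1}(|\bk|+1)^{|\bk|}$ and $A_2^{|\bm|+1}(|\bm|+1)^{|\bm|}$ respectively, where $A_2$ depends only on $\varepsilon$ (and not on $\BP,\BS,\Phi$). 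Taking square roots and invoking the elementary inequality $(|\bk|+1)^{|\bk|}(|\bm|+1)^{|\bm|} \le (|\bk|+|\bm|+1)^{|\bk|+|\bm|}$ yields the claimed bound with $A_3 = C_N^{1/2} A_2$.

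The step requiring the most care is the geometric reduction in the second paragraph: verifying that $(x,\hat\bx)\in U_\BP(\varepsilon(4N)^{-1})$ whenever $(x,\hat\bx)\in\supp\phi$, $|x|>\varepsilon$, and $\hat\bx\in\widehat T_{\BP^*}(\varepsilon/2)$. This reduces to checking $|x_j|>\varepsilon(4N)^{-1}$ for every index $j$ belonging to some cluster of $\BP$: the case $j=1$ is handled by $|x|>\varepsilon\ge\varepsilon(4N)^{-1}$, while the case $j\in\SP_s^*$ is handled by the assumption $\hat\bx\in\widehat T_{\BP^*}(\varepsilon/2)\subset\widehat T_{\BP^*}(\varepsilon(4N)^{-1})$. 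Once this geometric reduction is in place, the rest is a routine assembly of the preceding ingredients.
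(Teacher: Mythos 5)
Your proof is correct and takes essentially the same route as the paper's: Cauchy--Schwarz in the $\hat\bx$-integral, the support conditions \eqref{eq:suppembed1}--\eqref{eq:suppembed2} (together with $|x|>\varepsilon$, $|y|>\varepsilon$) to localize the two factors to $(x,\hat\bx)\in U_{\BP}(\varepsilon(4N)^{-1})$ and $(y,\hat\bx)\in U_{\BS}(\varepsilon(4N)^{-1})$, and then Corollary \ref{cor:derivl2}. The only immaterial difference is that you bound $|\Phi|$ by a constant and track $\supp\Phi(x,y,\ \cdot\ )$, whereas the paper keeps the weights via $|\Phi|\le C_a|\phi(x,\hat\bx)|^{1/2}|\mu(y,\hat\bx)|^{1/2}$ inside the H\"older step.
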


\begin{proof}   
Let 
\begin{align*}
C_a = \max \{1, \max_{l: |l|=1}\|\p^l\t\|^{N^2}\},
\end{align*}
so that $|\om|, |\tau|, |\varkappa|, |\phi|, |\mu|\le C_a$. 
Therefore
\begin{align*}
|\Phi(x, y, \hat\bx)| = &\ |\om(x, \hat\bx)||\tau(y, \hat\bx)||\varkappa(\hat\bx)|\\
= &\ |\om(x, \hat\bx)|^{\frac{1}{2}}|\tau(y, \hat\bx)|^{\frac{1}{2}}
|\om(x, \hat\bx)\varkappa(\hat\bx)|^{\frac{1}{2}} 
 |\tau(y, \hat\bx)\varkappa(\hat\bx)|^{\frac{1}{2}} 
 \le C_a |\phi(x, \hat\bx)|^{\frac{1}{2}}|\mu(y, \hat\bx)|^{\frac{1}{2}}.
\end{align*}                                                                                                                                                                                                                                                                                                                                                                                                                                                                                                                                                                                                                                                                                                                                                                                                                                                                                                                                                                                                                                                                                                                                                                                                                                                                                                                                                                                                                                                                                                                                                                                                                                                                                                                                                                                                                                                                                                                                                                                                                                                                                                                                                                                                                                                                                                                                                                                                                                                                                                                                                                                                                                                                                                                                                                     
Now, using \eqref{eq:suppembed2}, we can estimate:
\begin{align*}
\|\gamma_{\bk, \bm}(\ \cdot\ , \ \cdot\ ; &\ \BP,  \BS; \Phi)\|_{\plainL2(\CD)}^2\\
\le &\ C_a^2 \int\limits_{|x|>\varepsilon}
\int\limits_{|y|>\varepsilon}   
\bigg[\int\limits_{\widehat T_{\BP^*}\cap \widehat T_{\BS^*}} 
|\SD_{\BP}^{\bk}\psi(x, \hat\bx)| |\SD_{\BS}^{\bm}\psi(y, \hat\bx)| 
|\phi(x, \hat\bx)|^{\frac{1}{2}}
|\mu(y, \hat\bx)|^{\frac{1}{2}}
d\hat\bx\bigg]^2\, dy dx.
\end{align*}
By  H\"older's inequality and by \eqref{eq:suppembed1}, 
the right-hand side does not exceed
\begin{align*}
 &\  C_a^2\int\limits_{|x|> \varepsilon} \ \int\limits_{\widehat T_{\BP^*}} 
|\SD_{\BP}^{\bk}\psi(x, \hat\bx)|^2 |\phi(x, \hat\bx)|d\hat\bx \, dx
\int\limits_{|y|> \varepsilon} \ \int\limits_{\widehat T_{\BS^*}} 
|\SD_{\BS}^{\bm}\psi(y, \hat\bx)|^2 |\mu(y, \hat\bx)| d\hat\bx \, dy\\[0.2cm]
&\ \qquad\le 
C_a^4 \iint\limits_{X_{\BP}\cap T_{\BP}} 
|\SD_{\BP}^{\bk}\psi(x, \hat\bx)|^2 \ d\hat\bx\, dx 
\iint\limits_{X_{\BS}\cap T_{\BS}} |\SD_{\BS}^{\bm}\psi(y, \hat\bx)|^2\ d\hat\bx\, dy.
\end{align*}
Since $X_{\BP}\cap T_{\BP}\subset U_{\BP}(\varepsilon(4N)^{-1})$
(see the definition \eqref{eq:up}), and a similar inclusion holds for 
the cluster set $\BS$, by Corollary \ref{cor:derivl2}, the right-hand side does not exceed
\begin{align*}
C_a^4 A_2^{2(|\bk|+|\bm|+2)}(|\bk|+1)^{2|\bk|}(|\bm|+1)^{2|\bm|}
\le A_3^{2(|\bk|+|\bm|+2)}(|\bk|+|\bm|+1)^{2(|\bk|+|\bm|)},
\end{align*}
with $A_3 = C_a^2A_2$. This implies the required bound. 
\end{proof}

Let the functions $\phi^{(l)}_{s, r}$ and $\mu^{(l)}_{s, r}, \Phi^{(l)}_{s, r}$ 
be as defined in \eqref{eq:philrs} and 
\eqref{eq:philmu}  
respectively.  
As in the previous section, 
we use the notation $\SP = \SQ(\phi)$ and $\SfS = \SQ(\mu)$. In the next lemma we show how the derivatives 
of $\gamma_{\bk, \bm}$ w.r.t. the variable $x$ transform into directional derivatives under the integral 
\eqref{eq:dens}.

\begin{lem} \label{lem:oneup}
Suppose that \eqref{eq:suppembed1} and \eqref{eq:suppembed2} hold.  
Assume that $\SP^*\subset\SfS^{\rm c}$. Then 
\begin{align}\label{eq:phimu}
\supp \phi\subset X_{\{\SP, \BP\}},\quad \supp\mu\subset X_{\{\SP^{*}, \BS\}}.
\end{align}
Furthermore,
\begin{align}\label{eq:phixy}
\supp\phi(x,\ \cdot\ )\cap \supp\mu(y,\ \cdot\ )\subset 
 \widehat T_{\{\SP^*, \BP^*\}} \cap \widehat T_{\{\SP^*, \BS^*\}},
 \quad \textup{for all}\  (x, y)\in\CD_\varepsilon. 
\end{align}

Let $l\in \mathbb N_0^3$ be such that $|l|=1$.  
If $\SP^{\rm c} = \varnothing$, 
then 
\begin{align}\label{eq:stepupp}
\p_x^l\g_{\bk, \bm}(x, y; \BP, \BS; \Phi) = 
\g_{(l, \bk), \bm}\big(x, y; \{\SP, \BP\}, \BS; \Phi\big) 
+ \g_{\bk, (l, \bm)}\big(x, y; \BP, \{\SP^*, \BS\}; \Phi\big),
\end{align} 
and both sides are square-integrable in $(x, y)\in \CD_\varepsilon$.

If $\SP^{\rm c}\not = \varnothing$, 
then for all $s\in \SP, r\in \SP^{\rm c}$ we have 
\begin{align}\label{eq:phimul}
\supp\phi^{(l)}_{s, r}\subset X_{\BP}, \quad \supp\mu^{(l)}_{s, r}\subset X_{\BS},
\end{align}
and 
\begin{align}\label{eq:phixyl}
\supp\phi^{(l)}_{s, r}(x,\ \cdot\ )\cap 
\supp\mu^{(l)}_{s, r}(y,\ \cdot\ )
\subset \widehat T_{\BP^*}\cap \widehat T_{\BS^*},
\end{align}
for every $(x, y)\in \CD_\varepsilon$. 
Furthermore, the formula holds:
\begin{align}\label{eq:stepup}
\p_x^l\g_{\bk, \bm}(x, y; \BP, \BS; \Phi) = &\ 
\g_{(l, \bk), \bm}\big(x, y; \{\SP, \BP\}, \BS; \Phi\big) 
+ \g_{\bk, (l, \bm)}\big(x, y; \BP, \{\SP^*, \BS\}; \Phi\big)\notag\\
&\ + 
\sum_{s\in \SP, r\in \SP^{\rm c}}\g_{\bk, \bm}\big(x, y; \BP, \BS; \Phi^{(l)}_{s, r}\big),
\end{align} 
and both sides are square-integrable in $(x, y)\in \CD_\varepsilon$.
\end{lem}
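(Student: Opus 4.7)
The plan is to split the argument into three parts: (a) verifying the support inclusions \eqref{eq:phimu}, \eqref{eq:phixy}, \eqref{eq:phimul} and \eqref{eq:phixyl}; (b) deriving the differentiation identities \eqref{eq:stepupp} and \eqref{eq:stepup}; (c) concluding square-integrability.

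For (a) I will combine the standing hypotheses \eqref{eq:suppembed1}--\eqref{eq:suppembed2} with the single-cut-off support descriptions proved earlier. Lemma \ref{lem:supp} applied to $\phi$ gives $\supp\phi\subset X_\SP$, which together with $\supp\phi\subset X_\BP$ from \eqref{eq:suppembed1} yields the first inclusion in \eqref{eq:phimu}; for $\mu$ I use \eqref{eq:suppmu} (whose hypothesis $\SP^*\subset\SfS^{\rm c}$ is exactly the current assumption) to get $\supp\mu\subset X_{\SP^*}$, and combine with $\supp\mu\subset X_\BS$. For \eqref{eq:phixy}, Lemma \ref{lem:link} applied to $\phi$ shows $\supp\phi(x,\,\cdot\,)\subset\widehat T_{\SP^*}(\varepsilon/2)$ whenever $|x|>\varepsilon$, and intersecting with \eqref{eq:suppembed2} closes the claim. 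The inclusions \eqref{eq:phimul} and \eqref{eq:phixyl} will follow immediately from \eqref{eq:supports} combined with the hypotheses on $\phi,\mu$.

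For (b) I will differentiate under the integral (justified because $\SD_\BP^\bk\psi$ is locally $\plainH2$ by Sect.\ \ref{sect:reg} and the cut-offs are smooth with bounded derivatives) and then convert $\p_x^l=\p_{x_1}^l$ to cluster derivatives via the identity $\p_{x_1}^l=\SD_\SP^l-\sum_{j\in\SP^*}\p_{x_j}^l$, applied both to $\SD_\BP^\bk\psi(x,\hat\bx)$ and to $\phi(x,\hat\bx)$; here I use that $\Phi=\phi(x,\hat\bx)\tau(y,\hat\bx)$ and that $\tau$ does not depend on $x$. The $\SD_\SP^l$ acting on $\SD_\BP^\bk\psi$ yields $\SD_{\{\SP,\BP\}}^{(l,\bk)}\psi$, producing the first term of \eqref{eq:stepup}. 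For each $j\in\SP^*$ I integrate by parts in the coordinate $x_j$; the derivative then transfers onto $\overline{\SD_\BS^\bm\psi}$ (producing $\overline{\SD_{\{\SP^*,\BS\}}^{(l,\bm)}\psi}$ and hence the second term), onto $\phi$ (producing exactly the piece needed to cancel the $-\sum_{j\in\SP^*}\p_{x_j}^l\phi$ contribution coming from the same identity applied to $\phi$), or onto $\tau$. The $\tau$-contribution vanishes on $\CD_\varepsilon$: on $\supp\phi(x,\,\cdot\,)$ with $(x,y)\in\CD_\varepsilon$, Lemma \ref{lem:link} gives $|y-x_j|>\varepsilon/2$ for $j\in\SP^*$, so $\t(y-x_j)\equiv 1$ locally and $\p_{x_j}^l\tau=0$ there. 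What remains is $\int\SD_\BP^\bk\psi\,\overline{\SD_\BS^\bm\psi}\,(\SD_\SP^l\phi)\tau\,d\hat\bx$; since by the same argument $\phi\,\SD_{\SP^*}^l\tau=0$ on $\CD_\varepsilon$, Lemma \ref{lem:derivext} identifies $(\SD_\SP^l\phi)\tau=\SD_\SP^l\Phi=\sum_{s\in\SP,\,r\in\SP^{\rm c}}\Phi^{(l)}_{s,r}$ there, producing the third term. In the degenerate case $\SP^{\rm c}=\varnothing$, identity \eqref{eq:diffl} forces $\SD_\SP^l\phi\equiv 0$, the last sum drops out, and \eqref{eq:stepupp} follows.

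For (c), each term on the right-hand side of \eqref{eq:stepup} is a $\g$-object whose coupled pair satisfies \eqref{eq:suppembed1}--\eqref{eq:suppembed2} for the appropriate cluster sets (checked in part (a)), so Lemma \ref{lem:base} delivers square-integrability on $\CD_\varepsilon$. The main difficulty I anticipate lies in step (b): the orchestrated cancellation between the $\p_{x_j}^l\phi$-contributions arising from the cluster-derivative identity and the matching terms produced by integration by parts requires careful bookkeeping to confirm that only the clean cluster-derivative combinations $\SD_\SP^l\phi$ and $\SD_{\SP^*}^l\tau$ survive, the latter then being eliminated on $\CD_\varepsilon$ by the $\t$-constancy argument.
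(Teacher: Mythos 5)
Your proposal is correct, and its parts (a) and (c) coincide with the paper's argument: \eqref{eq:phimu} and \eqref{eq:phixy} follow from Lemma \ref{lem:supp}, \eqref{eq:suppmu} and \eqref{eq:suppphi1} (the paper quotes these in the packaged form \eqref{eq:phimux}, \eqref{eq:phihat}) intersected with \eqref{eq:suppembed1}--\eqref{eq:suppembed2}; \eqref{eq:phimul}, \eqref{eq:phixyl} follow from \eqref{eq:supports}; and square-integrability of every term is exactly Lemma \ref{lem:base} applied with the enlarged cluster sets, which is why \eqref{eq:phimu}--\eqref{eq:phixyl} are recorded. The genuine difference is in step (b). The paper obtains \eqref{eq:stepupp}--\eqref{eq:stepup} by the substitution $\hat\bx = \hat\bw + \hat\bz(x)$ with $z_j = x$ for $j\in\SP^*$ and $z_j = 0$ for $j\in\SP^{\rm c}$: after this change of variables $\p_x^l$ acting on each factor automatically becomes $\SD_\SP^l$ (and $\SD_{\SP^*}^l$ on the $y$-factor), so the three-term identity is just the product rule, with no integration by parts and no cancellations; one changes variables back and invokes Lemma \ref{lem:derivext}. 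You instead stay in the original variables, write $\p_x^l = \SD_\SP^l - \sum_{j\in\SP^*}\p_{x_j}^l$ (legitimate since $1\in\SP$ always) and integrate by parts in each $x_j$, $j\in\SP^*$; the $\p_{x_j}^l\phi$ contributions indeed cancel, leaving $(\SD_\SP^l\phi)\tau + \phi\,\SD_{\SP^*}^l\tau = \SD_\SP^l\Phi$, and your observation that $\phi\,\SD_{\SP^*}^l\tau = 0$ on $\CD_\varepsilon$ is precisely the mechanism inside Lemma \ref{lem:derivext} (via \eqref{eq:cut}), so the sum over $\Phi^{(l)}_{s, r}$ and the degenerate case $\SP^{\rm c}=\varnothing$ come out exactly as in \eqref{eq:stepup} and \eqref{eq:stepupp}. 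The bookkeeping you flagged as the main risk does close correctly. What the paper's substitution buys is that one never has to justify integration by parts over an unbounded region: note that the integrand is not compactly supported in $\hat\bx$ (the $\t$-factors equal $1$ at infinity), so in your route the absence of boundary terms rests on the fact that, by \eqref{eq:phimu}--\eqref{eq:phixy} and Sect.\ \ref{sect:reg}, the functions $\SD_\BP^{\bk}\psi$, $\SD_\BS^{\bm}\psi$ and their first derivatives are square-integrable on a neighbourhood of the support of the cut-offs, making the relevant products integrable in each $x_j$ together with their $x_j$-derivatives; this is standard but is an extra step the paper's change of variables avoids. Both routes differentiate under the integral sign (and implicitly use restriction of $\plainH2$-functions to the slices in $x$ and $y$) at the same level of rigor.
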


\begin{proof} 
According to \eqref{eq:phimux} and the assumption \eqref{eq:suppembed1}, we have 
\begin{align*}
\supp\phi\subset X_{\SP}\cap X_{\BP} = X_{\{\SP, \BP\}},\quad 
\supp\mu\subset X_{\SP^*}\cap X_{\BS} = X_{\{\SP^*, \BS\}}, 
\end{align*}
which coincides with \eqref{eq:phimu}. Moreover, by \eqref{eq:phihat} and \eqref{eq:suppembed2},  
\begin{align*}
\supp\phi(x,\ \cdot\ )\cap\supp\mu(y,\ \cdot\ )
\subset \widehat T_{\SP^*} \cap \widehat T_{\BP^*}\cap \widehat T_{\BS^*},
\end{align*}
which implies \eqref{eq:phixy} for all $(x, y)\in \CD_\varepsilon$. 
Thus by Lemma \ref{lem:base} the terms on the right-hand side of 
\eqref{eq:stepupp} and two first terms on the right-hand side of \eqref{eq:stepup} 
are square-integrable in $(x, y)\in \CD_\varepsilon$. 

Let $\SP^{\rm c}\not = \varnothing$. Then the inclusions 
\eqref{eq:phimul} and \eqref{eq:phixyl} are consequences of \eqref{eq:supports} 
and \eqref{eq:suppembed1}, \eqref{eq:suppembed2}. 
Again by Lemma \ref{lem:base}, the third term on the right-hand side of \eqref{eq:stepup} 
is square-integrable 
in $(x, y)\in \CD_\varepsilon$, as claimed. 

It remains to prove \eqref{eq:stepupp} and \eqref{eq:stepup}. 
We assume throughout that $(x, y)\in\CD_\varepsilon$. 
Introduce the vector $\hat\bz = (z_2, z_3, \dots, z_N)$ such that 
 \begin{align*}
 z_j = 
 \begin{cases}
 x, \quad j\in \SP^*,\\
 0, \quad j\in \SP^{\rm c},
 \end{cases}
\end{align*}
and make the following change of variables under the integral
\eqref{eq:dens}: $\hat\bx = \hat\bw + \hat\bz$, so  
\begin{align*}
\g_{\bk, \bm}(x, y; \BP, \BS; \Phi ) = \int_{\R^{3(N-1)}} 
\SD_{\BP}^{\bk} \psi(x, \hat\bw+\hat\bz) \overline{\SD_{\BS}^{\bm}\psi(y, \hat\bw+\hat\bz)}
\Phi(x, y, \hat\bw+\hat\bz) d\hat\bw.
\end{align*} 
Before differentiating this integral w.r.t. $x$, we make the following useful observation. 
For each  $l\in \mathbb N_0^3$ 
and for arbitrary functions $g=g(\bx)$ and 
$h = h(x, y, \hat\bx)$, due to the definition of $\hat\bz = \hat\bz(x)$, we 
have 
\begin{align*}
\p_x^l\big( g(x, \hat\bw+\hat\bz)\big) = &\ \big(\SD_\SP^l g\big)(x, \hat\bw+\hat\bz),\quad 
\p_x^l\big( g(y, \hat\bw+\hat\bz)\big)
= \big(\SD_{\SP^*}^l g\big)(y, \hat\bw+\hat\bz)
\\[0.2cm]
&\ \quad\textup{and}\quad \p_x^l\big( h(x, y, \hat\bw+\hat\bz)\big) 
= \big(\SD_{\SP}^l h\big)(x, y, \hat\bw+\hat\bz).
\end{align*}
Therefore, for $|l|=1$ we have 
\begin{align*}
\p_x^l\g_{\bk, \bm}(x, y; \BP, \BS; \Phi) = &\ \int_{\R^{3(N-1)}} 
\SD_{\SP}^l\SD_{\BP}^{\bk} \psi(x, \hat\bw+\hat\bz) 
\overline{\SD_{\BS}^{\bm}\psi(y, \hat\bw+\hat\bz)} 
\Phi(x, y, \hat\bw+\hat\bz) d\hat\bw\notag\\
 + &\ \int_{\R^{3(N-1)}} 
\SD_{\BP}^{\bk} \psi(x, \hat\bw+\hat\bz) 
\overline{\SD_{\SP^*}^l\SD_{\BS}^{\bm}\psi(y, \hat\bw+\hat\bz)}
\Phi(x, y, \hat\bw+\hat\bz) d\hat\bw\notag\\
+ &\ \int_{\R^{3(N-1)}} 
\SD_{\BP}^{\bk} \psi(x, \hat\bw+\hat\bz) 
\overline{\SD_{\BS}^{\bm}\psi(y, \hat\bw+\hat\bz)}\,
\SD_\SP^l\Phi(x, y, \hat\bw+\hat\bz)d\hat\bw.
\end{align*} 
Changing the variables back to $\hat\bx$, we rewrite the right-hand side as 
\begin{align}\label{eq:ghaty}
 &\ \int_{\R^{3(N-1)}} 
\SD_{\SP}^l\SD_{\BP}^{\bk} \psi(x, \hat\bx) \overline{\SD_{\BS}^{\bm}\psi(y, \hat\bx)} 
\,\Phi(x, y, \hat\bx) d\hat\bx\notag\\
 &\ \qquad\qquad + \int_{\R^{3(N-1)}} 
\SD_{\BP}^{\bk} \psi(x, \hat\bx) \overline{\SD_{ \SP^*}^l\SD_{\BS}^{\bm}\psi(y, \hat\bx)}
\,\Phi(x, y, \hat\bx) d\hat\bx\notag\\
&\ \qquad\qquad +  \int_{\R^{3(N-1)}} 
\SD_{\BP}^{\bk} \psi(x, \hat\bx) 
\overline{\SD_{\BS}^{\bm}\psi(y, \hat\bx)}
\, \SD_\SP^l\Phi(x, y, \hat\bx) d\hat\bx.
\end{align}
In the case $\SP^{\rm c} = \varnothing$, we have $\SD_\SP^l\Phi(x, y, \hat\bx) = 0$ by Lemma 
\ref{lem:derivext}, so the sum coincides with the right-hand side of \eqref{eq:stepupp}.  

If $\SP^{\rm c}\not = \varnothing$, then by \eqref{eq:derivext} the sum 
\eqref{eq:ghaty}
coincides with 
the right-hand side of \eqref{eq:stepup} again. This completes the proof. 
\end{proof} 
 
\begin{prop}\label{prop:deriv} 
Let $\Phi$ be an extended cut-off, and let $\BP, \BS$ be a pair of cluster sets 
such that \eqref{eq:suppembed1} and \eqref{eq:suppembed2} hold. 
 Then for all $m, k\in\mathbb N_0^3$ we have 
\begin{align}\label{eq:deriv}
\|\p_{x}^{k}\p_y^{m}\gamma_{\bk, \bm}
(\ \cdot\ &, \ \cdot\ ; \BP, \BS; \Phi)\|_{\plainL2(\CD_\varepsilon)}\notag\\
\le &\ A^{|k|+|m|} A_3^{|\bk|+|\bm|+2} 
(|\bk|+|\bm|+|k|+|m|+1)^{|\bk|+|\bm| + |k|+|m|},
\end{align}
where the constant $A_3$ is as in Lemma \ref{lem:base} and $A = 2A_3 + N^2$. 
\end{prop}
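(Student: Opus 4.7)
The plan is to argue by induction on $|k|+|m|$, with the statement quantified over all extended cut-offs $\Phi$ and all pairs of cluster sets $\BP, \BS$ satisfying \eqref{eq:suppembed1} and \eqref{eq:suppembed2}. The base case $|k|=|m|=0$ is exactly Lemma \ref{lem:base}, since $A^0=1$. For the inductive step, assume the estimate \eqref{eq:deriv} holds whenever the total order of $\p_x^{k}\p_y^{m}$ equals $n$, for every admissible data $(\BP,\BS,\Phi)$. Let $(k,m)$ satisfy $|k|+|m|=n+1$. Without loss of generality $|k|\ge 1$ (the case $|m|\ge 1$ is handled identically after invoking the symmetry \eqref{eq:sym}, which turns a $\p_y^l$ into a $\p_x^l$ on $\gamma_{\bm,\bk}(y,x;\BS,\BP;\tilde\Phi)$, so that Lemma \ref{lem:oneup} applies verbatim with $\SfS$ in place of $\SP$). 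Decompose $k=l+k'$ with $|l|=1$ and write $\p_x^{k}\p_y^{m}=\p_x^{k'}\p_y^{m}\,\p_x^{l}$.

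Now apply Lemma \ref{lem:oneup} to expand $\p_x^{l}\gamma_{\bk,\bm}(x,y;\BP,\BS;\Phi)$. This replaces the object by a sum of three kinds of terms: (i) $\gamma_{(l,\bk),\bm}(x,y;\{\SP,\BP\},\BS;\Phi)$, in which $|\bk|$ has grown by $1$; (ii) $\gamma_{\bk,(l,\bm)}(x,y;\BP,\{\SP^{*},\BS\};\Phi)$, in which $|\bm|$ has grown by $1$; and (iii) the sum over $s\in\SP,r\in\SP^{\rm c}$ of $\gamma_{\bk,\bm}(x,y;\BP,\BS;\Phi^{(l)}_{s,r})$, with at most $|\SP|\cdot|\SP^{\rm c}|\le N^{2}$ summands and unchanged cluster-derivative data. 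The decisive point, already packaged into Lemma \ref{lem:oneup} via \eqref{eq:phimu}, \eqref{eq:phixy}, \eqref{eq:phimul}, \eqref{eq:phixyl}, is that in each case the new data $(\{\SP,\BP\},\BS,\Phi)$, $(\BP,\{\SP^{*},\BS\},\Phi)$, respectively $(\BP,\BS,\Phi^{(l)}_{s,r})$, still satisfies the support hypotheses \eqref{eq:suppembed1} and \eqref{eq:suppembed2}; thus the inductive hypothesis applies to each of these terms with the remaining $n$ derivatives $\p_x^{k'}\p_y^{m}$.

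It remains to add up the resulting bounds. Setting $P=|\bk|+|\bm|+|k|+|m|+1$, the induction hypothesis controls each of the type-(i) and type-(ii) terms by
\[
A^{n}\,A_{3}^{|\bk|+|\bm|+3}\,P^{P-1},
\]
since $|\bk|$ or $|\bm|$ has increased by $1$ while $|k'|+|m|=n$; and it controls each of the at-most-$N^{2}$ type-(iii) terms by
\[
A^{n}\,A_{3}^{|\bk|+|\bm|+2}\,(P-1)^{P-2}\ \le\ A^{n}\,A_{3}^{|\bk|+|\bm|+2}\,P^{P-1}.
\]
Summing yields
\[
\|\p_x^{k}\p_y^{m}\gamma_{\bk,\bm}\|_{\plainL2(\CD_{\varepsilon})}
\le A^{n}\,A_{3}^{|\bk|+|\bm|+2}\,P^{P-1}\bigl(2A_{3}+N^{2}\bigr)
=A^{n+1}\,A_{3}^{|\bk|+|\bm|+2}\,P^{P-1}
\le A^{n+1}\,A_{3}^{|\bk|+|\bm|+2}\,P^{P},
\]
which is the desired inequality \eqref{eq:deriv} with the prescribed constant $A=2A_{3}+N^{2}$.

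The only real obstacle is propagating the support hypotheses through the differentiation step — i.e., checking that, after replacing $\BP$ by $\{\SP,\BP\}$, or $\BS$ by $\{\SP^{*},\BS\}$, or $\Phi$ by $\Phi^{(l)}_{s,r}$, the new data still fits into the inductive framework. This is the content of Lemma \ref{lem:oneup}, whose inclusions \eqref{eq:phimu}--\eqref{eq:phixyl} are exactly tailored for this purpose. The rest is arithmetic bookkeeping, and the specific choice $A=2A_{3}+N^{2}$ is what is needed to absorb simultaneously the two ``cluster-added'' terms (each contributing a factor $A_{3}$) and the at-most-$N^{2}$ ``cut-off-modified'' terms at each differentiation step.
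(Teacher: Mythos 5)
Your argument is correct and is essentially the paper's own proof: the base case is Lemma \ref{lem:base}, the inductive step is the expansion of Lemma \ref{lem:oneup} together with the support-propagation inclusions \eqref{eq:phimu}--\eqref{eq:phixyl} and the symmetry \eqref{eq:sym}, and the bookkeeping yielding $A=2A_3+N^2$ is identical; the paper merely organizes the induction differently (Lemma \ref{lem:indstep} increments $|k|$ only, and the proposition is then proved in three steps: all $k$ with $m=0$, then symmetry for $k=0$, then all $k,m$), whereas you run a single induction on $|k|+|m|$ with the symmetry invoked inside the step, which is an equivalent arrangement. Two small points to tidy: before applying Lemma \ref{lem:oneup} you should dispose of the case $\SP^*\cap\SfS\neq\varnothing$ via Lemma \ref{lem:empty} (then $\Phi$, hence $\gamma_{\bk,\bm}$, vanishes on the open set $\CD_\varepsilon$ and the bound is trivial), and the target exponent in \eqref{eq:deriv} is $P^{P-1}$ rather than $P^{P}$ --- your intermediate equality already gives exactly $A^{n+1}A_3^{|\bk|+|\bm|+2}P^{P-1}$, so the final weakening should simply be dropped.
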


The proof of this proposition is by induction. Lemma \ref{lem:base} provides the base step. 
Now we need to establish the induction step:

\begin{lem}\label{lem:indstep} 
Suppose that for every extended cut-off $\Phi$ and every pair $\BP=\{\SP_1, \SP_2, \dots, \SP_M\}$ and 
$\BS=\{\SfS_1, \SfS_2, \dots, \SfS_K\}$ of cluster sets such that 
$\Phi$ satisfies \eqref{eq:suppembed1} and \eqref{eq:suppembed2}, the bound \eqref{eq:deriv} holds 
for all 
multi-indices $\bk\in\mathbb N_0^{3M}, \bm\in \mathbb N_0^{3K}$, and all 
$k, m\in\mathbb N_0^3$, such that $|k|\le p$, $|m|\le n$ with 
some $p, n\in \mathbb N_0$. 
Then for such extended cut-offs $\Phi$ and cluster sets $\BP, \BS$ 
the bound \eqref{eq:deriv} holds  for all 
$k, m$, such that $|k|\le p+1$, $|m|\le n$.
\end{lem}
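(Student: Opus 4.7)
The plan is to complete the induction step by applying Lemma \ref{lem:oneup} to reduce the case $|k|=p+1$ to the inductive hypothesis at $|k|=p$. Given $k\in\mathbb N_0^3$ with $|k|=p+1$, I would write $k=k'+l$ with $|k'|=p$ and $|l|=1$, so that
\[
\p_x^k\p_y^m\g_{\bk,\bm}(\cdot,\cdot;\BP,\BS;\Phi)
= \p_x^{k'}\p_y^m\bigl(\p_x^l\g_{\bk,\bm}(\cdot,\cdot;\BP,\BS;\Phi)\bigr).
\]
By Lemma \ref{lem:empty} we may assume $\SP^*\subset\SfS^{\rm c}$, where $\SP=\SQ(\phi)$ and $\SfS=\SQ(\mu)$; otherwise the integrand vanishes on $\CD_\varepsilon$. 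Lemma \ref{lem:oneup} then expresses $\p_x^l\g_{\bk,\bm}$ as the sum of a single term $\g_{(l,\bk),\bm}(\cdot,\cdot;\{\SP,\BP\},\BS;\Phi)$, a single term $\g_{\bk,(l,\bm)}(\cdot,\cdot;\BP,\{\SP^*,\BS\};\Phi)$, and at most $|\SP|\,|\SP^{\rm c}|\le N^2$ terms of the form $\g_{\bk,\bm}(\cdot,\cdot;\BP,\BS;\Phi_{s,r}^{(l)})$, the last family being absent when $\SP^{\rm c}=\varnothing$.

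The next step is to verify that each summand satisfies the support conditions \eqref{eq:suppembed1} and \eqref{eq:suppembed2} with its corresponding cluster sets, so that the induction hypothesis applies. For the first summand the identity $\{\SP,\BP\}^*=\{\SP^*,\BP^*\}$ and the trivial inclusion $\widehat T_{\{\SP^*,\BS^*\}}\subset\widehat T_{\BS^*}$, combined with \eqref{eq:phimu} and \eqref{eq:phixy}, give the required embeddings; the second summand is handled symmetrically. For the third type the inclusions \eqref{eq:phimul} and \eqref{eq:phixyl} are exactly the required conditions with the unchanged cluster sets $\BP,\BS$. With this bookkeeping complete, I apply the induction hypothesis (valid since $|k'|=p$ and $|m|\le n$) to each summand. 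Setting $T:=|\bk|+|\bm|+|k|+|m|=|\bk|+|\bm|+p+n+1$, the first two summands each have multi-index length increased by one, so each contributes at most $A^{p+n}A_3^{|\bk|+|\bm|+3}(T+1)^T$; each of the at most $N^2$ third-type summands contributes at most $A^{p+n}A_3^{|\bk|+|\bm|+2}T^{T-1}$.

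Summing the estimates yields
\[
\|\p_x^k\p_y^m\g_{\bk,\bm}\|_{\plainL2(\CD_\varepsilon)}
\le A^{p+n}A_3^{|\bk|+|\bm|+2}\bigl(2A_3+N^2\,T^{T-1}(T+1)^{-T}\bigr)(T+1)^T.
\]
Since $T\ge 1$ and $T^{T-1}(T+1)^{-T}=(T+1)^{-1}\bigl(T/(T+1)\bigr)^{T-1}\le 1$, the bracketed factor is at most $2A_3+N^2=A$, giving precisely \eqref{eq:deriv} for $|k|=p+1$. The main obstacle is the combinatorial bookkeeping in the support verifications—carefully tracking the starring operation through the enlarged cluster sets and identifying which of the inclusions in Lemma \ref{lem:oneup} supply each required embedding—together with recognising that the constant $A=2A_3+N^2$ chosen in Proposition \ref{prop:deriv} is tuned exactly to absorb the $N^2$ extra summands produced by $\SD_\SP^l\Phi$; any smaller choice of $A$ would fail to close the induction.
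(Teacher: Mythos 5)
Your argument is correct and follows essentially the same route as the paper: split off one derivative $\p_x^l$, invoke Lemma \ref{lem:oneup} (using \eqref{eq:stepup} or \eqref{eq:stepupp} according to whether $\SP^{\rm c}$ is empty), check \eqref{eq:suppembed1}--\eqref{eq:suppembed2} for the new cluster-set pairs via \eqref{eq:phimu}, \eqref{eq:phixy}, \eqref{eq:phimul}, \eqref{eq:phixyl}, apply the induction hypothesis to each of the at most $2+N^2$ summands, and absorb the result into $A=2A_3+N^2$. The only blemish is the harmless notational conflation of $n$ with $|m|$ in defining $T$ (one should use $q=|m|\le n$ there, as the target bound involves $|m|$, not $n$), which does not affect the validity of the estimate.
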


\begin{proof}
 Let $|m|\le n$, $k=k_0+l$ with $l\in\mathbb N_0^3$, $|l|=1$, and  arbitrary $k_0\in\mathbb N_0^{3}$, such that 
$|k_0|=p$. 
In view of Lemma \ref{lem:empty}, we may assume that $\SP^*\subset\SfS^{\rm c}$, since 
otherwise the integrand in \eqref{eq:dens} equals zero. 
Thus we can apply Lemma \ref{lem:oneup}. Assume first that $\SP^{\rm c}\not = \varnothing$. 
It follows from \eqref{eq:stepup} that 
\begin{align}\label{eq:oneup}
\p_x^{k_0+l}\p_y^{m}\g_{\bk, \bm}&\ (x, y; \BP, \BS; \Phi)\notag\\ 
= &\ \p_x^{k_0}\p_y^{m}\g_{(l, \bk), \bm}\big(x, y; \{\SP, \BP\}, \BS; \Phi\big) 
+ \p_x^{k_0}\p_y^{m}\g_{\bk, (l, \bm)}\big(x, y; \BP, \{\SP^*, \BS\}; \Phi\big)\notag\\
&\ + \sum_{s\in \SP, r\in \SP^{\rm c}}\p_x^{k_0}
\p_y^{m}\g_{\bk, \bm}\big(x, y; \BP, \BS; \Phi^{(l)}_{s, r}\big).
\end{align}
According to \eqref{eq:phimu}, \eqref{eq:phixy}, 
the function $\Phi$ satisfies the 
conditions \eqref{eq:suppembed1} and \eqref{eq:suppembed2} for the pair of cluster sets 
$\{\SP, \BP\}, \BS$ and also for the pair $\BP, \{\SP^*, \BS\}$.   
Similarly, due to \eqref{eq:phimul} and \eqref{eq:phixyl}, 
the extended cut-off 
function $\Phi^{(l)}_{s, r}$ satisfies \eqref{eq:suppembed1}, \eqref{eq:suppembed2} 
for the pair of cluster sets $\BP, \BS$.
Since $|m|\le n$ and $|k_0|=p$, by
the assumptions of the lemma, each term on the right-hand side 
of \eqref{eq:oneup} satisfies the bound of the form \eqref{eq:deriv}. With the notation 
$q = |m|$ this gives the following estimate:
\begin{align*}
\|\p_x^{k_0+l}\p_y^{m}
\g_{\bk, \bm}(\ \cdot\ ,\ \cdot\ ; &\ \BP, \BS; \Phi)\|_{\plainL2(\CD_\varepsilon)}\\
\le  &\ 
2 A^{p+q}A_3^{|\bk|+|\bm|+3} (|\bk|+|\bm|+p+q+2)^{|\bk|+|\bm|+p+q +1} 
\notag\\
&\ + N^2 A^{p+q}A_3^{|\bk|+ |\bm|+2} (|\bk|+|\bm|+p+q+1)^{|\bk|+|\bm|+p+q} 
\\
\le &\ A^{p+q} A_3^{|\bk|+ |\bm|+2}
\big( 2 A_3 + N^2 \big)
(|\bk|+|\bm|+p+q+2)^{|\bk|+|\bm|+p+q +1}. 
\end{align*}
Setting $A = 2A_3+ N^2$, we get \eqref{eq:deriv} with $k=k_0+l$, as required. 

If $\SP^{\rm c} = \varnothing$, then the only difference in the proof is that 
instead of \eqref{eq:stepup} we use \eqref{eq:stepupp}. 
\end{proof}

\begin{proof}[Proof of Proposition \ref{prop:deriv}]

\underline{Step 1. Proof of \eqref{eq:deriv} for all $k$ and $m = 0$.} 
According to Lemma \ref{lem:base}, the required bound holds for $k=m=0$. 
Thus, using Lemma \ref{lem:indstep}, by induction we conclude that 
\eqref{eq:deriv} holds for all $k\in\mathbb N_0^3$ and $m=0$, 
as claimed.

\underline{Step 2. Proof of \eqref{eq:deriv} for $k=0$ and all $m$.} 
Using the symmetry property \eqref{eq:sym} and Step 1, we conclude that 
\eqref{eq:deriv} holds for all $m\in\mathbb N_0^3$ and $k = 0$.

\underline{Step 3.}
Using Step 2 and Lemma \ref{lem:indstep}, 
by induction we conclude that 
\eqref{eq:deriv} holds for all $k, m\in\mathbb N_0^3$, 
as required.
\end{proof}
 
\begin{proof}[Proof of Theorem \ref{thm:deriv}]
Recall that in the case $\bm = \mathbf 0, \bk = \mathbf 0$ 
we take $\BP=\BS = \varnothing$, so that the conditions 
\eqref{eq:suppembed1} and \eqref{eq:suppembed2} are automatically satisfied. Thus the required bound 
follows directly from \eqref{eq:deriv}.
\end{proof} 

\section{Proof of Theorem \ref{thm:deriv2}}\label{sect:proof}

First we build a suitable partition of unity, using the functions $\z$ and $\t$, defined in 
\eqref{eq:pu}. Recall the notation $\SR = \{1, 2, \dots, N\}$. 

Let $\Xi = \{(j, k)\in\SR\times\SR: j < k\}$. For each subset $\U\subset\Xi$ 
introduce the admissible cut-off 
\begin{align*}
\phi_{\U}(\bx) = \prod_{(j, k)\in \U} \z(x_j-x_k) \prod_{(j, k)\in\U^{\rm c}} \t(x_j-x_k).
\end{align*}
It is clear that 
\begin{align*}
\sum_{\U\subset{\Xi}} \phi_{\Upsilon}(\bx) 
= \prod_{(j, k)\in \Xi}\big(\z(x_j-x_k)+ \t(x_j-x_k)\big) 
= 1.
\end{align*}
For every cluster $\SfS\subset \SR^*$ define 
\begin{align*}
\tau_{\SfS}(x_1, \hat\bx) = \prod_{j\in \SfS} \z(x_1-x_j) \prod_{j\in(\SfS^{\rm c})^*} \t(x_1-x_j).
\end{align*}
It is clear that 
\begin{align*}
\sum_{\SfS\subset \SR^*}\tau_{\SfS}(x_1, \hat\bx) 
= \prod_{j\in\SR^*}\big(\z(x_1-x_j)+ \t(x_1-x_j)\big) = 1.  
\end{align*}
Define
\begin{align*}
\Phi_{\U, \SfS}(x, y, \hat\bx) = \phi_{\U}(x, \hat\bx) \tau_{\SfS}(y, \hat\bx),\quad 
(x, y)\in \R^3\times \R^3, \hat\bx\in \R^{3N-3},
\end{align*}
so that 
\begin{align*}
\sum\limits_{\U\subset\Xi,\ \SfS\subset\SR^*}\Phi_{\U, \SfS}(x, y, \hat\bx) = 1.
\end{align*}
 Note that each function $\Phi_{\U, \SfS}$ is an extended cut-off function, as 
defined in Subsect. \ref{subsect:extended}. 
Using the definition \eqref{eq:dens0}, 
the function \eqref{eq:den} can be represented as 
 \begin{align*}
 \g(x, y) = \sum_{\U\subset\Xi, \ \SfS\subset\SR^*} \g(x, y; \Phi_{\U, \SfS}).
 \end{align*}
 Since each function $\Phi_{\U, \SfS}$ is an extended cut-off, 
we can use Theorem \ref{thm:deriv} for each term, which 
 leads to \eqref{eq:der}, as required. 
 
 As explained in Sect. \ref{sect:main}, Theorem \ref{thm:deriv2} implies Theorem \ref{thm:main1}, 
 and hence Theorem \ref{thm:main}.

\section{Appendix: elementary combinatorial formulas}

Here we collect some elementary formulas.  

\subsection{Stirling's formula} 
It follows from Stirling's formula 
\begin{align*}
\lim_{p\to\infty} \frac{p! e^p}{p^{p+\frac{1}{2}}} = \sqrt{2\pi}
\end{align*}
that 
\begin{align}\label{eq:comb11}
C^{-1}(p+1)^{p+\frac{1}{2}} e^{-p} \le p! \le C (p+1)^{p+\frac{1}{2}} e^{-p},
\end{align}
for all $p = 0, 1, 2, \dots$. 
Therefore
\begin{align}\label{eq:comb1}
(p+1)^p\le Ce^p \ p!,\quad \forall p\in \mathbb N_0.
\end{align}
The bounds \eqref{eq:comb11} also imply 
for any $p = 0, 1, \dots$ and $q = 0, 1, \dots, p,$ that 
\begin{align}\label{eq:choosepow1}
{p\choose q} = \frac{p!}{q!(p-q)!}
\le L_3\frac{(p+1)^p}{(q+1)^q (p-q+1)^{p-q}},
\end{align}
with some constant $L_3>0$, independent of $p$ and $q$.

\subsection{Multiindices and factorials}
For $k = (k_1, k_2, \dots, k_d)\in \mathbb N_0^d$, we use the standard notation
\begin{align*}
k! = k_1! k_2! \cdots k_d!,\ \quad |k| = |k_1|+|k_2|+\dots+|k_d|.
\end{align*}
We say that $k\le s$ for $k, s\in\mathbb N_0^d$ if $k_j\le s_j$, $j = 1, 2, \dots, d$. 
In this case we define 
\begin{align*}
{k \choose s} = \frac{k!}{s!(k-s)!}.
\end{align*}
Note the useful identity 
\begin{align}\label{eq:choose}
\sum_{\substack{l\le k\\|l| = p}}{k\choose l} = {|k|\choose p},\ \forall p \le |k|.
\end{align}
It follows by comparing the coefficients of the term $t^p$ in the 
expansions of both sides of the 
equality
\begin{align*}
(1+t)^{k_1} (1+t)^{k_2}\cdots (1+t)^{k_d} = (1+t)^{|k|},\quad t\in\R.
\end{align*}
This simple argument is found in \cite[Proposition 2.1]{KKato}.

And to conclude, the multinomial formula  (see e.g. \cite[\S 24.1.2]{AbrSteg})
\begin{align*}
d^p = \bigg(\sum_{l=1}^d 1\bigg)^p = \sum\limits_{\substack{k\in\mathbb N_0^d\\ |k|=p}}\frac{|k|!}{k!}
\end{align*}
implies that 
\begin{align}\label{eq:spr}
|k|!\le d^{|k|} k!,\quad \forall k\in \mathbb N_0^d. 
\end{align} 

\vskip 0.5cm

\textbf{Acknowledgments.} The authors are grateful to S. Fournais, T. Hoffman-Ostenhof, 
M. Lewin and T. \O. S\o rensen
for stumulating discussions and advice.
The second author was supported by the EPSRC grant EP/P024793/1.

\end{document}